\definecolor{darkgreen}{RGB}{0,100,0}
\definecolor{firebrick}{RGB}{178,34,34}
\definecolor{Andre_blue}{RGB}{0,120,220}
\definecolor{Andre_gray}{RGB}{180,200,200}
\theoremstyle{plain}
\DeclareMathOperator{\ax}{{ax}}
\newcommand{\ud}{\mathrm{d}}
\newcommand\R{\mathbb{R}}
\newcommand{\M}{\mathcal{M}}
\newcommand{\Su}{{\mathcal{S}}}
\newcommand{\lfs}{\textrm{lfs}}
\newcommand{\Tan}{\operatorname{Tan}}
\newcommand{\Nor}{\operatorname{Nor}}
\newcommand{\rch}{\mathrm{rch}}
\newcommand\myatop[2]{\genfrac{}{}{0pt}{}{#1}{#2}}
\newcommand{\defunder}[1]{\underset{\text{def.}}{#1} \:}
\newcommand{\Sphere}{\mathbb{S}}
\newcommand{\Grass}{\operatorname{Gr}}
\title{Manifolds of positive reach, differentiability, tangent variation, and attaining the reach
}
\author{Andr{\'e} Lieutier}{No affiliation\\{[Aix-en-Provence, France]}}{andre.lieutier@gmail.com }{}{}
\author{Mathijs Wintraecken}{Inria centre Universit{\'e} C{\^o}te d'Azur\\{[Sophia Antipolis, France]}}{mathijs.wintraecken@inria.fr}{https://orcid.org/0000-0002-7472-2220}{Supported by the European Union's Horizon 2020 research and innovation programme under the Marie Sk{\l}odowska-Curie grant agreement No. 754411. The Austrian science fund (FWF) M-3073. The welcome package from IDEX of the Universit{\'e} C{\^o}te d'Azur.  The French National Research Agency (ANR) under grant StratMesh. 
}
\authorrunning{Andr{\'e} Lieutier 
and Mathijs Wintraecken} 
\keywords{Reach, Manifolds, Differentiability class, Lipschitz continuity, Tangent space}
\begin{document}

\maketitle

\begin{abstract}
This paper contains three main results. 

Firstly, we give an elementary proof of the following statement: Let $\M$ be a {(closed, in both the geometrical and topological sense of the word)} topological manifold embedded in $\R^d$. If $\M$ has positive reach, then $\M$ can locally be written as the graph of a $C^{1,1}$ from the tangent space to the normal space. Conversely if $\M$  can locally written as the graph of a $C^{1,1}$ function from the tangent space to the normal space, then $\M$ has positive reach.  
The result was hinted at by Federer when he introduced the reach, and proved by Lytchak. Lytchak's proof relies heavily CAT(k)-theory. The proof presented here uses only basic results on homology.

Secondly, we give optimal Lipschitz-constants for the derivative, in other words we give an optimal bound for the angle between tangent spaces in term of the distance between the points. This improves earlier results, that were either suboptimal or assumed that the manifold was $C^2$. 

Thirdly, we generalize a result by Aamari et al. which explains the how the reach is attained from the smooth setting to general sets of positive reach. 
\end{abstract}

\section{Introduction}


In \cite{Federer} Federer introduced the reach of a closed set $\Su \subset \mathbb{R}^d$ as the minimum of the distance from $\Su$ to the medial axis $\ax(\Su)$, i.e. the set of points in $\mathbb{R}^d$ for which the closest point in $\Su$ is not unique. Assumptions on the reach (and its local version the local feature size \cite{Amenta1999}) underpin the correctness of many algorithms in computational geometry and topology. 
In this paper we consider the differentiability class of manifolds of positive reach in particular of manifolds and give tight bounds on the angle between nearby tangent spaces for general manifolds of positive reach. For arbitrary sets of positive reach we give a geometrical explanation on how the reach in attained.


\paragraph*{Previous work} 
\subparagraph{Differentiability} 
Federer proved that the reach is stable under $C^{1,1}$-diffeomorphisms of the ambient space. Here we write $C^{1,1}$ to indicate $C^1$ maps whose derivative is Lipschitz, and by a $C^{1,1}$-diffeomorphism we mean that the both the diffeomorphism and its inverse are $C^{1,1}$.
Federer \cite[Remark 4.20]{Federer} furthermore mentioned (without going into much detail on one direction of the implication) that the graph of a function has positive reach if and only if it is $C^{1,1}$. Lytchak \cite{lytchak2004geometry, lytchak2005almost}
proved that a topologically embedded manifold without boundary has positive reach if and only if it is $C^{1,1}$, without a quantitative bound on the Lipschitz constant. This statement is quantified in our Theorem \ref{PosReachImpliesOptimalSizeNeighbourAsGraph} 
below.

Lytchak's proof requires significant background in CAT(k)-theory. 
Lytchak, motivated by bounds on intrinsic curvature, does not give any quantified bound on the extrinsic curvature, which play an important role in this paper.

Scholtes \cite{Scholtes2013} also gave a proof that a hypersurface has positive reach if and only if it is $C^{1,1}$, using different techniques from both the ones employed by Lytchak and in this paper. Scholtes' method uses the fact that the manifold has codimension one in an essential way.

Rataj and Zaj{\'i}\v{c}ek \cite{rataj2017structure} prove that if a Lipschitz manifold has positive reach, then it is $C^{1,1}$. They also use Lemma \ref{lemma:SemiConcaveAndSemiConvexIffC11}, however the assumption that the manifold is already Lipschitz simplifies the matter considerably, in particular they can skip the topological analysis. 
Moreover, they do not quantify their results in the way that we do and is essential for applications.

\subparagraph{Geometric bounds and tangent variation} Bounding the tangent angle variation on smooth manifolds is crucial for establishing constants in the context manifolds reconstruction and meshing \cite{Amenta1999, ChengDeyShewchuck, Dey, JDbook}. 
If we now focus on more recent results, some authors consider angle variation bounds for a given local feature size bound \cite{khoury2019approximation}.
Other authors 
give angle variation bounds based on the reach \cite{TangentVarJournal}.  
Assuming the surface to be smooth, more specifically $C^2$ smooth, 
allows  using standard differential geometry tools such as second fundamental forms, curvatures and Riemannian geometry, while,
as hinted at by Federer's  and Lytchak's results, the weaker positive reach assumption is sufficient for many properties to hold.

For example, in \cite{TangentVarJournal},
the angle variation bound given for $C^2$ manifolds are clearly optimal (the bound is attained on spheres) 
while only suboptimal bounds are derived for $C^{1,1}$, or positive reach, manifolds.

\subparagraph{Attaining the reach}  
In \cite[Theorem 3.4]{aamari2019estimating}, Aamari et al. make the following observation (for $C^2$ manifolds): 
The reach of a submanifold of Euclidean space can be expressed as the minimum of a global quantity, 
realized at so called ``bottlenecks'', and a local quantity, the inverse of the maximal extrinsic curvature. 
However, while this assumption is somewhat implicit in the proof, the analysis uses the context of Riemannian geometry, assuming the $C^2$ regularity of the manifold.

\paragraph*{Motivation and related work}
The assumption of positive reach is central to almost all triangulation algorithms, see for example \cite{ChengDeyShewchuck, Dey, JDbook}, as well as manifold learning, see for example \cite{FeffermanFitting, fefferman2019GeometricWhitneyProblem, Eddie2018stability, sober2019manifold}. 
Closing the gap  between $C^2$ and positive reach manifolds significantly extends the applications domains.
This extension is necessary to include for example the boundary of objects that have been designed by computer aided design software. These boundaries are generically tangent continuous and have bounded curvature (e.g. two planar faces connected through a cylindrical fillet surface), but don't have smooth curvature. 
%

The main result of this paper makes precise what it means for a topologically embedded submanifold of Euclidean space to have positive reach: 
The embedding is necessarily $C^{1,1}$-smooth (Theorem \ref{PosReachImpliesOptimalSizeNeighbourAsGraph}) 
and
we  give a tight bound on the (generalized) extrinsic curvature of the submanifold (Theorem \ref{ReachImpliesQuantifiedLipschitzDerivative}).
 
\subparagraph{Generalized tangent and normal spaces} Federer showed that although sets of positive reach are more general than smooth manifolds, they still posses generalized tangent and normal spaces, which are convex cones instead of just linear spaces.  
Roughly speaking, the normal cone  at a point $p$ of a set with  positive reach determines the topological ``link'' of point $p$ in the set, that is the local topology.

To prove the Theorem \ref{PosReachImpliesOptimalSizeNeighbourAsGraph} 
it is essential to show that the normal cone at any point of a positive reach, topologically embedded $n$-manifold is a $(d-n)$-dimensional vector space. 
In the proofs of \cite{lytchak2004geometry, lytchak2005almost} the relation between the normal cone at a point $p$ of a topologically embedded manifold with positive reach and the ``link'' of $p$ 
is established using CAT(k)-
theory.
In contrast, the short proof of  Lemma \ref{lemma:DefRetractOfBallMinusNormal} 
based on elementary properties of 
(the homology of) convex cones 
makes this correspondence more transparent in our opinion. 

We further extend the results of  \cite{lytchak2004geometry, lytchak2005almost} 
by giving an optimal bound on the angle variation of tangent spaces. This also extends the same bound obtained in the particular context of $C^2$ manifolds \cite{TangentVarJournal}.



\subparagraph{Contribution}

In this paper we give an elementary proof of the following {characterizations of manifolds of positive reach}. 
Moreover the statement is quantified with optimal constants.  
%
Here we write $B^\circ(p, r)$ for the open ball centred at $p$ with radius $r$.  
\begin{restatable}{theorem}{RPosReachImpliesOptimalSizeNeighbourAsGraph}
\label{PosReachImpliesOptimalSizeNeighbourAsGraph}
If $\M$ is a topologically embedded manifold with positive reach {and $p$ a point in $\M$,}
 then $\M \cap B^\circ(p, \sqrt{2}\: \rch(\M)) \cap \pi_{T_p\M}^{-1} (B^\circ(0,\rch(\M)))$ is
the graph of a locally $C^{1,1}$ function above the open domain $B^\circ(0,\rch(\M)) \subset T_p\M$.
\end{restatable}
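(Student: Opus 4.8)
The plan is to run three reductions and then glue. Write $R=\rch(\M)$ and $N:=\Nor(p,\M)$. The first reduction upgrades the generalized normal cone to an honest linear subspace: at every point $q\in\M$ the cone $\Nor(q,\M)$ is a $(d-n)$-dimensional vector space, so that $\Tan(q,\M)=T_q\M$ is an $n$-plane. This is the only place topology enters. Since $\M$ is a topological $n$-manifold, the local link of $q$ in $\M$ has the homology of $S^{n-1}$; feeding this into Lemma~\ref{lemma:DefRetractOfBallMinusNormal}, which compares up to homotopy a small punctured neighbourhood of $q$ in $\M$ with a sphere from which a convex cone has been removed, and using elementary facts about the homology of convex cones forces $\Nor(q,\M)$ to be non-degenerate, hence a full linear subspace. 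In particular $N$ is a genuine $(d-n)$-plane and ``the tangent space'' in the statement makes sense at $p$ and at neighbouring points of $\M$.

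The second reduction is the quantitative containment. For a unit $\xi\in\Nor(q,\M)$ the defining property of the reach makes $q$ the unique foot point of $q+t\xi$ for every $t<R$, so $B^\circ(q+R\xi,R)=\bigcup_{t<R}B^\circ(q+t\xi,t)$ is disjoint from $\M$; taking the union over all unit $\xi\in N$ (a full linear space, by the first reduction) shows that $\M$ avoids $\{x:\ |x-p|^2<2R\,\lvert\pi_{N}(x-p)\rvert\}$. Writing $q-p=u+v$ with $u\in T_p\M$, $v\in N$, this reads $(\lvert v\rvert-R)^2\ge R^2-\lvert u\rvert^2$; hence on the set $S:=\M\cap B^\circ(p,\sqrt2\,R)\cap\pi_{T_p\M}^{-1}(B^\circ(0,R))$ of the statement the outer alternative $\lvert v\rvert\ge R+\sqrt{R^2-\lvert u\rvert^2}$ is impossible, since it forces $\lvert q-p\rvert^2\ge 2R^2$, and therefore
\[
 \lvert v\rvert\ \le\ R-\sqrt{R^2-\lvert u\rvert^2}\ \le\ \frac{\lvert u\rvert^2}{R},
 \qquad
 \lvert q-p\rvert^2\ \le\ 2R\bigl(R-\sqrt{R^2-\lvert u\rvert^2}\bigr).
\]
So $S$ sits inside the lens bounded by $T_p\M$ and the radius-$R$ sphere tangent to it at $p$, and, since the right-hand side above stays bounded away from $2R^2$ over compact subsets of $B^\circ(0,R)$, the restriction $\pi_{T_p\M}|_{S}$ is proper over $B^\circ(0,R)$.

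The third reduction is the local $C^{1,1}$-graph structure, essentially the qualitative content of Federer's Remark~4.20 and of Lytchak's theorem, reproved here without CAT(k)-theory: at any $q\in S$, because $\Tan(q,\M)=T_q\M$ is a plane, the supporting-ball property at $q$ squeezes $\M$ near $q$ between two radius-$R$ balls tangent to $T_q\M$ at $q$ on opposite sides, so over a small disc $\M$ is the graph of a function into $\Tan(q,\M)^{\perp}$ which this two-sided spherical squeeze, via Lemma~\ref{lemma:SemiConcaveAndSemiConvexIffC11}, renders simultaneously semiconcave and semiconvex, hence $C^{1,1}$ with constant controlled by $1/R$; this graph is expressible over $T_p\M$ wherever $T_q\M$ is not orthogonal to $T_p\M$, in particular near $p$ itself, where $\Tan(p,\M)=T_p\M$. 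I would then globalize by maximality: let $U\subseteq B^\circ(0,R)$ be the largest open set over which $S$ is the graph of a locally $C^{1,1}$ map $f\colon U\to N$. It contains $0$ and is open. If $U\neq B^\circ(0,R)$, pick $u_\ast\in\partial U\cap B^\circ(0,R)$ and $u_k\in U$ with $u_k\to u_\ast$; by properness the graph points $(u_k,f(u_k))$ subconverge to some $q_\ast\in\M$ with $\pi_{T_p\M}(q_\ast)=u_\ast$, and the strict estimate $\lvert q_\ast-p\rvert^2\le 2R(R-\sqrt{R^2-\lvert u_\ast\rvert^2})<2R^2$ places $q_\ast$ in $S$; the local graph at $q_\ast$ then extends $f$ past $u_\ast$, contradicting maximality. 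Hence $U=B^\circ(0,R)$ and $S$ is precisely the graph of a locally $C^{1,1}$ function over $B^\circ(0,R)$. That $f$ is only locally, not uniformly, $C^{1,1}$ is forced, and this also pins the two radii as optimal: for $\M$ the radius-$R$ sphere tangent to $T_p\M$ at $p$, $f$ acquires vertical tangents as $\lvert u\rvert\to R$ and the corresponding points of $\M$ lie at distance exactly $\sqrt2\,R$ from $p$.

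I expect the main obstacle to be the globalization in the third paragraph, i.e.\ showing that over all of $B^\circ(0,R)$ the projection really does see $S$ as a single-valued graph, with no folding and no premature termination. The delicate point is to exclude, for $q\in S$ approaching $\partial B^\circ(0,R)$, the possibility that $T_q\M$ becomes orthogonal to $T_p\M$, which would make $\pi_{T_p\M}$ fold: neither the supporting-ball estimate at $p$ nor the one at $q$ rules this out by itself once $q$ lies strictly inside the lens, so one has to argue more carefully, relying on the properness that the lens estimate supplies and on the strictness of the bounds $\sqrt2\,R$ and $R$, which is exactly what keeps the limiting foot point $q_\ast$ inside $S$ rather than on its topological boundary. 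Establishing the linearity of the normal cone (first paragraph) is conceptually the deepest input, but it is packaged in Lemma~\ref{lemma:DefRetractOfBallMinusNormal}.
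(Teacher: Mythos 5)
Your first reduction (tangent spaces are honest $n$-planes, via the homology of convex cones through Lemma~\ref{lemma:DefRetractOfBallMinusNormal}) matches the paper, and your lens estimate is correct and does give properness of $\pi_{T_p\M}\rvert_S$ over $B^\circ(0,R)$. But the globalization step has exactly the gap you flag in your last paragraph, and your proposed fix (properness plus strictness of the radii) does not close it. Properness keeps the limiting point $q_\ast$ inside $S$, which is a \emph{position} constraint; it says nothing about where $T_{q_\ast}\M$ points. The extension step of your maximality argument needs $T_{q_\ast}\M$ to be non-orthogonal to $T_p\M$, and the lens estimate alone cannot give this: a set can sit inside the lens while its tangent plane rotates arbitrarily. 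Nor does the local $C^{1,1}$ bound $1/R$ suffice on its own, because it bounds the \emph{rate} of tangent variation per unit geodesic length, and you have no a priori bound on the geodesic length from $p$ to $q_\ast$ from the lens estimate.

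The paper supplies precisely the missing ingredient by combining two quantitative theorems. Theorem~\ref{theorem:ReachEquivalentMetricDistorsion} converts the Euclidean constraint $\lvert q-p\rvert<\sqrt2\,R$ into a geodesic bound $d_\M(p,q)\leq 2R\arcsin\bigl(\tfrac{\lvert q-p\rvert}{2R}\bigr)<\tfrac{\pi}{2}R$, and Theorem~\ref{ReachImpliesQuantifiedLipschitzDerivative} (equivalently Lemma~\ref{lemma:MIsC11WithOptimalBounds}) integrates the $1/R$ curvature bound along that geodesic to give
$\angle\,T_p\M,\,T_q\M\leq d_\M(p,q)/R<\pi/2$ for every $q\in S$. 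With this strict angle bound in hand, the projection is a submersion (and local diffeomorphism) over all of $S$, and the paper then finishes by a covering-number argument rather than your maximality argument: the covering number of $\pi_{T_p\M}\rvert_S$ over $B^\circ(0,R)$ is locally constant, cannot change because (by Pythagoras and Lemma~\ref{Lem:distanceToTangentCharacter}) $\partial\overline{S}$ does not project into $B^\circ(0,R)$, and equals $1$ near the origin by Lemma~\ref{lemma:embeddedManifoldPositiveReachThenC1Embedded}. Either globalization could work once the angle bound is available, but without it yours stalls at the extension step. So the concrete thing to add is the chain $\lvert q-p\rvert<\sqrt2 R\ \Rightarrow\ d_\M(p,q)<\tfrac{\pi}{2}R\ \Rightarrow\ \angle T_p\M,T_q\M<\tfrac{\pi}{2}$, which is not deducible from the lens estimate alone.
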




{ We also generalize the notations of local and global reach from $C^2$ submanifolds \cite{aamari2019estimating}  to arbitrary compact sets (of positive reach):}
\begin{restatable}[Local and global reach]{definition}{LocalGolobalReach}
\label{def:LocalGolobalReach}
For a closed set $\Su \subset \R^d$, we define the local $\rch_{loc.}(\Su)$ and global $\rch_{glob.}(\Su)$ reach 
as:
\begin{align}
&\rch_{glob.}(\Su)  \defunder{=} \min \left \{ \frac{|a-b|}{2} \, \Big | \, {\exists} \,  a,b \in \Su, B^\circ \left( \frac{a+b}{2} , \frac{|a-b|}{2} \right) \cap \Su = \emptyset \right \}  \\
\nonumber \\
&\rch_{loc.}(p, \Su) \defunder{=} \lim_{\substack{\rho \rightarrow 0 \\ \rho >0}}  \, \rch\big( \Su \cap B(p, \rho)  \big)  \in [0, +\infty] 
\label{eq:DefLocReach}
\\
&\rch_{loc.}(\Su ) \defunder{=} \inf_{p \in \Su} \, \rch_{loc.}(p, \Su) {\in [0, +\infty]   }
\label{eq:DefLocReach2}
\end{align}
\end{restatable} 
\begin{remark}\label{remark:ReachLocalWellDefined}
If $\rch(\Su)>0$, then $\rho \mapsto \rch\big( \Su \cap B(p, \rho)  \big) $ is non-increasing (this follows from Theorem \ref{theorem:ReachEquivalentMetricDistorsion} because $\Su \cap B(p, \rho)$ is geodesically convex \cite[Corollary 1]{TangentVarJournal}  for all $\rho< \rch(\Su)$ so that the set of bounds that need to be satisfied in \eqref{eq:MetricDefREach} is larger, c.f. \cite[Lemma 5]{attali2015geometry}) as soon as $\rho< \rch(\Su)$, so that the limit 
in \eqref{eq:DefLocReach} exists. Moreover, since for any $p$ one has  $\rch_{loc.}(p,\Su ) \geq \rch(\Su )$, we get that:
\begin{equation}\label{leq:LocalReachLargerThanReach}
\rch_{loc.}(\Su ) \geq \rch(\Su ).
\end{equation}
\end{remark}

To be able to give the bounds on the tangent variation/extrinsic curvature we need to introduce some notation. 

We write $\Grass(n, \R^d)$ for the Grassmannian, that is the space of $n$-linear subspaces of $\R^d$ with metric given by the sine of the maximum angle. We recall that the maximum angle is given by 
\begin{equation}\label{equation:DefinitionAngleVectorSpaces}
\angle A, B \defunder{=} \max_{a\in A\setminus \{0\}} \min_{b \in B \setminus \{0\}}
 \angle a,b = \max_{b \in B\setminus \{0\}} \min_{a \in A \setminus \{0\}} \angle a,b .
\end{equation} 

 We use this maximal angle as our metric. It is not that difficult to establish that this is indeed a metric, see \cite{wedin2006angles} or \cite{Qiu} for a far reaching generalization.\footnote{The sine of this angle also defines a metric on the Grassmannian as can be found in e.g. \cite[Chapter IV, Section 2.1]{kato2013perturbation} or \cite[Section 34]{akhiezer2013theory}, see Appendix  \ref{app:Grassmann} for a more extensive description.
It seems that in a number of fields the sine of the angles is a more common metric, but it is suboptimal in this context. 
We further stress that neither of these metrics does not equal the standard Riemannian metric on the Grassmannian, which is given by the sum of the squares of the principal angles, see e.g. \cite{wong1967differential}.}


\begin{restatable}{theorem}{ReachImpliesQuantifiedLipschitzDerivative}
\label{ReachImpliesQuantifiedLipschitzDerivative}
If $\M \subset \R^d$ is a topologically embedded $n$-manifold with reach at least $R>0$, then tangent spaces and normal spaces 
to $\M$  are linear spaces at any point  $p \in \M \setminus  \partial \M$ and the maps
 $\Tan: \M  
\rightarrow \Grass(n, \R^d)$ and 
$\Nor: \M 
\rightarrow   \Grass(d-n, \R^d)$ 
 are 
Lipschitz. 
 
Moreover, 
the smallest Lipschitz constant satisfied by $p \mapsto \Tan(p,\Su)$ is $\frac{1}{\rch_{loc.}(\M)}$:
\begin{equation}\label{eq:TightTgtVariation}
\sup_{ \myatop{p,q\in \M}{p\neq q} } \frac{\angle \Tan(p,\M),  \Tan(q,\M)}{d_{\M}(p,q)} = \frac{1}{\rch_{loc.}(\M)},
\end{equation}
where we say that $1/\infty =0$ {and that $d_{\M}(p,q)= \infty$ when $p$ and $q$ are not connected by any path in $\M$}.
\end{restatable}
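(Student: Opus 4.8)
The statement breaks into three parts: that $\Tan(p,\M)$ and $\Nor(p,\M)$ are linear subspaces at interior points, that $\Tan$ and $\Nor$ are Lipschitz for $d_{\M}$, and that the best Lipschitz constant for $\Tan$ equals $1/\rch_{loc.}(\M)$; this last part splits into a bound from above ($\Tan$ is $\tfrac{1}{\rch_{loc.}(\M)}$-Lipschitz) and a matching lower bound. The linearity is immediate from Theorem~\ref{PosReachImpliesOptimalSizeNeighbourAsGraph}: near any $p\in\M\setminus\partial\M$ the set $\M$ is the graph of a $C^{1,1}$ map over $T_p\M$, hence a $C^1$ submanifold there, so $\Tan(p,\M)$ is its ordinary $n$-dimensional tangent space and $\Nor(p,\M)=\Tan(p,\M)^{\perp}$; and the normal-cone computation underlying Lemma~\ref{lemma:DefRetractOfBallMinusNormal} is exactly what makes the graph representation available.

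For the bound from above, fix $p$ and $\varepsilon>0$. Using Remark~\ref{remark:ReachLocalWellDefined}, choose $\rho<\rch(\M)$ so small that $\rch(\M\cap B^\circ(p,\rho))\ge\rch_{loc.}(\M)-\varepsilon=:R_\varepsilon$; that remark also makes $\M\cap B^\circ(p,\rho)$ geodesically convex in $\M$, so inside it $d_{\M}$ coincides with the intrinsic distance of the piece and shortest paths stay inside. For $x$ in a smaller ball, pass to coordinates with $\Tan(x,\M)=\R^n\times\{0\}$ and write $\M$ locally as $\{(z,F(z))\}$ with $F(0)=0$, $DF(0)=0$. Federer's second-order (paraboloid) estimate for sets of reach $\ge R_\varepsilon$, applied at every interior base point, shows that on a small enough domain $F$ is $\tfrac1{R_\varepsilon}$-semiconcave and $\tfrac1{R_\varepsilon}$-semiconvex, so by Lemma~\ref{lemma:SemiConcaveAndSemiConvexIffC11} its derivative $DF$ is $\tfrac1{R_\varepsilon}$-Lipschitz; alternatively, transport a unit vector of $\Tan(x,\M)$ along an $\M$-geodesic, using that $\M$-geodesics have curvature $\le 1/R_\varepsilon$. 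Since the largest principal angle between $\R^n\times\{0\}$ and $\mathrm{graph}(DF(z))$ is $\arctan\|DF(z)\|_{\mathrm{op}}$, for $y=(z,F(z))$ near $x$ we get
\[
\angle\Tan(x,\M),\Tan(y,\M)\le\|DF(z)\|_{\mathrm{op}}=\|DF(z)-DF(0)\|_{\mathrm{op}}\le\tfrac1{R_\varepsilon}\,\abs{z}\le\tfrac1{R_\varepsilon}\,d_{\M}(x,y).
\]
To globalise, for $p\ne q$ in one path component and any path $\gamma$ from $p$ to $q$ of length $\ell$ close to $d_{\M}(p,q)$, cover $\gamma$ by finitely many such neighbourhoods, subdivide $\gamma$ into $p=x_0,\dots,x_N=q$ with consecutive points in a common neighbourhood, and add up with the triangle inequality for the Grassmannian metric to get $\angle\Tan(p,\M),\Tan(q,\M)\le\tfrac1{R_\varepsilon}\ell$; let $\ell\to d_{\M}(p,q)$ and $\varepsilon\to 0$. (Pairs in different components contribute nothing, since then $d_{\M}=\infty$; and $\rch_{loc.}(\M)=\infty$ forces $DF\equiv 0$ near every point, so each component is affine and both sides of \eqref{eq:TightTgtVariation} vanish.) The identity $\angle A^{\perp},B^{\perp}=\angle A,B$ transfers the bound to $\Nor$.

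For the matching lower bound it suffices, for each $\varepsilon>0$, to exhibit $x\ne y$ with $\angle\Tan(x,\M),\Tan(y,\M)\,/\,d_{\M}(x,y)\ge\tfrac1{\rch_{loc.}(\M)}-\varepsilon$ (nothing to prove when $\rch_{loc.}(\M)=\infty$). Choose $p$ with $\rch_{loc.}(p,\M)$ within $\varepsilon'$ of $\rch_{loc.}(\M)$, so that $\rch(\M\cap B^\circ(p,\rho))<\rch_{loc.}(\M)+\varepsilon'=:r$ for all small $\rho$. A tiny neighbourhood cannot have small reach because of a bottleneck between far-apart points, so the reach $r$ must be realised, up to $o(1)$, through curvature: there is an interior point $a$ near $p$, a unit normal $\nu$ at $a$, and a ball $B^\circ(a+r\nu,r)$ meeting $\M$ only in the osculating sense, which forces a unit $v\in\Tan(a,\M)$ with $\norm{\mathrm{II}_a(v,v)}$ within $\varepsilon''$ of $\tfrac1r$, where $\mathrm{II}_a$ is the second fundamental form (defined on a full-measure set by $C^{1,1}$-smoothness, and bounded there by $1/\rch_{loc.}(\M)$ by the first half of the proof). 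Running the unit-speed $\M$-geodesic $\sigma$ with $\sigma(0)=a$, $\dot\sigma(0)=v$, one has $\dot\sigma(t)=v+t\,\mathrm{II}_a(v,v)+o(t)\in\Tan(\sigma(t),\M)$ while $d_{\M}(a,\sigma(t))=t$ for small $t>0$, so $\angle\Tan(a,\M),\Tan(\sigma(t),\M)\ge\min_{u\in\Tan(a,\M)}\angle(\dot\sigma(t),u)=t\norm{\mathrm{II}_a(v,v)}+o(t)$; dividing and letting $t\to 0$ gives a ratio at least $\norm{\mathrm{II}_a(v,v)}\ge\tfrac1r-\varepsilon''$. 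Since $r,\varepsilon',\varepsilon''$ are arbitrarily close to $\rch_{loc.}(\M),0,0$, this proves ``$\ge$'' in \eqref{eq:TightTgtVariation}. (A tidy way to organise both halves is to show that the optimal constant equals $\kappa:=\sup\{\norm{\mathrm{II}_a}:\mathrm{II}_a\text{ exists}\}$ and that $\rch_{loc.}(\M)=1/\kappa$, each equality being routine in one direction.)

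\textbf{Main obstacle.} The bound from above is essentially bookkeeping around Theorem~\ref{PosReachImpliesOptimalSizeNeighbourAsGraph}, Federer's paraboloid estimate and Lemma~\ref{lemma:SemiConcaveAndSemiConvexIffC11}. The delicate part is the lower bound, i.e.\ converting ``the reach of a shrinking neighbourhood of $p$ is close to $\rch_{loc.}(\M)$'' into ``$\Tan$ varies at rate close to $1/\rch_{loc.}(\M)$ near $p$''. This needs, first, that in a small enough neighbourhood the reach is governed by curvature and not by a global bottleneck --- precisely the local/global reach decomposition of Definition~\ref{def:LocalGolobalReach} (our generalisation of Aamari et al.), so the clean route is to establish that decomposition first; and second, care because the second fundamental form exists only almost everywhere, so either the extremal point and direction must be located inside the full-measure set where $\mathrm{II}$ is defined, or the whole argument must be recast in terms of geodesic chords of $\M$ and the metric-distortion characterisation of the reach (Theorem~\ref{theorem:ReachEquivalentMetricDistorsion}) so as to bypass $\mathrm{II}$ altogether.
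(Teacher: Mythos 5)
Your treatment of linearity and of the upper bound matches the paper's route in substance: normal cones are vector spaces by the homology argument, and $\Tan$ is $\tfrac{1}{\rch_{loc.}(\M)}$-Lipschitz by local semiconvexity/semiconcavity (Lemma~\ref{lemma:SemiConcaveAndSemiConvexIffC11}) plus a telescoping of local estimates along a geodesic; the paper packages this as Lemmas~\ref{lemma:VDotphiIsSemiConvex}, \ref{lemma:phiIsC11}, \ref{lemma:FromOperatorNormToAngles} and \ref{lemma:MIsC11WithOptimalBounds}. One repair to make before anything else: linearity must come from Lemma~\ref{lemma:embeddedManifoldPositiveReachThenTangentSpaceIsVectorial} and not from Theorem~\ref{PosReachImpliesOptimalSizeNeighbourAsGraph}, since the latter is proved \emph{after} and \emph{using} the present theorem (via Lemma~\ref{lemma:MIsC11WithOptimalBounds}); citing it here would be circular. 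Also, your step ``Federer's paraboloid estimate makes $F$ $\tfrac{1}{R_\varepsilon}$-semiconcave and -semiconvex'' hides the angle corrections (the $\cos^3\theta$ bookkeeping in the paper, Lemmas~\ref{lemma:SmallLemma} and \ref{lemma:ProjectionOfTangentConeBoundaryEqualProjOnTgtSpace}): the paraboloid bound is phrased relative to $\Tan(q,\M)$ at the moving base point, not relative to the fixed chart, so there is genuine work to translate it into a statement about $\langle v,\phi\rangle$. These are details, and they match the paper's shape.

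The genuine gap is your lower bound. You write that $\rch(\M\cap B(p,\rho))<r$ ``must be realised, up to $o(1)$, through curvature,'' producing a point $a$ and unit $v\in\Tan(a,\M)$ with $\norm{\mathrm{II}_a(v,v)}$ within $\varepsilon''$ of $1/r$. That assertion is exactly the crux, and it is not a consequence of anything you have at this stage. Two separate problems: (i) $\mathrm{II}$ exists only almost everywhere on a $C^{1,1}$ manifold, and you never locate an extremal point $a$ inside that full-measure set; a large essential supremum of $\mathrm{II}$ on a ball does not hand you a definite $(a,v)$. (ii) Even granting pointwise $\mathrm{II}$, the passage ``small reach of $\M\cap B(p,\rho)$ $\Rightarrow$ osculating ball of radius $\approx r$ $\Rightarrow$ a direction with $\norm{\mathrm{II}}\approx 1/r$'' is the hard implication; invoking the local/global decomposition (Theorem~\ref{theorem:ReachEquivalentMinLocalGlobal}) does not by itself produce a curvature statement. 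The paper avoids $\mathrm{II}$ altogether. The second half of Lemma~\ref{lemma:VDotphiIsSemiConvex}, using the characterization \eqref{TangentSpaceLine_2} of the reach via distances to tangent planes together with Lemma~\ref{lemma:ProjectionOfTangentConeBoundaryEqualProjOnTgtSpace}, shows directly that if $r>\rch_{loc.}(p,\M)$ then $x\mapsto\tfrac12|x|^2-r\langle v,\phi(x)\rangle$ is \emph{not} convex on some small ball for a suitable unit normal $v$. The proof of Theorem~\ref{ReachImpliesQuantifiedLipschitzDerivative} then extracts from this non-convexity two points $y_1,y_2$ at which the directional derivative of $\langle v,\phi\rangle$ along the chord decreases by at least $(1/r)|y_2-y_1|$ --- a purely finite-difference statement, requiring no pointwise second derivative --- and converts it to an angle bound by elementary trigonometry and the length comparison $d_{\M}(q_1,q_2)\le \operatorname{length}\Phi([y_1,y_2])$. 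Your closing alternative (``organise both halves via $\kappa:=\sup\norm{\mathrm{II}_a}$ and prove $\rch_{loc.}(\M)=1/\kappa$'') is not a simplification: the identity $\rch_{loc.}(\M)=1/\kappa$ is precisely the non-routine content, so this relabels the difficulty rather than resolving it. Recasting the lower bound in terms of \eqref{TangentSpaceLine_2} and finite differences, as you tentatively propose at the end, is not an optional variant here --- at $C^{1,1}$ regularity it is the only thing that works.
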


\begin{remark} 
If $\M \subset \R^d$ is a manifold with boundary then the results of Theorem \ref{ReachImpliesQuantifiedLipschitzDerivative} still hold as long as there is a geodesic between $p$ and $q$ that lies completely in the interior of $\M$. 
\end{remark} 

Next Theorem is similar to \cite[Theorem 3.4]{aamari2019estimating}  stated in the particular situation where 
$\Su$ is a $C^2$ embedded manifold, in which case $\rch_{loc.}(\Su)$ is also the supremum of extrinsic curvature.

\begin{restatable}{theorem}{ReachEquivalentMinLocalGlobal}\label{theorem:ReachEquivalentMinLocalGlobal}
If $\Su$ is a  compact subset of $\R^d$, then:
\[
\rch( \Su )= \min ( \rch_{glob.}(\Su), \rch_{loc.}(\Su) )
\]
\end{restatable}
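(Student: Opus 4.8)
We prove the two inequalities $\rch(\Su)\le\min(\rch_{glob.}(\Su),\rch_{loc.}(\Su))$ and the reverse one separately. The plan for the first is straightforward: $\rch(\Su)\le\rch_{loc.}(\Su)$ is \eqref{leq:LocalReachLargerThanReach}, and for $\rch(\Su)\le\rch_{glob.}(\Su)$ one only has to observe that if $a\neq b$ lie in $\Su$ with $B^\circ(\frac{a+b}{2},\frac{|a-b|}{2})\cap\Su=\emptyset$, then $\frac{a+b}{2}$ has $a$ and $b$ as two distinct nearest points, at distance $\frac{|a-b|}{2}$, hence lies on $\ax(\Su)$, so $\rch(\Su)\le\frac{|a-b|}{2}$; taking the infimum over such pairs finishes it (if there are none, $\rch_{glob.}(\Su)=+\infty$). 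For the reverse inequality I may assume $R:=\rch(\Su)<\infty$, the case $R=\infty$ being trivial. By definition of the reach there is a sequence $x_k\in\ax(\Su)$ with $d(x_k,\Su)\to R$, each with two distinct nearest points $a_k\neq b_k$ in $\Su$ at the common distance $|x_k-a_k|=|x_k-b_k|=d(x_k,\Su)$; since $\Su$ is compact the $x_k$ stay bounded, so after passing to a subsequence $x_k\to x$, $a_k\to a$, $b_k\to b$ with $a,b\in\Su$ and $|x-a|=|x-b|=d(x,\Su)=R$, i.e.\ $x$ attains the reach. The argument then splits according to whether the two sequences of nearest points collapse.

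If $a=b=:c$, then for every $\rho>0$ we have $a_k,b_k\in B^\circ(c,\rho)$ for all large $k$, and since $a_k\neq b_k$ are nearest points of $\Su$ to $x_k$ they are also nearest points of the compact set $\Su\cap B(c,\rho)$ to $x_k$, at the same distance; hence $x_k\in\ax(\Su\cap B(c,\rho))$, so $\rch(\Su\cap B(c,\rho))\le d(x_k,\Su\cap B(c,\rho))\le|x_k-a_k|$. Letting $k\to\infty$ gives $\rch(\Su\cap B(c,\rho))\le R$ for every $\rho>0$, so $\rch_{loc.}(c,\Su)\le R$ and therefore $\rch_{loc.}(\Su)\le R$.

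If $a\neq b$, the goal is to produce a bottleneck of width at most $2R$ (note $|a-b|\le|a-x|+|x-b|=2R$ automatically), which gives $\rch_{glob.}(\Su)\le R$. Writing $m=\frac{a+b}{2}$ and letting $x^{-}=2m-x$ be the reflection of $x$ through $m$, one has $x^{-}-a=b-x$ and $x^{-}-b=a-x$, so $a$ and $b$ are also at distance $R$ from $x^{-}$, and the segment $[x,x^{-}]$ lies in the bisector hyperplane of $[a,b]$, along which the (equal) distances to $a$ and $b$ equal $R$ at the two endpoints and are strictly smaller in between. If $B^\circ(m,\frac{|a-b|}{2})\cap\Su=\emptyset$ then $(a,b)$ is a bottleneck and we are done. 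Otherwise a short computation shows that any $s\in\Su\cap B^\circ(m,\frac{|a-b|}{2})$ satisfies $|x^{-}-s|<R$, so $d(x^{-},\Su)<R$; one then either slides along $[x,x^{-}]$ to reach a point of $\ax(\Su)$ strictly interior to that segment, hence at distance $<R$ from $\Su$ — contradicting $R=\rch(\Su)$ — or, when this slide is blocked, uses that the obstructing point must lie in the ``far cap'' of $B^\circ(m,\frac{|a-b|}{2})$ (so that $|x-s|$ is forced close to $R$) to extract either a genuine bottleneck of width at most $2R$, or a nearest point of $x$ that is non-isolated in $\Su$ and around which the restriction argument of the case $a=b$ again yields $\rch_{loc.}(\Su)\le R$. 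In every case $\min(\rch_{glob.}(\Su),\rch_{loc.}(\Su))\le R=\rch(\Su)$, which together with the first inequality proves the theorem.

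The step I expect to be the main obstacle is precisely the case $a\neq b$. The naive strategy — slide the empty ball $B^\circ(x,R)$ towards the midpoint of two of its contact points, hoping to meet a medial-axis point closer than $R$ — can genuinely be blocked by $\Su$ re-entering the shrinking ball, and the delicate work is to show that any such obstruction is necessarily itself a bottleneck of width at most $2R$, or produces a point whose local reach is at most $R$, rather than merely another, equally unresolved, non-antipodal configuration; handling the sub-case in which the obstructing point degenerates onto one of the two contact points is where the structure of the normal cone and the local graph description of Theorem~\ref{PosReachImpliesOptimalSizeNeighbourAsGraph} enter. By contrast the case $a=b$ and the inequality $\rch(\Su)\le\min(\cdots)$ are routine.
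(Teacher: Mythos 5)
Your first inequality $\rch(\Su)\le\min(\rch_{glob.}(\Su),\rch_{loc.}(\Su))$ is correct, and the bottleneck observation giving $\rch(\Su)\le\rch_{glob.}(\Su)$ is a useful point the paper leaves implicit. Your case $a=b$ is also correct and in fact cleaner than the paper's treatment of the corresponding scenario: the paper splits on whether $x\in\overline{\ax(\Su)}\setminus\ax(\Su)$ or $x\in\ax(\Su)$, and the first subcase is handled via Lemma~\ref{lemma:SemiContinuitySpecialCase} plus a geodesic-length argument, whereas your direct passage to the limit $\rch(\Su\cap B(c,\rho))\le\lim_k|x_k-a_k|=R$ gives $\rch_{loc.}(c,\Su)\le R$ immediately.

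The genuine gap, which you flag yourself, is the case $a\neq b$ with $|a-b|<2R$. The ``slide the empty ball'' strategy is not brought to a conclusion, and the subcases you list (the obstruction degenerating onto a contact point, non-isolated nearest points) are exactly where a purely Euclidean argument runs aground. The tool the paper uses here, and which you do not invoke, is the geodesic-length machinery: the metric-distortion characterization of the reach, Theorem~\ref{theorem:ReachEquivalentMetricDistorsion}, together with Corollary~\ref{lemma:ShortGeodesicsAreStraights}, which bounds geodesic length by the local reach. The paper works under the contradiction hypothesis $\rch(\Su)<\rch_{loc.}(\Su)$ (equivalently: assume $\rch_{loc.}(\Su)\le R$ fails, so we must exhibit a bottleneck). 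If $x\in\ax(\Su)$ realizes the reach with nearest points $p\neq q$ and $|p-q|<2\rch(\Su)$, then Theorem~\ref{theorem:ReachEquivalentMetricDistorsion} gives a geodesic from $p$ to $q$ of length at most $2\rch(\Su)\arcsin\frac{|p-q|}{2\rch(\Su)}<2\pi\rch_{loc.}(\Su)$, and Corollary~\ref{lemma:ShortGeodesicsAreStraights} then shortens this bound to $2\rch_{loc.}(\Su)\arcsin\frac{|p-q|}{2\rch_{loc.}(\Su)}$. Such a short curve in $\Su$ cannot stay outside the empty ball $B^\circ(x,\rch(\Su))$ (project the curve onto $\partial B(x,\rch(\Su))$ and compare lengths), which is a contradiction; hence $|p-q|=2\rch(\Su)$ and $x=(p+q)/2$ is a bottleneck. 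Grafting this argument onto your case $a\neq b$ — that is, first dispatching the sub-case $\rch_{loc.}(\Su)\le R$ as already done, then assuming $\rch_{loc.}(\Su)>R$ and running the geodesic contradiction — would close the gap. Without the geodesic estimates there is no evident way to rule out the non-antipodal configuration, which is precisely why the paper develops Theorem~\ref{theorem:ReachEquivalentMetricDistorsion}, Lemma~\ref{lemma:LocalReachMakeGeodesicsStraight} and Corollary~\ref{lemma:ShortGeodesicsAreStraights} before proving this theorem.
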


{Full proofs of the statements that are not in the main paper can be found in the appendix, but we have also added proof sketches for the non-straightforward statements. }

\section{Preliminaries}\label{sec:Preliminaries} 
In this section we first recall some results and definitions concerning sets of positive reach. The results concerning the reach are mainly taken from \cite{Federer} and \cite{TangentVarJournal}. 

\subsection{Sets of positive reach}
	\begin{tcolorbox}Throughout this paper we will write $\Su$ for a general closed set (mostly with positive reach) and $\M$ for a manifold (mostly without boundary) and in some rare cases $\partial \M$ for its boundary if it exists. 
	\end{tcolorbox}

The concept of a tangent space of a smoothly embedded manifold has been generalized by Federer to tangent cones in the following manner: 
\begin{definition}[Generalized tangent spaces, Definitions 4.3 and 4.4 of \cite{Federer}] \label{def:4.3and4.4Fed} 
If $\Su \subset \mathbb{R}^d$ and $p\in \Su$, then the generalized tangent space
\begin{align}
\Tan(p,\Su)
\nonumber
\end{align}
is the set of all tangent vectors of $\Su$ at $p$ consists of all those $u \in \R^d$, such that either $u=0$ or for every $\epsilon>0$ there exists a point $q \in  \Su$ with
\begin{align}
0<&|q-p|<\epsilon &\textrm{and} & &\left| \frac{q-p}{|q-p|}- \frac{u}{|u|} \right| < \epsilon.
\nonumber 
\end{align}
The set 
\begin{align}
\Nor(p,\Su)
\nonumber
\end{align}
of all normal vectors of $\Su$ at $p$ consists of all those $v \in \R^d$ such that  $v \cdot u \leq 0$ for all $u \in \Tan(p,\Su)$.
\end{definition}


\begin{definition} 
We write $\pi_\Su$ for the closest point projection on $\Su$. 
\end{definition} 
Although it is possible to define a set valued version of $\pi_\Su$, we only use the projection map outside the medial axis, that is at those points where there is a single closest point. 

For the remainder of this subsection we will recall the properties of this map and sets of positive reach that are essential for this paper.


\begin{lemma}[Angle with tangent space, Theorem 4.8(7)  of \cite{Federer}] \label{Lem:distanceToTangent} 
Suppose that $\Su$ is a set positive reach $\rch(\Su)$.
Let $p,q \in \Su \subset \mathbb{R}^d$ such that $|p-q| < \rch(\Su)$. We have 
\begin{align} 
\sin \angle (q-p, \Tan(p, \Su) ) \leq \frac{|p-q|}{2\,  \rch (\Su)}.
\label{AngleSegmentToT}
\end{align} 
\label{TangentSpaceLine1}
\end{lemma}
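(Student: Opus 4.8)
The plan is to combine two standard consequences of positive reach: the ``empty ball'' property along normal directions, and the fact that the generalized tangent and normal cones form a mutually polar pair. Write $v = q-p$ and assume $q\neq p$ (otherwise both sides vanish). Recall from Federer's Theorem~4.8 that, since $\Su$ has positive reach, $\Tan(p,\Su)$ is a \emph{closed convex} cone whose polar cone is exactly $\Nor(p,\Su)$. Applying the Moreau decomposition of $v$ along this dual pair, $v = P_{\Tan(p,\Su)}(v) + P_{\Nor(p,\Su)}(v)$ with the two summands orthogonal, one obtains
\begin{equation}
\sin\angle\bigl(v,\Tan(p,\Su)\bigr)=\frac{\mathrm{dist}\bigl(v,\Tan(p,\Su)\bigr)}{|v|}=\frac{1}{|v|}\,\max_{\nu\in\Nor(p,\Su),\ |\nu|=1}\, v\cdot\nu .
\nonumber
\end{equation}
Only the inequality ``$\leq$'' between the left-hand side and the right-hand maximum is needed below; convexity of $\Tan(p,\Su)$ is precisely what makes the maximum over unit normals an \emph{upper} bound for $\mathrm{dist}(v,\Tan(p,\Su))$ rather than merely a lower bound.

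Next, fix a unit vector $\nu\in\Nor(p,\Su)$ and a radius $r<\rch(\Su)$. By the defining property of the reach, $p$ is the unique point of $\Su$ closest to $p+r\nu$, so the open ball $B^\circ(p+r\nu,r)$ is disjoint from $\Su$; in particular $q\in\Su$ satisfies $|q-(p+r\nu)|^2\ge r^2$. Expanding $|v-r\nu|^2\ge r^2$ and cancelling $r^2$ gives $v\cdot\nu\le \tfrac{1}{2r}|v|^2$. Letting $r\uparrow\rch(\Su)$ yields $v\cdot\nu\le \tfrac{|v|^2}{2\,\rch(\Su)}$, and since this holds for every unit $\nu\in\Nor(p,\Su)$, the identity above gives $\sin\angle(q-p,\Tan(p,\Su))\le \tfrac{|q-p|}{2\,\rch(\Su)}$. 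The hypothesis $|p-q|<\rch(\Su)$ is only used to make the bound meaningful (less than $\tfrac12$); the estimate itself holds for all $q\in\Su$.

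I expect the only genuine obstacle to be the structural input used in the first paragraph: that $\Tan(p,\Su)$ is convex and polar-dual to $\Nor(p,\Su)$. This is false for arbitrary closed sets and is itself one of the basic consequences of positive reach proved by Federer; once it is available, the geometric estimate reduces to the elementary empty-ball computation above. Since the statement is literally Theorem~4.8(7) of \cite{Federer}, one may of course simply cite it; the sketch above is the natural self-contained derivation and is the route I would follow.
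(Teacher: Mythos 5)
Your argument is correct. Note first that the paper does not reprove this lemma at all---it simply quotes Federer's Theorem~4.8(7)---so there is no internal proof to compare against. Your derivation is essentially the one found in Federer's original paper: the empty-ball property $B^\circ(p+r\nu,r)\cap\Su=\emptyset$ for unit $\nu\in\Nor(p,\Su)$ and $r<\rch(\Su)$ (which is Federer's Theorem~4.8(12), recalled in the paper as Lemma~\ref{Fed4.8.12}), combined with the fact that $\Tan(p,\Su)$ and $\Nor(p,\Su)$ form a mutually polar pair of closed convex cones so that the Moreau decomposition identifies $\sin\angle(v,\Tan(p,\Su))$ with $\max_{\nu\in\Nor(p,\Su),\,|\nu|=1} v\cdot\nu/|v|$. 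Each step is sound: the ``$\leq$'' direction of that identity (all you need) and the expansion $|v-r\nu|^2\ge r^2\Rightarrow v\cdot\nu\le |v|^2/(2r)$ are elementary, and passing $r\uparrow\rch(\Su)$ is harmless. The degenerate cases ($\Nor(p,\Su)=\{0\}$, or $v\in\Tan(p,\Su)$) make the left-hand side zero, so they pose no problem. In short: correct, and same route Federer took, just fleshed out rather than cited.
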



Moreover, a variant of the previous lemma characterizes the reach for arbitrary closed sets, that is:
\begin{lemma}[Distance to tangent space characterizes the reach, Theorem 4.18  of \cite{Federer}] 
\label{Lem:distanceToTangentCharacter} 
For a closed subset $\Su$ of Euclidean space $\mathbb{E} = \R^d$, one has:
\begin{equation}\label{TangentSpaceLine_2}
\rch(\Su) = \inf_{\myatop{p,q\in \Su}{p\neq q}} \frac{|p-q|^2}{2 d_{\mathbb{E} } (q ,\: p \: + \Tan(p, \M) ) } 
\end{equation}
\end{lemma}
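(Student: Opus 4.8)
The plan is to set $r = \rch(\Su)$ and $\rho = \inf_{p\neq q}\frac{|p-q|^{2}}{2\,d_{\mathbb{E}}(q,\,p+\Tan(p,\Su))}$, the infimum over ordered pairs of distinct points of $\Su$, and to prove $\rho\geq r$ and $\rho\leq r$ separately. Both directions rely only on classical facts from \cite{Federer}: that positive reach forces $\Tan(p,\Su)$ to be a closed convex cone with polar cone $\Nor(p,\Su)$, that $x-p\in\Nor(p,\Su)$ whenever $p$ is a nearest point of $\Su$ to $x$, and the supporting-ball property $\pi_\Su(p+tv)=p$ for a unit normal $v\in\Nor(p,\Su)$ and every $t<\rch(\Su)$.

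For $\rho\geq r$, fix $p\neq q$ in $\Su$, put $w=q-p$, let $a$ be the nearest point of the closed convex cone $\Tan(p,\Su)$ to $w$, and set $b=w-a$. The first-order optimality conditions for projection onto a closed convex cone give $b\in\Nor(p,\Su)$ and $\langle a,b\rangle=0$, so that $d_{\mathbb{E}}(q,p+\Tan(p,\Su))=|b|$ and $\langle w,b/|b|\rangle=|b|$. If $b=0$ the ratio is $+\infty$; otherwise write $v=b/|b|$, a unit normal at $p$, and use that for every $t<r$ the open ball $B^\circ(p+tv,t)$ is disjoint from $\Su$, hence $t^{2}\leq|q-(p+tv)|^{2}=|w|^{2}-2t|b|+t^{2}$, i.e. $|b|\leq|w|^{2}/(2t)$; letting $t\to r$ gives $d_{\mathbb{E}}(q,p+\Tan(p,\Su))=|b|\leq|p-q|^{2}/(2r)$, so the ratio is at least $r$. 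As this holds for every pair, $\rho\geq r$ (the cases $r=0$ and $r=\infty$ being trivial, the latter because then $\Su$ is convex and $d_{\mathbb{E}}(q,p+\Tan(p,\Su))=0$ for all pairs).

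For $\rho\leq r$ I may assume $r<\infty$. By the description of the reach as the distance from $\Su$ to the set of points having a non-unique nearest point, there are points $x_k\notin\Su$, each with two distinct nearest points $p_k\neq q_k$ in $\Su$, and with $\delta_k:=|x_k-p_k|=|x_k-q_k|=d_{\mathbb{E}}(x_k,\Su)\to r$. Since $p_k$ is a nearest point of $\Su$ to $x_k$, the vector $v_k:=(x_k-p_k)/\delta_k$ lies in $\Nor(p_k,\Su)$, so $\Tan(p_k,\Su)$ is contained in the half-space $\{u:\langle u,v_k\rangle\leq0\}$ and therefore $d_{\mathbb{E}}(q_k,p_k+\Tan(p_k,\Su))\geq\langle q_k-p_k,v_k\rangle$. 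Expanding $|x_k-q_k|^{2}=|x_k-p_k|^{2}$ in the form $|(q_k-p_k)-(x_k-p_k)|^{2}=|x_k-p_k|^{2}$ yields $\langle q_k-p_k,v_k\rangle=|q_k-p_k|^{2}/(2\delta_k)$, whence $\frac{|p_k-q_k|^{2}}{2\,d_{\mathbb{E}}(q_k,\,p_k+\Tan(p_k,\Su))}\leq\delta_k\to r$. Hence $\rho\leq r$, and combining the two inequalities gives $\rch(\Su)=\rho$.

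I expect the main obstacle to be the bookkeeping in the second inequality: the infimum defining $\rch(\Su)$ need not be attained, so one only obtains a sequence of near-critical points of the medial axis, and the degenerate values $r=0$ and $r=\infty$ must be handled separately. The first inequality is, for pairs with $|p-q|<\rch(\Su)$, just a restatement of Lemma~\ref{Lem:distanceToTangent} together with $d_{\mathbb{E}}(q,p+\Tan(p,\Su))=|p-q|\sin\angle(q-p,\Tan(p,\Su))$; the point of the supporting-ball computation is precisely that it also covers pairs with $|p-q|\geq\rch(\Su)$, where Lemma~\ref{Lem:distanceToTangent} does not apply.
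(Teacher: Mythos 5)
Your proof is correct. Note, though, that the paper does not prove this statement at all: it is quoted verbatim as Theorem~4.18 of Federer, so there is no internal proof to compare against. What you have written is essentially a reconstruction of Federer's original argument, and it holds together: for $\rho\geq r$ you use the polar-cone optimality conditions for the projection of $q-p$ onto $\Tan(p,\Su)$ together with the supporting-ball property $\pi_\Su(p+tv)=p$ for $t<\rch(\Su)$ (the paper's Lemma~\ref{Fed4.8.12}), and for $\rho\leq r$ you use a sequence of medial-axis points realizing the reach in the limit, with the identity $\langle q_k-p_k,v_k\rangle=|q_k-p_k|^2/(2\delta_k)$ and the half-space bound $\Tan(p_k,\Su)\subset\{u:\langle u,v_k\rangle\leq 0\}$. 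Two small points you already handle but that are worth keeping explicit: the first-order optimality conditions you invoke require $\Tan(p,\Su)$ to be a \emph{convex} closed cone, which for a general closed set is only guaranteed when the reach is positive (again Lemma~\ref{Fed4.8.12}); this is harmless because $\rho\geq r$ is trivial when $r=0$, and your separate treatment of $r=\infty$ (convexity of $\Su$, so the denominator vanishes and the ratio is $+\infty$) closes the remaining degenerate case. Likewise, your use of $x-p\in\Nor(p,\Su)$ for a nearest point $p$ is valid for arbitrary closed sets directly from Federer's definitions, so the second direction needs no reach hypothesis beyond $r<\infty$ guaranteeing a nonempty medial axis.
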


\begin{lemma}[Theorem 4.8 (4) of \cite{Federer}]\label{Fed4.8.4}
$\pi_\Su$ is continuous on $\mathbb{R}^d  \setminus \ax(\Su)$.
\end{lemma}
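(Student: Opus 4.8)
The plan is to prove \emph{sequential} continuity, which suffices since $\R^d\setminus\ax(\Su)$ is a metric space. Fix $x\in\R^d\setminus\ax(\Su)$ and an arbitrary sequence $(x_k)_{k\in\Natural}$ in $\R^d\setminus\ax(\Su)$ with $x_k\to x$, and write $p=\pi_\Su(x)$ and $p_k=\pi_\Su(x_k)$; the goal is $p_k\to p$. The first step is to note that $(p_k)$ is bounded: since $|x_k-p_k|=d(x_k,\Su)\le d(x,\Su)+|x-x_k|$ and the convergent sequence $(x_k)$ is bounded, $|p_k|\le |p_k-x_k|+|x_k|$ is bounded uniformly in $k$.

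Next I would use that $y\mapsto d(y,\Su)$ is $1$-Lipschitz, hence $d(x_k,\Su)\to d(x,\Su)$. Take any subsequence of $(p_k)$; by Bolzano--Weierstrass it has a further subsequence $(p_{k_j})$ converging to some $q\in\R^d$, and $q\in\Su$ because $\Su$ is closed. Passing to the limit in $|x_{k_j}-p_{k_j}|=d(x_{k_j},\Su)$ gives $|x-q|=d(x,\Su)$, so $q$ is a nearest point of $\Su$ to $x$. Since $x\notin\ax(\Su)$, the nearest point of $x$ is unique, so $q=p$.

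Finally I would invoke the standard subsequence principle: a bounded sequence all of whose convergent subsequences share the same limit $p$ must converge to $p$ (otherwise some subsequence stays at distance $\ge\varepsilon$ from $p$, and being bounded it has a convergent sub-subsequence whose limit differs from $p$, contradicting the previous step). Hence $p_k\to p$, which establishes continuity of $\pi_\Su$ at $x$. I do not anticipate a real obstacle here: the argument is a compactness-plus-uniqueness routine and uses only that $\Su$ is closed (positive reach is not needed for the continuity statement itself, only to guarantee that $\R^d\setminus\ax(\Su)$ contains the $\rch(\Su)$-tube). The only point requiring a little care is making the ``all subsequential limits equal $p$'' step rigorous, which is why I phrased it by contradiction above.
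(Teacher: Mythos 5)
Your argument is correct: the compactness-plus-uniqueness routine (boundedness of $(p_k)$ via the $1$-Lipschitz distance function, subsequential limits lying in the closed set $\Su$ and realizing $d(x,\Su)$, uniqueness of the nearest point off $\ax(\Su)$, and the subsequence principle) is exactly the standard proof of this fact. The paper itself gives no proof but cites Federer's Theorem 4.8(4), whose argument is essentially the one you wrote, so there is nothing to fix; you are also right that only closedness of $\Su$ is used and positive reach plays no role in the continuity statement itself.
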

In fact the following lemma strengthens the previous. 

\begin{lemma}[Theorem 4.8 (8) of \cite{Federer}]\label{Fed4.8.8}
If $x,y \in \mathbb{R}^d$, and $\max \{|x - \pi_{\Su}(x)| , |y - \pi_{\Su}(y)|\} = \mu \leq \rch(\Su)$, then
\begin{align} 
|\pi_\Su(x)- \pi_{\Su}(y) |\leq \frac{ \rch(\Su)}{ \rch(\Su)- \mu} |x-y| 
\nonumber
\end{align} 
\end{lemma}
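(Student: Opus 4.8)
The plan is to reduce the bound to the ``supporting ball'' property of sets of positive reach and then to close with a single application of the Cauchy--Schwarz inequality. Write $R=\rch(\Su)$, $a=\pi_\Su(x)$, $b=\pi_\Su(y)$. If $\mu=R$ the claimed bound is vacuous and if $a=b$ it is trivial, so assume $\mu<R$ and $a\neq b$. The first thing I would record is that $x-a\in\Nor(a,\Su)$ and $y-b\in\Nor(b,\Su)$: this is immediate from the minimality of $a$, since for $u\in\Tan(a,\Su)$ and a sequence $q_k\to a$ in $\Su$ with $(q_k-a)/|q_k-a|\to u/|u|$ the inequality $|x-q_k|\ge|x-a|$ gives $(x-a)\cdot(q_k-a)/|q_k-a|\le|q_k-a|/2\to0$, hence $(x-a)\cdot u\le0$.

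The second, and only non-elementary, ingredient is that for any $a\in\Su$, any unit vector $u\in\Nor(a,\Su)$, and any $b\in\Su$,
\begin{equation}\label{eq:suppingball}
u\cdot(b-a)\;\le\;\frac{|a-b|^2}{2R}.
\end{equation}
This is a reformulation of the standard fact (contained in Federer's Theorem~4.8) that the open ball $B^\circ(a+Ru,R)$ does not meet $\Su$, equivalently that $\pi_\Su(a+tu)=a$ for $0\le t<R$; the latter gives $|b-(a+tu)|\ge t$, i.e.\ $u\cdot(b-a)\le|a-b|^2/(2t)$, and letting $t\uparrow R$ yields \eqref{eq:suppingball}. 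For $|a-b|<R$ one can alternatively deduce \eqref{eq:suppingball} from Lemma~\ref{Lem:distanceToTangent}, by splitting $b-a$ into its nearest point in the convex cone $\Tan(a,\Su)$, on which $u$ is nonpositive, and the orthogonal remainder, whose length is at most $|a-b|^2/(2R)$.

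Granting \eqref{eq:suppingball}, the computation is short. When $x\neq a$, apply \eqref{eq:suppingball} with the unit normal $u=(x-a)/|x-a|\in\Nor(a,\Su)$ and with the point $b\in\Su$, and multiply by $|x-a|\le\mu$:
\[ (x-a)\cdot(a-b)\;=\;-|x-a|\,u\cdot(b-a)\;\ge\;-\frac{|x-a|\,|a-b|^2}{2R}\;\ge\;-\frac{\mu\,|a-b|^2}{2R}, \]
and this also holds trivially when $x=a$. Exchanging the roles of $(x,a)$ and $(y,b)$ gives $(y-b)\cdot(b-a)\ge-\mu\,|a-b|^2/(2R)$ in the same way. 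Expanding $x-y=(x-a)+(a-b)+(b-y)$ and taking the inner product with $a-b$,
\[ (x-y)\cdot(a-b)\;=\;(x-a)\cdot(a-b)+|a-b|^2+(y-b)\cdot(b-a)\;\ge\;\Bigl(1-\frac{\mu}{R}\Bigr)|a-b|^2\;=\;\frac{R-\mu}{R}\,|a-b|^2 . \]
By Cauchy--Schwarz, $|x-y|\,|a-b|\ge(x-y)\cdot(a-b)\ge\frac{R-\mu}{R}|a-b|^2$, and dividing by $|a-b|>0$ yields $|\pi_\Su(x)-\pi_\Su(y)|=|a-b|\le\frac{R}{R-\mu}|x-y|$.

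All the content sits in inequality \eqref{eq:suppingball}; once it is in hand the remainder is bookkeeping. The only subtlety I anticipate is that \eqref{eq:suppingball} must be invoked with $b$ possibly far from $a$, so one should use it in its ``supporting ball'' form (or via $\pi_\Su(a+tu)=a$ for $t<R$) rather than through Lemma~\ref{Lem:distanceToTangent}, whose hypothesis $|a-b|<R$ need not hold in general; in the applications the points $x,y$ are close, so this is a minor matter.
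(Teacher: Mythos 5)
Your proof is correct: the reduction of the bound to the supporting-ball inequality $u\cdot(b-a)\le |a-b|^2/(2R)$ for unit normals (obtainable from Lemma~\ref{Fed4.8.12} via $\pi_\Su(a+tu)=a$ for $t<R$), applied at both projection points and combined with Cauchy--Schwarz, is exactly the standard argument behind Federer's Theorem~4.8(8). The paper itself does not prove this lemma but simply quotes it from Federer, and your derivation matches Federer's original approach, so there is nothing to add.
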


\begin{lemma}[Theorem 4.8 (12) of \cite{Federer}]\label{Fed4.8.12}
If $p \in \Su$ and $\lfs (p)> \rho > 0 $, then 
\begin{align}
\Nor ( p,\Su) = \{ \lambda v \mid \lambda\geq 0 , | v| = \rho ,  \pi_\Su (p+v) =p\} .
\nonumber
\end{align}
$\Tan(p,\Su)$ is the convex cone dual to $\Nor(p,\Su)$, and 
\begin{align}
\lim_{t \to 0^+} t^{-1} d_\M (p +t u) =0,
\nonumber
\end{align}
for $u\in T(p,\Su)$ 
\end{lemma}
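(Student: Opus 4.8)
The plan is to recover this statement (it is Federer's Theorem~4.8(12)) from the parts of his Theorem~4.8 already quoted --- continuity of $\pi_\Su$ off the medial axis (Lemma~\ref{Fed4.8.4}) and the angle bound (Lemma~\ref{Lem:distanceToTangent}) --- together with elementary convex geometry, treating the three assertions in turn.

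\emph{The identity for $\Nor(p,\Su)$.} The inclusion ``$\supseteq$'' is a Pythagorean computation: if $|v|=\rho$ and $\pi_\Su(p+v)=p$, then $|(p+v)-q|^2\ge|v|^2$ for every $q\in\Su$, which expands to $2\,v\cdot(q-p)\le|q-p|^2$; dividing by $|q-p|$ and letting $q\to p$ along a sequence whose directions converge to a unit $u\in\Tan(p,\Su)$ (Definition~\ref{def:4.3and4.4Fed}) gives $v\cdot u\le 0$, hence $v$, and every $\lambda v$ with $\lambda\ge 0$, lies in $\Nor(p,\Su)$. For ``$\subseteq$'', take a unit $w\in\Nor(p,\Su)$; since $\rho<\lfs(p)=d(p,\ax(\Su))$, the segment $\{p+tw:t\in[0,\rho]\}$ stays strictly within distance $\lfs(p)$ of $p$, hence avoids $\ax(\Su)$, so $t\mapsto\pi_\Su(p+tw)$ is a well-defined continuous curve in $\Su$ starting at $p$. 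It is constant: for small $t$ this follows from a supporting-ball estimate (if $q\in\Su\cap B^\circ(p+tw,t)$ with $t<\tfrac12\rch(\Su)$, then $|q-p|<2t<\rch(\Su)$ and $\tfrac{1}{2t}|q-p|^2<w\cdot(q-p)\le|q-p|\sin\angle(q-p,\Tan(p,\Su))\le\tfrac{1}{2\,\rch(\Su)}|q-p|^2$ by Lemma~\ref{Lem:distanceToTangent} and $w\in\Nor(p,\Su)$, forcing $t>\rch(\Su)$, a contradiction, so $B^\circ(p+tw,t)\cap\Su=\emptyset$ and $\pi_\Su(p+tw)=p$), and the extension to the full range $t<\lfs(p)$ uses the fact --- contained in the remaining parts of Federer's Theorem~4.8 --- that along a normal ray the foot point cannot jump before the medial axis is reached. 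Thus $\pi_\Su(p+\rho w)=p$.

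\emph{Duality of the cones, and the first-order limit.} By Definition~\ref{def:4.3and4.4Fed}, $\Nor(p,\Su)$ is the polar cone of $\Tan(p,\Su)$, so by the bipolar theorem the polar of $\Nor(p,\Su)$ is the smallest closed convex cone containing $\Tan(p,\Su)$; the assertion that $\Tan(p,\Su)$ is dual to $\Nor(p,\Su)$ is therefore equivalent to the statement that $\Tan(p,\Su)$ is closed and convex. Closedness is immediate from the definition, but convexity fails for general closed sets and is where positive reach is indispensable: one combines the angle bound --- which confines $\Su\cap B^\circ(p,r)$ to a cone of half-angle $\arcsin(r/2\,\rch(\Su))$ about $p+\Tan(p,\Su)$, so that the directions of $\Su$ near $p$ cluster exactly onto $\Tan(p,\Su)$ --- with the supporting balls produced by the first assertion to show that no direction in the polar of $\Nor(p,\Su)$ can be omitted from $\Tan(p,\Su)$. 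Finally, for $u\in\Tan(p,\Su)$, Definition~\ref{def:4.3and4.4Fed} yields a sequence $t_n\downarrow 0$ along which $t_n^{-1}d_\Su(p+t_nu)\to 0$, and the same quantitative control upgrades this to $\lim_{t\to 0^+}t^{-1}d_\Su(p+tu)=0$ along the whole ray.

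I expect the convexity of the tangent cone --- equivalently, the non-omission of directions in the duality statement, and the corresponding full-range version of the normal-cone identity --- to be the main obstacle: it is the one point that genuinely needs the geometry of positive reach rather than continuity of $\pi_\Su$ or a one-line estimate, and it is precisely where Federer's and Lytchak's treatments concentrate their effort.
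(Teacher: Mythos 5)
The paper does not prove this statement at all: it is quoted verbatim as Federer's Theorem 4.8(12) and used as a black box, so the only fair comparison is whether your attempted derivation stands on its own. It does not, and the gaps sit exactly where you yourself locate the difficulty. First, of the three assertions, two are not actually proved: the duality $\Tan(p,\Su)=\Nor(p,\Su)^\perp$ (equivalently, convexity of the tangent cone / ``no omitted directions'') is only gestured at (``one combines the angle bound \dots with the supporting balls \dots''), and your closing paragraph concedes it is the expected main obstacle; likewise the upgrade from the sequential statement $t_n^{-1}d_\Su(p+t_nu)\to 0$ (which is immediate from Definition~\ref{def:4.3and4.4Fed}) to the full limit $\lim_{t\to 0^+}t^{-1}d_\Su(p+tu)=0$ is asserted via ``the same quantitative control'' with no argument. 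In Federer's development these two points are proved together and are the heart of 4.8(12): one shows $d_\Su(p+tu)=o(t)$ for \emph{every} $u$ in the dual cone of $\Nor(p,\Su)$, e.g.\ by contradiction --- take $t_n\to 0$ with $d_\Su(p+t_nu)\ge\epsilon t_n$, set $q_n=\pi_\Su(p+t_nu)$, $v_n=p+t_nu-q_n\in\Nor(q_n,\Su)$, use the empty-ball characterization of normals at the nearby points $q_n$ to pass to a limit normal $w\in\Nor(p,\Su)$, and use the Lemma~\ref{Lem:distanceToTangent}-type bound at $q_n$ to force $\langle w,u\rangle\ge\epsilon>0$, contradicting $u\in\Nor(p,\Su)^\perp$ --- and this simultaneously yields $u\in\Tan(p,\Su)$, hence the duality. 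None of this appears in your proposal.

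Second, even the part you do argue, the inclusion $\Nor(p,\Su)\subseteq\{\lambda v\}$, is only established for $t<\tfrac12\rch(\Su)$; the extension to the stated range $t\le\rho<\lfs(p)$ is delegated to ``the remaining parts of Federer's Theorem~4.8,'' i.e.\ to 4.8(5)--(6) (the foot point does not jump along a normal ray before the reach is exhausted). Those items are \emph{not} among the lemmas quoted in the paper (4.8(4), (7), (8) and 4.18 are), they are of comparable depth to 4.8(12) itself, and they cannot be recovered in one line from continuity of $\pi_\Su$ plus the Lipschitz bound of Lemma~\ref{Fed4.8.8} (closedness of $\{t:\pi_\Su(p+tw)=p\}$ is easy, openness is the hard content). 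There is also a local/global mismatch: the hypothesis is $\lfs(p)>\rho$, and Federer's statement is local, whereas your quantitative steps use the global $\rch(\Su)$, which in general does not reach up to $\lfs(p)$. So as it stands the proposal is neither a self-contained proof nor a clean citation; if the lemma is to be used as a preliminary, the consistent move is the paper's, namely to cite Federer for the whole statement, and if you want a genuine proof you must supply the normal-ray propagation argument and the $o(t)$-approximation argument sketched above.
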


\begin{lemma}[Remark 4.15 (1) of \cite{Federer}]\label{Fed4.15.1}
If $p \in \Su$ and $\lfs (p)> \rho > 0 $, then $\Su \cap B(p, \rho)$ is contractible.
\end{lemma}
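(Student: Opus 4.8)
The plan is to exhibit an explicit strong deformation retraction of $\Su \cap B(p,\rho)$ onto the single point $p$, obtained by pulling the obvious straight-line contraction of the convex ball $B(p,\rho)$ back onto $\Su$ through the nearest-point projection $\pi_\Su$. Concretely, for $q \in \Su \cap B(p,\rho)$ and $t \in [0,1]$ I would set $x_t(q) = (1-t)\,q + t\,p$ and $H(q,t) = \pi_\Su\bigl(x_t(q)\bigr)$, and then verify that $H$ is a well-defined continuous map $(\Su \cap B(p,\rho)) \times [0,1] \to \Su \cap B(p,\rho)$ with $H(\cdot,0) = \mathrm{id}$, $H(\cdot,1) \equiv p$ and $H(p,t) = p$ for all $t$.

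For well-definedness I would first note that $|x_t(q) - p| = t\,|q-p| \le |q-p| \le \rho < \lfs(p) = d\bigl(p,\ax(\Su)\bigr)$, so $x_t(q)$ lies in $B^\circ(p,\lfs(p))$, which is disjoint from the medial axis by definition of $\lfs$; hence $\pi_\Su(x_t(q))$ is a genuine single point and, by Lemma~\ref{Fed4.8.4}, $\pi_\Su$ is continuous there. Since $(q,t)\mapsto x_t(q)$ is continuous with image in $B^\circ(p,\lfs(p)) \subseteq \R^d\setminus\ax(\Su)$, the composition $H$ is continuous. The boundary identities are immediate: $H(q,0)=\pi_\Su(q)=q$ because $q\in\Su$, $H(q,1)=\pi_\Su(p)=p$, and likewise $H(p,t)=\pi_\Su(p)=p$. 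The only point requiring a small estimate is that $H(q,t)$ stays in the \emph{same} ball $B(p,\rho)$, since a crude triangle inequality only yields $B(p,2\rho)$. For this I would use that both $p$ and $q$ lie in $\Su$, so that $|x_t(q) - \pi_\Su(x_t(q))| = d\bigl(x_t(q),\Su\bigr) \le \min\{|x_t(q)-p|,|x_t(q)-q|\} = \min\{t,1-t\}\,|q-p|$, whence
\[
|H(q,t)-p| \le |x_t(q)-\pi_\Su(x_t(q))| + |x_t(q)-p| \le \bigl(\min\{t,1-t\}+t\bigr)|q-p| \le |q-p| \le \rho,
\]
using $\min\{t,1-t\}+t \le 1$ (check the two cases $t\le\tfrac12$ and $t\ge\tfrac12$). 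As $H(q,t)\in\Su$ as well, $H(q,t)\in\Su\cap B(p,\rho)$, and $H$ is the desired contraction.

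The main obstacle is essentially just the ball-containment estimate above — there is no deep difficulty here, the argument resting entirely on continuity of $\pi_\Su$ off the medial axis (Lemma~\ref{Fed4.8.4}) together with the fact that $B^\circ(p,\lfs(p))$ avoids $\ax(\Su)$. If one wanted a purely quantitative substitute for the continuity input one could instead invoke the Lipschitz bound of Lemma~\ref{Fed4.8.8}, but this is not needed. I would conclude that $\Su\cap B(p,\rho)$ deformation retracts onto $\{p\}$, in particular it is contractible.
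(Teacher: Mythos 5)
Your construction is correct and is essentially Federer's own contraction $g_t = \pi_\Su \circ f_t$ with $f_t(x) = p + t(x-p)$ from Remark 4.15, which is exactly the argument the paper relies on by citation (the same projection-homotopy idea also reappears in the paper's proof of Lemma \ref{lemma:DefRetractOfBallMinusNormal}). The only blemish is the harmless slip $|x_t(q)-p| = t\,|q-p|$, which with your parametrization $x_t(q) = (1-t)q + tp$ should read $(1-t)\,|q-p|$; either way the relevant coefficient is at most $1$, so your containment estimate and the rest of the proof go through unchanged.
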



\subsubsection{Geodesics}\label{section:geodesics}

We also recall a result from the second paragraph of Part III, Section 1: ``Die Existenz geod{\"a}tischer Bogen in metrischen R{\"a}umen'' in \cite{menger1930untersuchungen}:
\begin{lemma}[Menger's existence of geodesics] \label{Menger}
For a closed set $\Su \subset \R^d$, $d_{\Su}$ denotes the geodesic distance in $\Su$, i.e. $d_{\Su}(a,b)$ is  the infimum of lengths of paths in $\Su$ between
$a$ and $b$. 
If there is at least one path   between  $a$ and $b$ with finite length, then it is known that a minimizing geodesic, i.e. a path
with minimal length connecting $a$ to $b$ exists. 
When no such path  between  $a$ and $b$ with finite length exists, we write $d_{\Su}(a,b) = \infty$. 
\end{lemma}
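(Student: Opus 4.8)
The plan is the classical direct-method (Arzelà–Ascoli) argument for shortest paths in a complete metric space, specialized to a closed subset of $\R^d$. The case where no finite-length path joins $a$ to $b$ is vacuous: then $d_\Su(a,b)=\infty$ by our convention and there is nothing to prove. So assume $a,b\in\Su$ are joined by at least one rectifiable path in $\Su$, whence $L:=d_\Su(a,b)<\infty$. Since $d_\Su(a,b)$ is by definition the infimum of lengths of paths in $\Su$, and this infimum is attained in the limit by finite-length paths, I would first choose a minimizing sequence of rectifiable paths $\gamma_n\colon[0,1]\to\Su$ with $\gamma_n(0)=a$, $\gamma_n(1)=b$ and $\mathrm{length}(\gamma_n)\to L$; after discarding finitely many terms I may assume $\mathrm{length}(\gamma_n)\le L+1$ for all $n$.

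Next I would normalize the parametrizations: replace each $\gamma_n$ by its reparametrization proportional to arc length on $[0,1]$ (a standard operation on a rectifiable curve, collapsing the subintervals on which the curve is stationary), so that the new $\gamma_n$ is $\mathrm{length}(\gamma_n)$-Lipschitz, hence $(L+1)$-Lipschitz, with the same endpoints and the same image contained in $\Su$. In particular every $\gamma_n([0,1])$ lies in the compact set $\Su\cap\overline{B}(a,L+1)$, so the family $\{\gamma_n\}$ is uniformly bounded and equicontinuous (even uniformly Lipschitz). By the Arzelà–Ascoli theorem some subsequence converges uniformly on $[0,1]$ to a continuous curve $\gamma\colon[0,1]\to\R^d$ with $\gamma(0)=a$ and $\gamma(1)=b$; and since $\Su$ is closed and each $\gamma_n$ takes values in $\Su$, the limit satisfies $\gamma([0,1])\subset\Su$.

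Finally I would invoke lower semicontinuity of length: for curves in $\R^d$ the length is the supremum over finite partitions of $[0,1]$ of the sum of Euclidean distances between consecutive images, and each such finite sum is continuous under uniform convergence, so a supremum of continuous functionals is lower semicontinuous. Hence $\mathrm{length}(\gamma)\le\liminf_n\mathrm{length}(\gamma_n)=L$. Conversely $\gamma$ is itself a path from $a$ to $b$ in $\Su$, so $\mathrm{length}(\gamma)\ge d_\Su(a,b)=L$. Therefore $\mathrm{length}(\gamma)=L$ and $\gamma$ is a minimizing geodesic. The only points requiring any care are the constant-speed reparametrization of a merely rectifiable curve and the lower semicontinuity of length under uniform convergence; both are standard, and the compactness step is immediate once $\Su$ is closed and the minimizing sequence confined to a bounded ball, so I do not expect a genuine obstacle here — this is why the statement is phrased as ``it is known''.
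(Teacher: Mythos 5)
Your argument is correct: the case of no finite-length path is indeed vacuous, and in the remaining case the direct method works exactly as you describe — minimizing sequence, constant-speed reparametrization giving a uniform Lipschitz bound, confinement to the compact set $\Su\cap\overline{B}(a,L+1)$ (compact because $\Su$ is closed), Arzel\`a--Ascoli, closedness of $\Su$ to keep the limit curve in $\Su$, and lower semicontinuity of length to conclude. Note that the paper itself offers no proof of this lemma: it is stated as a known result and attributed to Menger's 1930 paper, so there is no in-paper argument to compare against; your Arzel\`a--Ascoli proof is the standard modern proof of precisely that cited fact, and it is complete as written (the two delicate points you flag, the arc-length reparametrization of a merely rectifiable curve and the lower semicontinuity of length under uniform convergence, are both genuinely standard and correctly invoked).
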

We'll be using the metric characterization of the reach from \cite{TangentVarJournal} quite often in this paper. 
\begin{theorem}[Theorem 1 of \cite{TangentVarJournal}]\label{theorem:ReachEquivalentMetricDistorsion}
If $\Su \subset \R^d$ is a closed set, then
\begin{align} \label{eq:MetricDefREach} 
\rch \, \Su = \sup \left\{r > 0,  \, \forall a,b \in \Su, \, |a-b| < 2 r \Rightarrow d_{\Su}(a,b) \leq 2 r \arcsin \frac{|a-b|}{ 2 r} \right\},
\end{align} 
where the $\sup$ over the empty set is $0$.
\end{theorem}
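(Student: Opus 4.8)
Write $R^{*}$ for the right-hand side and, for $r>0$, let $P(r)$ abbreviate ``$\forall a,b\in\Su,\ |a-b|<2r\Rightarrow d_{\Su}(a,b)\le 2r\arcsin\frac{|a-b|}{2r}$''. I would prove that $\{r>0:P(r)\}$ equals $(0,\rch(\Su)]$ — to be read as $\emptyset$ when $\rch(\Su)=0$ and as $(0,\infty)$ when $\rch(\Su)=\infty$ — which gives the theorem upon taking suprema. Two reductions: $P$ is \emph{monotone}, because $s\mapsto 2s\arcsin\frac{c}{2s}$ has derivative $2(\theta-\tan\theta)<0$ at $\theta=\arcsin\frac{c}{2s}\in(0,\tfrac\pi2)$, hence is strictly decreasing for $s>c/2$, so $P(r)$ implies $P(r')$ for $0<r'<r$; therefore it is enough to prove that $P(\rch(\Su))$ holds and that $r>\rch(\Su)$ implies $\neg P(r)$.

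\textbf{$P(\rch(\Su))$ holds.} Assume $R:=\rch(\Su)\in(0,\infty)$ (the case $R=\infty$ is identical with the convention $1/R=0$, and $R=0$ is vacuous) and fix $a,b\in\Su$ with $|a-b|<2R$. For $x\in\Su$ with $0<|x-b|<2R$, Lemma~\ref{Lem:distanceToTangentCharacter} forces $\angle\!\big(\tfrac{b-x}{|b-x|},\,\Tan(x,\Su)\big)\le\arcsin\tfrac{|b-x|}{2R}<\tfrac\pi2$; let $u(x)$ be the unit vector of the closed convex cone $\Tan(x,\Su)$ realizing this angle (the normalization of the projection of $\tfrac{b-x}{|b-x|}$ onto $\Tan(x,\Su)$). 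Let $\sigma$ be the unit-speed curve issued from $a$ with $\dot\sigma=u(\sigma)$, obtained as a sublimit, as $\tau\to0$, of the Euler polygons $x\mapsto\pi_{\Su}(x+\tau u(x))$; this is legitimate since $x+\tau u(x)$ lies in the reach-tube of $\Su$ and, by Lemma~\ref{Fed4.8.12} (as $u(x)\in\Tan(x,\Su)$), $d_{\R^d}(x+\tau u(x),\Su)=o(\tau)$. Put $\rho(t)=|b-\sigma(t)|$. Since $|\dot\sigma|\equiv1$,
\[
\dot\rho(t)=-\Big\langle\tfrac{b-\sigma(t)}{|b-\sigma(t)|},\,\dot\sigma(t)\Big\rangle=-\cos\angle\!\Big(\tfrac{b-\sigma(t)}{|b-\sigma(t)|},\,u(\sigma(t))\Big)\ \le\ -\sqrt{1-\rho(t)^{2}/4R^{2}},
\]
so $\rho$ strictly decreases from $\rho(0)<2R$ and reaches $0$ at a finite time $T$, i.e.\ $\sigma(T)=b$. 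Separating variables,
\[
d_{\Su}(a,b)\ \le\ T\ =\ \int_{0}^{T}\!dt\ \le\ \int_{0}^{|a-b|}\frac{d\rho}{\sqrt{1-\rho^{2}/4R^{2}}}\ =\ 2R\arcsin\frac{|a-b|}{2R},
\]
as wanted. The delicate point is the construction of $\sigma$: since $x\mapsto\Tan(x,\Su)$ is only upper semicontinuous, one must check that an Euler sublimit has unit speed with $\dot\sigma(t)\in\Tan(\sigma(t),\Su)$ and that the (uniform) angle bound of Lemma~\ref{Lem:distanceToTangentCharacter} passes to the limit; alternatively, invoke gradient-flow theory for the semiconcave potential $x\mapsto-|x-b|$ on $\Su$.

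\textbf{$r>\rch(\Su)$ implies $\neg P(r)$.} Equivalently, assume $P(r)$ and show $\rch(\Su)\ge r$. By Lemma~\ref{Lem:distanceToTangentCharacter} it suffices that $d_{\R^d}(q,\,p+\Tan(p,\Su))\le\frac{|p-q|^{2}}{2r}$ for all $p\ne q$ in $\Su$. This is clear when $|p-q|\ge2r$, since $d_{\R^d}(q,p+\Tan(p,\Su))\le|p-q|$. When $|p-q|<2r$, $P(r)$ gives $d_{\Su}(p,q)\le2r\arcsin\frac{|p-q|}{2r}<\infty$, so by Lemma~\ref{Menger} there is a unit-speed minimizing geodesic $\gamma:[0,L]\to\Su$ from $p$ to $q$ with $L=d_{\Su}(p,q)<\pi r$. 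Applying $P(r)$ to pairs $\gamma(s_{1}),\gamma(s_{2})$, whose $\Su$-distance equals $s_{2}-s_{1}$, yields $|\gamma(s_{2})-\gamma(s_{1})|\ge2r\sin\frac{s_{2}-s_{1}}{2r}$ for all nearby $s_{1}<s_{2}$; combined with $\tfrac12\iint_{[s_{1},s_{2}]^{2}}|\gamma'(\sigma)-\gamma'(\tau)|^{2}\,d\sigma\,d\tau=(s_{2}-s_{1})^{2}-|\gamma(s_{2})-\gamma(s_{1})|^{2}=O\!\big((s_{2}-s_{1})^{4}\big)$, a Campanato-type estimate first gives $\gamma\in C^{1,1}$ and then, by a second-order expansion at a.e.\ point, the sharp bound $|\gamma''|\le1/r$ a.e. By Schur's comparison theorem, $\gamma$ being no more curved than the circular arc of radius $r$ and length $L$ (a convex arc, since $L/r<\pi$), its tangent–chord angle satisfies $\angle(\gamma'(0),q-p)\le\frac{L}{2r}\le\arcsin\frac{|p-q|}{2r}<\tfrac\pi2$. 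Finally $\gamma'(0)\in\Tan(p,\Su)$ by Definition~\ref{def:4.3and4.4Fed}, so the ray $\{p+t\gamma'(0):t\ge0\}$ lies in $p+\Tan(p,\Su)$, whence
\[
d_{\R^d}\big(q,\,p+\Tan(p,\Su)\big)\ \le\ |p-q|\sin\angle(\gamma'(0),q-p)\ \le\ |p-q|\sin\tfrac{L}{2r}\ \le\ |p-q|\cdot\tfrac{|p-q|}{2r}\ =\ \frac{|p-q|^{2}}{2r}.
\]

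\textbf{Main obstacles.} The two genuinely technical points are (i) making the greedy/gradient curve $\sigma$ rigorous despite the lack of continuity of the tangent-cone field, and (ii) establishing that a minimizing geodesic of $\Su$ is $C^{1,1}$ with $\|\gamma''\|_{\infty}\le1/r$ from the chord bounds furnished by $P(r)$ alone; the rest — monotonicity, the case $|p-q|\ge2r$, Schur's comparison, and the elementary integrals — is routine.
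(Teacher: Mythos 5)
A preliminary remark: the paper does not prove this statement at all — it is imported verbatim as Theorem~1 of \cite{TangentVarJournal} — so there is no in-paper argument to compare against; you are in effect re-deriving that external theorem. Your skeleton (monotonicity of the property in $r$, then prove it at $r=\rch(\Su)$ and refute it for $r>\rch(\Su)$) is sound, and the reduction of the converse direction to Lemma~\ref{Lem:distanceToTangentCharacter} is the right move. But the heart of the first direction is exactly the step you leave open, and as sketched it does not go through. Your Euler scheme $x\mapsto\pi_{\Su}(x+\tau u(x))$ relies on $d(x+\tau u(x),\Su)$ being negligible compared with $\tau$; the only available statement for tangent-cone directions is the pointwise $o(\tau)$ of Lemma~\ref{Fed4.8.12}, which is not uniform in the base point. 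Summing these per-step errors over the $\sim T/\tau$ steps gives $(T/\tau)\,o(\tau)$, which does not tend to $0$ as $\tau\to0$, so neither the discrete decay of $\rho$ nor the fact that the limit curve lies in $\Su$ with length $\le T$ survives the limit without a new quantitative ingredient (e.g.\ a lemma saying tangent directions of positive-reach sets are realized quadratically, which itself needs proof), and ``invoke gradient-flow theory for a semiconcave potential'' imports precisely the kind of heavy machinery whose avoidance is the point of such elementary statements. The published proof of this half is a genuinely different, and genuinely nontrivial, geometric argument; your sketch does not yet substitute for it.

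In the converse direction your route is workable in outline but one step is wrong as stated and the whole route is heavier than necessary. The claim that ``a second-order expansion at a.e.\ point'' yields the sharp bound $|\gamma''|\le 1/r$ fails pointwise: at a point of a.e.\ twice differentiability the position error is only $o(h^2)$, which swamps the fourth-order chord--arc deficit $\tfrac{h^4}{12r^2}$ you must detect (indeed the second-order terms alone give the $h^4$ coefficient $+\tfrac14|\gamma''|^2$ rather than $-\tfrac1{12}|\gamma''|^2$; the correction comes from third-order terms a $C^{1,1}$ curve does not have pointwise). The fix is the integral identity you already wrote: at a Lebesgue point $s_0$ of $\gamma''$ one has $\frac{6}{\Delta^4}\int_I\int_I|\gamma'(\sigma)-\gamma'(\tau)|^2\,d\sigma\,d\tau\to|\gamma''(s_0)|^2$, and the chord bound caps the left-hand side by $1/r^2$; the Schur-type tangent--chord estimate must then also be justified for merely $C^{1,1}$ curves (a direct integral comparison of the chord with $\gamma'(0)$ does it). More importantly, all of this (Menger, Campanato, Schur) can be bypassed: if $\rch(\Su)<r$ there is a point $x$ with two distinct nearest points $p\neq q$ at distance $\rho=d(x,\Su)<r$ and $B^\circ(x,\rho)\cap\Su=\emptyset$; any path in $\Su$ from $p$ to $q$ avoids this ball, and since projecting onto the sphere $\partial B(x,\rho)$ only decreases length, $d_\Su(p,q)\ge 2\rho\arcsin\frac{|p-q|}{2\rho}>2r\arcsin\frac{|p-q|}{2r}$ by your own monotonicity computation, contradicting the metric property for $r$. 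This is the same sphere-projection trick the paper itself uses in the proof of Theorem~\ref{theorem:ReachEquivalentMinLocalGlobal}.
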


\subsection{Embeddings and atlases} 
We consider below an embedded, compact $n$-manifold $\M \subset \R^d$ with an atlas associated with an open cover $(U_i)_{i\in I}$:
\[
\M = \bigcup_{i\in I} U_i 
\]
and charts $f_i: U_i \rightarrow \R^n$ that are homeomorphisms on their images, where the topology on $\M$ is the one induced by the ambient metric of $\R^d$.

Note that, topologically embedded manifold does not exclude wild embeddings \cite{daverman2009embeddings}  but, as seen below,
the positive reach assumption does.


\section{Convex cones}\label{section:convexCones}

In this section we recall the definition of a convex cone, and some known consequences. We follow Chapter 2 of \cite{rockafellar1970convex} and refer to this book for further reading. After the short recap, we characterize vector spaces among convex cones by means of their homology, which we have not been able to find in the literature. 

In one of the equivalent definitions of a convex cone we use geodesic convexity on the sphere $\Sphere^{d-1}$, which we'll recall first. 
\begin{definition}\label{definition:geodesicallyConvexOnSphere}
A set of directions $S \subset \Sphere^{d-1}$ is said 
 {\em geodesically convex}  if, given $p,p'\in S$ such that $\angle p,p'  < \pi$,  the unique minimizing geodesic joining $p$ and $p'$,
i.e. the arc $pp'$ of a great circle strictly shorter than $\pi$, is contained in $S$.
\end{definition}
The notion of convex cone is standard. We use the following definitions, the first of which is standard and the second is easy to verify. 
\begin{definition}[Convex cone]\label{definition:convexCone}
For a subset $A \subset \R^d$, the following properties are equivalent (and define convex cones): 
\begin{enumerate}[label=\textnormal{(\arabic*)}]
\item \label{item:DefConvexCone1} 
$\displaystyle 
\forall a_1,a_2 \in A, \forall \lambda_1, \lambda_2 \geq 0,\: \lambda_1 a_1 +\lambda_2 a_2 \in A
$
\item 
Both 
\[
\forall a \in A, \forall \lambda \geq 0,\:  \lambda  a  \in A
\] and 
\[
A \cap\Sphere^{d-1} \quad \textrm{is geodesically convex in} \quad  \Sphere^{d-1}
\]
\end{enumerate}
\end{definition}

If $A \subset \R^d$, $A^\perp$, its dual convex cone is 
\[
A^\perp \defunder{=} \{ x \in \R^d, \forall a \in A, \langle a, x \rangle  \leq 0 \}. 
\]
We have the following, see \cite{rockafellar1970convex}. 
\begin{lemma}\label{lemma:DualOfMyDualIsMyself}
If $A$ is a closed convex cone it is the dual of its dual: $A = (A^\perp)^\perp$ .
\end{lemma}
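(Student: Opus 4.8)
The plan is to prove the two inclusions separately. The inclusion $A \subseteq (A^\perp)^\perp$ is immediate from the definitions and uses neither closedness nor convexity: if $a \in A$, then by the definition of $A^\perp$ every $x \in A^\perp$ satisfies $\langle a, x \rangle \le 0$, which is exactly the condition for $a$ to belong to $(A^\perp)^\perp$.

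For the reverse inclusion $(A^\perp)^\perp \subseteq A$ I would argue by contraposition, in the style of the bipolar theorem for cones. Suppose $y \notin A$. Since $A$ is closed and convex and $\{y\}$ is compact and disjoint from $A$, the separating hyperplane theorem yields a vector $x \in \R^d$ and a scalar $\alpha$ with $\langle a, x \rangle \le \alpha < \langle y, x \rangle$ for all $a \in A$. Because a (nonempty) convex cone contains the origin (apply Definition \ref{definition:convexCone}\ref{item:DefConvexCone1} with $\lambda_1 = \lambda_2 = 0$), taking $a = 0$ gives $\alpha \ge 0$. Next I would exploit the cone property: for any $a \in A$ and any $\lambda > 0$ one has $\lambda a \in A$, hence $\lambda \langle a, x \rangle \le \alpha$; letting $\lambda \to +\infty$ forces $\langle a, x \rangle \le 0$. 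Therefore $x \in A^\perp$. But $\langle y, x \rangle > \alpha \ge 0$, so $y$ violates the defining inequality of $(A^\perp)^\perp$, i.e.\ $y \notin (A^\perp)^\perp$. This establishes $(A^\perp)^\perp \subseteq A$, and together with the first inclusion gives the claimed equality.

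The only nontrivial ingredient is the strict separating hyperplane theorem for a closed convex set and an exterior point, which is entirely standard and available in \cite{rockafellar1970convex}; everything else is bookkeeping with the cone property. So I do not expect a genuine obstacle here. The two points to handle carefully are: using \emph{strict} separation, so that $\alpha$ can be placed strictly between $\sup_{a\in A}\langle a,x\rangle$ and $\langle y,x\rangle$; and invoking $0 \in A$ to deduce $\alpha \ge 0$, since without this one could not upgrade $\langle y, x\rangle > \alpha$ to $\langle y, x\rangle > 0$ and conclude $y\notin(A^\perp)^\perp$.
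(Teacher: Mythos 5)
Your proof is correct: both inclusions are handled properly, the strict separation plus the cone-scaling trick ($\lambda \to \infty$) and the use of $0 \in A$ are exactly the right ingredients, and your parenthetical restriction to nonempty cones is the appropriate caveat (for $A=\emptyset$ one would get $(A^\perp)^\perp=\{0\}$, but all cones arising in the paper contain the origin). The paper itself gives no proof of this lemma and simply cites \cite{rockafellar1970convex}; your argument is the standard bipolar-theorem-for-cones proof found there, so there is nothing to flag.
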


Also, from Definition \ref{definition:convexCone}, item \ref{item:DefConvexCone1}, we see that:
\begin{observation}\label{observation:ConvexConeSymetricIFFVectorSpace}
For a convex cone $A$, we have that $A=-A$ if and only if $A$ is a vector space.
\end{observation}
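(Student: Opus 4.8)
The plan is to read off both implications directly from the algebraic characterization of convex cones, Definition~\ref{definition:convexCone}\ref{item:DefConvexCone1}, so that no machinery beyond that definition is needed. For the ``only if'' direction I would note that a vector space is closed under multiplication by the scalar $-1$, so $-A \subseteq A$, and applying the same inclusion to $-A$ gives $A = -(-A) \subseteq -A$; hence $A = -A$.

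For the ``if'' direction, I would assume $A$ is a nonempty convex cone with $A = -A$ and verify the two closure properties that, together with nonemptiness, characterize linear subspaces of $\R^d$. For closure under scalar multiplication, take $a \in A$ and $\lambda \in \R$: if $\lambda \geq 0$ then $\lambda a \in A$ by Definition~\ref{definition:convexCone}\ref{item:DefConvexCone1} applied with $a_1 = a_2 = a$, $\lambda_1 = \lambda$, $\lambda_2 = 0$; if $\lambda < 0$ then $(-\lambda) a \in A$ for the same reason, whence $\lambda a = -\big((-\lambda)a\big) \in -A = A$. In particular $0 \in A$. For closure under addition, take $a_1, a_2 \in A$ and apply Definition~\ref{definition:convexCone}\ref{item:DefConvexCone1} with $\lambda_1 = \lambda_2 = 1$ to get $a_1 + a_2 \in A$. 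A nonempty subset of $\R^d$ closed under addition and scalar multiplication is a vector space, which finishes the argument.

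I do not expect any genuine obstacle here: the statement is immediate from the definitions, and this is why it is phrased as an observation rather than a lemma. The only point that needs a word of care is the degenerate case — the empty set vacuously satisfies $A = -A$ without being a vector space — which is why one works with nonempty convex cones (in the applications of this paper $A$ is always a tangent or normal cone and hence contains the origin), or equivalently adopts the convention that a convex cone contains $0$.
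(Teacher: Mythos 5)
Your proof is correct and takes exactly the route the paper has in mind: the paper states the observation as an immediate consequence of Definition~\ref{definition:convexCone}, item~\ref{item:DefConvexCone1}, and your write-up simply makes that reading explicit, with a sensible extra remark about the degenerate empty-set case (moot here, since the cones in question always contain the origin).
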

\begin{observation} 
\label{Obs:dualVspace}
A convex cone $A$ is a $k$-dimensional vector space if an only if $A^\perp$ is a $(n-k)$-dimensional vector space.
\end{observation}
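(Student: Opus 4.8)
The plan is to reduce both directions to one elementary fact about polar cones: for a \emph{linear subspace} $V \subseteq \R^d$, the dual cone $V^\perp$ coincides with the ordinary orthogonal complement of $V$, and is therefore itself a linear subspace, of the complementary dimension $d - \dim V$. I would establish this first. If $V$ is a subspace and $x \in V^\perp$, then for each $a \in V$ both $a$ and $-a$ lie in $V$, so $\langle a,x\rangle \le 0$ and $-\langle a,x\rangle \le 0$, which forces $\langle a,x\rangle = 0$; conversely every $x$ orthogonal to all of $V$ obviously satisfies $\langle a,x\rangle \le 0$ for all $a \in V$. Hence $V^\perp$ is exactly the Euclidean orthogonal complement of $V$, with $\dim V^\perp = d - \dim V$.

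For the forward implication, I would assume $A$ is a $k$-dimensional vector space and apply this fact with $V = A$: then $A^\perp$ is the orthogonal complement of $A$, hence a linear subspace of dimension $d-k$.

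For the converse, I would assume $A^\perp$ is a $(d-k)$-dimensional vector space and apply the same fact with $V = A^\perp$: then $(A^\perp)^\perp$ is the orthogonal complement of $A^\perp$, a linear subspace of dimension $d-(d-k)=k$. Since $A$ is a closed convex cone, Lemma \ref{lemma:DualOfMyDualIsMyself} gives $A = (A^\perp)^\perp$, so $A$ is precisely that $k$-dimensional subspace.

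I do not expect any genuine obstacle: this is essentially bookkeeping with polar cones, the one substantive ingredient being the biduality of Lemma \ref{lemma:DualOfMyDualIsMyself}. The only point deserving a word of care is that this lemma presupposes $A$ closed; if one wants the statement for an arbitrary, possibly non-closed, convex cone $A$, one argues instead that $\overline{A} = (A^\perp)^\perp$ is a $k$-dimensional subspace $W$, and then $\operatorname{relint}(A) = \operatorname{relint}(\overline{A}) = W$ forces $W \subseteq A \subseteq \overline{A} = W$, so again $A = W$ is a $k$-dimensional vector space.
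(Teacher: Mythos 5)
Your proof is correct and is exactly the argument the paper has in mind: the statement is left as an unproved observation, and your two ingredients (the dual cone of a linear subspace is its Euclidean orthogonal complement, plus the biduality of Lemma \ref{lemma:DualOfMyDualIsMyself}) are precisely what makes it immediate. The care you take about non-closed cones is fine but not needed in the paper's usage, where $A$ is always a closed (tangent or normal) cone; note also that the $n$ in the statement should be read as the ambient dimension $d$, as in your write-up.
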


Moreover vector spaces can be distinguished among convex cones by the homology of their
 intersection with the unit sphere $\Sphere^{d-1}$.
More precisely, we have:
\begin{lemma}\label{lemma:HomologyCCone1}
For a convex cone $A\subset \R^d$ and a non negative integer $k\leq d$, the following are equivalent:
\begin{itemize}
\item $A$ is a $k$-dimensional vector space,
\item $k\geq 1$ and the $(k-1)$-homology of $A \cap \Sphere^{d-1}$ is non-trivial or $k=0$ and $A \cap \Sphere^{d-1} = \emptyset$.
\end{itemize}
\end{lemma}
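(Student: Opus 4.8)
\emph{Plan.} The plan is to prove the two implications separately; everything reduces to a single topological fact about convex cones. I take ``homology'' in the statement to be \emph{reduced} homology --- this is what makes the statement correct (recall $\widetilde H_{j}(\Sphere^{j})\cong\mathbb Z$ while $\widetilde H_{i}(\Sphere^{j})=0$ for $i\neq j$) --- and it is also why the value $k=0$ (where the relevant sphere would be $\Sphere^{-1}=\emptyset$) is stated separately. The forward implication is immediate: if $A$ is a $k$-dimensional linear subspace, then $A=\{0\}$ and $A\cap\Sphere^{d-1}=\emptyset$ when $k=0$, while for $k\geq 1$ an orthonormal basis of $A$ identifies $A\cap\Sphere^{d-1}$ with the round sphere $\Sphere^{k-1}$, so $\widetilde H_{k-1}(A\cap\Sphere^{d-1})\cong\mathbb Z\neq 0$.

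For the converse I will use the following claim: \emph{if a convex cone $A\subset\R^d$ is not a linear subspace, then $A\cap\Sphere^{d-1}$ is contractible}. Granting it, suppose the second bullet holds. If $k=0$ and $A\cap\Sphere^{d-1}=\emptyset$, then $A$ contains no unit vector, hence (being a cone) $A=\{0\}$, the $0$-dimensional subspace. If instead $k\geq 1$ and $\widetilde H_{k-1}(A\cap\Sphere^{d-1})\neq 0$, then $A\neq\{0\}$ (otherwise the intersection is empty, with vanishing reduced homology in degree $k-1\geq 0$) and $A$ is a linear subspace (otherwise, by the claim, the intersection is contractible, again with vanishing reduced homology in degree $k-1$); so $A$ is an $m$-dimensional subspace with $m\geq 1$, $A\cap\Sphere^{d-1}\cong\Sphere^{m-1}$, and $\widetilde H_{k-1}(\Sphere^{m-1})\neq 0$ forces $m=k$. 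Thus $A$ is a $k$-dimensional linear subspace.

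It remains to prove the claim, and the one idea it needs is this. Since $A$ is not a subspace, Observation~\ref{observation:ConvexConeSymetricIFFVectorSpace} gives $A\neq -A$, hence $A\not\subseteq -A$, so there is $u_0\in A$ with $-u_0\notin A$; rescaling, I may assume $\lVert u_0\rVert=1$ (which keeps $-u_0$ outside the cone $A$), so $u_0\in C:=A\cap\Sphere^{d-1}$. I contract $C$ along straight segments through the cone towards $u_0$:
\[
F(c,t)=\frac{(1-t)\,c+t\,u_0}{\bigl\lVert(1-t)\,c+t\,u_0\bigr\rVert},\qquad (c,t)\in C\times[0,1].
\]
The numerator lies in $A$ because $A$ is a convex cone and $1-t,t\geq 0$, and then $F(c,t)\in C$ since positive rescaling keeps a vector in the cone; the denominator never vanishes, since $(1-t)c+tu_0=0$ with $t\in(0,1]$ would make $u_0$ a negative multiple of $c$, and both being unit vectors this forces $u_0=-c$, hence $-u_0=c\in A$, a contradiction. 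Hence $F$ is a homotopy in $C$ from $\mathrm{id}_C$ to the constant map at $u_0$, so $C$ is contractible.

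I do not anticipate a real obstacle: the only non-routine point is to notice that the failure of $A$ to be symmetric furnishes a ``safe'' basepoint $u_0$ with $-u_0\notin A$, towards which one can straight-line-contract $C$ while remaining inside the cone; the rest is elementary convexity together with homotopy invariance of reduced homology. This is precisely the elementary replacement for the CAT$(k)$-theoretic argument used in the literature.
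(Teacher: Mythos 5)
Your proof is correct and takes essentially the same approach as the paper's: both use Observation~\ref{observation:ConvexConeSymetricIFFVectorSpace} to extract a unit vector $u_0\in A$ with $-u_0\notin A$ and then contract $A\cap\Sphere^{d-1}$ to $u_0$ along spherical geodesics (your normalized straight-line homotopy is just an explicit formula for that geodesic contraction). Your remark that the statement needs \emph{reduced} homology is a valid clarification the paper leaves implicit --- it is required both for the forward direction when $k=1$ (ordinary $H_0$ of $\Sphere^0$ is $\mathbb Z^2$, and ordinary $H_0$ of a contractible nonempty set is still $\mathbb Z$, nontrivial) and for the closing step ``contractible $\Rightarrow$ trivial homology.''
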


Vector spaces can also be distinguished among convex cones by the homology of the  complement of their dual cone in the unit ball.
More precisely, we have:
\begin{lemma}\label{lemma:HomologyCCone2}
Let $A\subset \R^d$ be a closed convex cone and let $k\leq d$ be a non negative integer. 
The following are equivalent:
\begin{itemize}
\item The cone $A$ is a $k$-dimensional vector space.
\item If $k>0$, the $(k-1)$-homology of $B(0,1) \setminus  A^\perp$  is non-trivial and if $k=0$, $B(0,1) \setminus  A^\perp$ is empty.
\end{itemize}
\end{lemma}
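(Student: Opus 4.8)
The plan is to show that $B(0,1)\setminus A^\perp$ is homotopy equivalent to $A\cap\Sphere^{d-1}$; the lemma then follows directly from Lemma~\ref{lemma:HomologyCCone1}. The link between the two sets is the nearest-point projection $\pi_A\colon\R^d\to A$ onto the closed convex cone $A$ (which is $1$-Lipschitz, hence continuous). I would first record the relevant elementary convex analysis (see \cite{rockafellar1970convex}): writing $w:=x-\pi_A(x)$, the variational inequality characterising projection onto a convex set yields $w\perp\pi_A(x)$ and $\langle a,w\rangle\le0$ for all $a\in A$, i.e.\ $w\in A^\perp$; in particular $\pi_A(x)=0$ if and only if $x\in A^\perp$. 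Since moreover $A\cap A^\perp=\{0\}$ (if $x$ lies in both then $\lvert x\rvert^2=\langle x,x\rangle\le0$), the set $A\cap\Sphere^{d-1}$ lies inside $\Sphere^{d-1}\setminus A^\perp$, and $r(x):=\pi_A(x)/\lvert\pi_A(x)\rvert$ is a well-defined continuous retraction of $\Sphere^{d-1}\setminus A^\perp$ onto $A\cap\Sphere^{d-1}$.

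The key step is to promote $r$ to a strong deformation retraction by sliding $x$ to $r(x)$ along the minimising great-circle arc. Put $u=\pi_A(x)$, $w=x-\pi_A(x)$, and $\theta_0(x)=\arccos\lvert u\rvert$; here $0<\lvert u\rvert\le1$, so $\theta_0(x)\in[0,\pi/2)$, and one computes $\lvert w\rvert=\sin\theta_0$. Define
\[
\gamma(x,t)=\cos\!\big((1-t)\theta_0(x)\big)\,\frac{u}{\lvert u\rvert}+\psi_t\!\big(\theta_0(x)\big)\,w,\qquad t\in[0,1],
\]
with $\psi_t(s)=\sin\!\big((1-t)s\big)/\sin s$ for $s>0$ and $\psi_t(0)=1-t$. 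It is then routine that $\gamma$ is jointly continuous (the apparent singularity at $w=0$ is harmless because $\psi_t$ extends continuously there and $w$ vanishes), that $\lvert\gamma(x,t)\rvert=1$, that $\gamma(x,0)=x$ and $\gamma(x,1)=r(x)$, and that $\gamma(x,t)=x$ for all $t$ whenever $x\in A\cap\Sphere^{d-1}$. The one substantive point is that $\gamma$ never enters $A^\perp$: since $u\in A$, from $\langle u,\gamma(x,t)\rangle=\cos\!\big((1-t)\theta_0\big)\lvert u\rvert>0$ we get $\gamma(x,t)\notin A^\perp$. Hence $\Sphere^{d-1}\setminus A^\perp$ deformation retracts onto $A\cap\Sphere^{d-1}$.

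Next I would observe that $B(0,1)\setminus A^\perp$ radially deformation retracts onto $\Sphere^{d-1}\setminus A^\perp$: because $0\in A^\perp$ and $A^\perp$ is a cone, the homotopy $x\mapsto\big((1-t)+t/\lvert x\rvert\big)x$ stays inside the ball, never meets $A^\perp$, and fixes the unit sphere (for an open ball one retracts onto a smaller concentric sphere, with the same conclusion). Therefore $B(0,1)\setminus A^\perp$ and $A\cap\Sphere^{d-1}$ have isomorphic homology in every degree, and one of them is empty exactly when the other is.

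The lemma now follows by feeding this into Lemma~\ref{lemma:HomologyCCone1}. For $k\ge1$: $A$ is a $k$-dimensional vector space $\iff$ the $(k-1)$-homology of $A\cap\Sphere^{d-1}$ is non-trivial $\iff$ the $(k-1)$-homology of $B(0,1)\setminus A^\perp$ is non-trivial. For $k=0$: $A$ is the $0$-dimensional vector space $\iff A\cap\Sphere^{d-1}=\emptyset\iff B(0,1)\setminus A^\perp=\emptyset$. I expect the only real obstacle to be the verification in the second paragraph --- checking simultaneously that $\gamma$ is continuous across $A\cap\Sphere^{d-1}$ (where $w$ vanishes) and that it avoids $A^\perp$; everything else is bookkeeping with projections and radial homotopies.
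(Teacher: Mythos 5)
Your proof is correct, and it takes a genuinely different (and arguably tidier) route than the paper's. The paper re-runs the dichotomy of Lemma~\ref{lemma:HomologyCCone1} from scratch: after the radial retraction to $\Sphere^{d-1}\setminus A^\perp$, it splits into the case where $A$ is a $k$-dimensional vector space (giving a cylinder $\Sphere^{k-1}\times(-1,1)^{d-k}$) and the case where it is not (contracting $\Sphere^{d-1}\setminus A^\perp$ via the geodesic star-shapedness of $A^\perp\cap\Sphere^{d-1}$). You instead establish a single homotopy equivalence $B(0,1)\setminus A^\perp\simeq A\cap\Sphere^{d-1}$ via the Moreau-type decomposition $x=\pi_A(x)+w$ and a great-circle sliding homotopy, and then feed everything through Lemma~\ref{lemma:HomologyCCone1}. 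This buys you a cleaner logical structure: Lemma~\ref{lemma:HomologyCCone2} becomes a corollary of Lemma~\ref{lemma:HomologyCCone1} rather than a parallel argument, and you avoid the paper's slightly delicate ``conversely'' step, where one must separately account for the possibility that $A$ is a vector space of some dimension $k'\neq k$ (in which case $A^\perp=-A^\perp$ still holds and the paper's appeal to Observation~\ref{observation:ConvexConeSymetricIFFVectorSpace} does not apply directly); your homotopy-equivalence approach handles all cases uniformly. The ingredients you use (projection onto a closed convex cone is $1$-Lipschitz, the Moreau decomposition $u\perp w$ with $u\in A$, $w\in A^\perp$, so $\pi_A(x)=0\iff x\in A^\perp$) are standard facts from \cite{rockafellar1970convex} that the paper already draws on, so no extra machinery is smuggled in. The verifications you flag as ``the only real obstacle'' --- joint continuity of $\gamma$ across $A\cap\Sphere^{d-1}$ (where $w\to0$ compensates the finite limit $\psi_t(0^+)=1-t$, and $\psi_t$ is bounded on $[0,\pi/2)$) and $\gamma$ avoiding $A^\perp$ (via $\langle u,\gamma(x,t)\rangle=\cos((1-t)\theta_0)\lvert u\rvert>0$ since $\theta_0<\pi/2$) --- do go through as you describe, so the argument is complete.
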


{
\begin{proof}[Main idea of the proofs of Lemmas \ref{lemma:HomologyCCone1} and \ref{lemma:HomologyCCone2}]  
The key observation for both Lemma \ref{lemma:HomologyCCone1} and Lemma \ref{lemma:HomologyCCone2} is that a cone is not a vector space then the intersection with a sphere is geodesically star shaped. 
\end{proof}}

\section{The (generalized) tangent spaces of embedded manifolds}
In this section we establish that the tangent spaces of topologically embedded manifolds are vector spaces of the right dimension. The argument is based on homotopy theory and deformations retract in particular, which uses a more general statement that we proof first. 

\subparagraph{Closest point projection in a neighborhood gives a deformation retract}
We know  (Lemma \ref{Fed4.15.1} above and Lemmas 5 and 9 in \cite{attali2015geometry}), that, if $\rho < \rch(\Su)$,  
then $\rch \left( \Su \cap B(p, \rho) \right) \geq \rch(\Su)$ and that $\Su \cap B(p, \rho)$
is contractible. Moreover one has:

\begin{lemma}\label{lemma:DefRetractOfBallMinusNormal}
If $0< \rho < \rch(\Su)$, the projection $\pi_{\Su \cap B(p, \rho)}$ defines a deformation retract from $B(p, \rho) \setminus \Big( B(p, \rho) \cap \big( p + \Nor(p, \Su) \big) \Big)$ to $\Su \cap B(p, \rho) \setminus \{p\} $.
\end{lemma}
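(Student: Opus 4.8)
The plan is to construct the deformation retract explicitly as a two-stage homotopy and then verify the details. Write $\Su' = \Su \cap B(p,\rho)$, which by the recalled facts has $\rch(\Su') \geq \rch(\Su) > \rho$, so in particular $B(p,\rho)$ contains no point of the medial axis $\ax(\Su')$ (since any such point would be at distance $<\rho$ from $\Su'$, contradicting $\rch(\Su') > \rho$). Hence $\pi_{\Su'}$ is a well-defined continuous map on all of $B(p,\rho)$ by Lemma \ref{Fed4.8.4}. The candidate retraction is the map $H: \bigl(B(p,\rho) \setminus (p + \Nor(p,\Su))\bigr) \times [0,1] \to B(p,\rho)$ given first by the straight-line homotopy $x \mapsto (1-t)x + t\,\pi_{\Su'}(x)$ that pushes a point to the set $\Su'$, composed with (or followed by) a second homotopy inside $\Su'$ that pushes $\Su' \setminus \{p\}$ — wait, more carefully: the target is $\Su' \setminus \{p\}$, so the first stage $x \mapsto \pi_{\Su'}(x)$ lands in $\Su'$ and I must argue it never hits $p$ when $x \notin p + \Nor(p,\Su)$; then no second stage is needed.

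The key steps, in order, are: \textbf{(i)} Show $\pi_{\Su'}$ is defined and continuous on $B(p,\rho)$ — immediate from $\rch(\Su') > \rho$ and Lemma \ref{Fed4.8.4}. \textbf{(ii)} Show that the straight-line homotopy $h_t(x) = (1-t)x + t\,\pi_{\Su'}(x)$ stays inside $B(p,\rho)$: since both $x$ and $\pi_{\Su'}(x) \in \Su' \subset B(p,\rho)$ lie in the convex set $B(p,\rho)$, so does every convex combination. \textbf{(iii)} Show $H$ maps $B(p,\rho) \setminus (p+\Nor(p,\Su))$ into $\Su' \setminus \{p\}$ for $t=1$ and, more importantly, that the whole homotopy avoids $p + \Nor(p,\Su)$ appropriately — actually what must be checked is that for $x \notin p + \Nor(p,\Su)$ we have $\pi_{\Su'}(x) \neq p$, and that the domain is preserved, i.e. if $x \notin p+\Nor(p,\Su)$ then $h_t(x) \notin p + \Nor(p,\Su)$ for all $t$ — but since $h_t(x)$ need not stay in the domain, the cleaner formulation is simply that $H$ restricted to $t=0$ is the identity, $H$ at $t=1$ is the retraction onto $\Su' \setminus \{p\}$, and $H(\cdot, t)$ restricted to $\Su'\setminus\{p\}$ is the identity for all $t$ (since $\pi_{\Su'}$ fixes $\Su'$). \textbf{(iv)} The crucial point: $\pi_{\Su'}(x) = p \iff x \in p + \Nor(p,\Su')$, and one must reconcile $\Nor(p,\Su')$ with $\Nor(p,\Su)$ inside $B(p,\rho)$. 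By Lemma \ref{Fed4.8.12} applied to $\Su'$ (valid since $\rho < \rch(\Su') $, using a slightly smaller radius if needed), $\{x \in B(p,\rho) : \pi_{\Su'}(x) = p\} = (p + \Nor(p,\Su')) \cap B(p,\rho)$; and since $\Su$ and $\Su'$ agree in a neighbourhood of $p$, their tangent cones at $p$ agree, hence so do their normal cones, giving $\Nor(p,\Su') = \Nor(p,\Su)$. Therefore the preimage of $p$ under $\pi_{\Su'}$ meeting $B(p,\rho)$ is exactly $(p+\Nor(p,\Su)) \cap B(p,\rho)$, which is precisely what we removed.

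The main obstacle I anticipate is step \textbf{(iv)}: carefully justifying that removing $p + \Nor(p,\Su)$ is exactly the right thing, i.e. that $\pi_{\Su'}^{-1}(p) \cap B(p,\rho) = (p + \Nor(p,\Su)) \cap B(p,\rho)$. This requires (a) the identification $\Nor(p,\Su) = \Nor(p,\Su')$, which follows because $\Su$ and $\Su'$ coincide on $B^\circ(p,\rho)$ so the tangent cones (defined by a local limiting condition, Definition \ref{def:4.3and4.4Fed}) coincide, and dually the normal cones; and (b) the characterization of the fiber $\pi_{\Su'}^{-1}(p)$ via Lemma \ref{Fed4.8.12}, with attention to the open-versus-closed ball and the strict inequality $\lfs(p) > \rho$ — one handles this by exhausting $B^\circ(p,\rho)$ by closed balls $B(p,\rho')$ with $\rho' < \rho < \rch(\Su')$ so that $\lfs_{\Su'}(p) \geq \rch(\Su') > \rho'$ and the lemma applies on each. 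A secondary, routine obstacle is checking continuity of $H$ jointly in $(x,t)$, which is clear from continuity of $\pi_{\Su'}$, and checking that $H(\cdot,t)$ genuinely maps the open domain to itself is sidestepped by noting that a deformation retract only requires $r = H(\cdot,1)$ to map the space onto the subspace with $H$ being a homotopy from the identity to $\iota \circ r$ rel the subspace — all of which the above construction delivers once (iv) is in place.
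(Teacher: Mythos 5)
Your proposal takes essentially the same approach as the paper: continuity of the projection onto $\Su \cap B(p,\rho)$ (via $\rch(\Su\cap B(p,\rho))\geq\rch(\Su)>\rho$), the straight-line homotopy, the reduction to the biconditional that, inside $B(p,\rho)$, the fiber of $\pi_{\Su\cap B(p,\rho)}$ over $p$ is exactly $p+\Nor(p,\Su)$, and the identification $\Nor(p,\Su)=\Nor(p,\Su\cap B(p,\rho))$ via the locality of Definition~\ref{def:4.3and4.4Fed} together with Lemma~\ref{Fed4.8.12}. One small correction: the intermediate stages of the homotopy \emph{do} need to stay in the domain (a deformation retract $H\colon X\times I\to X$ has codomain $X$, so this is not something one can ``sidestep''), but it holds automatically because for $t<1$ the point $(1-t)x+t\,\pi_{\Su\cap B(p,\rho)}(x)$ lies on the segment from $x$ to its unique nearest point and therefore still has nearest point $\pi_{\Su\cap B(p,\rho)}(x)\neq p$, so the biconditional places it outside $p+\Nor(p,\Su)$.
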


{\begin{proof}[Sketch of proof] This follows by combining a number of results of Federer, in particular the continuity of the closest point projection. 
\end{proof}}

\subparagraph{Tangent cones of embedded \texorpdfstring{$\mathbf{n}$}{n}-manifolds with positive reach are \texorpdfstring{$\mathbf{n}$}{n}-vector spaces}

 \begin{lemma}\label{lemma:embeddedManifoldPositiveReachThenTangentSpaceIsVectorial}
Let $\M \subset \R^d$ be a topologically embedded  $n$-manifold in $\R^d$, with positive reach.
Then for any $p \in \M$, $\Tan(p,\M)$ is a $n$-dimensional vector space.
 \end{lemma}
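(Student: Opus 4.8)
The plan is to use the homological characterization of vector spaces among convex cones (Lemma~\ref{lemma:HomologyCCone2}) together with the deformation retract provided by Lemma~\ref{lemma:DefRetractOfBallMinusNormal}, and the fact that $\M$ is a topological manifold so that its local homology is known. Fix $p \in \M$ and choose $\rho$ with $0 < \rho < \rch(\M)$; we will compute the homology of $B(p,\rho) \setminus (p + \Nor(p,\M))$ in two different ways and compare.

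First I would observe that $\Tan(p,\M)$ is a closed convex cone (Federer, Definition~\ref{def:4.3and4.4Fed} and Lemma~\ref{Fed4.8.12}), and that $\Nor(p,\M) = \Tan(p,\M)^\perp$ by Lemma~\ref{Fed4.8.12}; moreover $\Tan(p,\M) = (\Nor(p,\M))^\perp$ by Lemma~\ref{lemma:DualOfMyDualIsMyself}. So if I set $A = \Tan(p,\M)$, then $A^\perp = \Nor(p,\M)$, and Lemma~\ref{lemma:HomologyCCone2} says: $A$ is a $k$-dimensional vector space iff the $(k-1)$-homology of $B(0,1)\setminus A^\perp$ is non-trivial (for $k > 0$), and the space $B(0,1)\setminus A^\perp$ is empty iff $k=0$. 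Translating to our ball centered at $p$ of radius $\rho$ (homeomorphic by scaling/translation), I need to identify the homology of $B(p,\rho) \setminus (p + \Nor(p,\M))$.

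The second computation uses Lemma~\ref{lemma:DefRetractOfBallMinusNormal}: this space deformation retracts onto $(\M \cap B(p,\rho)) \setminus \{p\}$. Now $\M$ is a topological $n$-manifold (without boundary), so $\M \cap B(p,\rho)$ is contractible (Lemma~\ref{Fed4.15.1}) and is in fact a neighborhood of $p$ homeomorphic to a space whose one-point-punctured homology is that of an $n$-manifold punctured at a point: by excision and the local homology of a manifold, the reduced homology of $(\M\cap B(p,\rho))\setminus\{p\}$ is non-trivial precisely in degree $n-1$ (it equals $H_{n-1}(S^{n-1})$ when $n \geq 1$; for $n = 0$ a $0$-manifold is a point and the punctured set is empty). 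Here I would need to be a little careful: $\M \cap B(p,\rho)$ need not itself be an open subset of $\M$ nor a manifold-with-nice-boundary, but since it is contractible and contains an open neighborhood of $p$ in $\M$, a Mayer--Vietoris or direct excision argument identifies $\tilde H_*((\M\cap B(p,\rho))\setminus\{p\})$ with the local homology $H_*(\M, \M\setminus\{p\})$ shifted by one, which by topological invariance of manifold local homology is $\mathbb{Z}$ in degree $n$. Combining: $B(p,\rho)\setminus(p+\Nor(p,\M))$ has non-trivial $(n-1)$-homology (and is non-empty when $n\geq 1$; when $n=0$ it is empty).

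Finally I would feed this into Lemma~\ref{lemma:HomologyCCone2} with $A = \Tan(p,\M)$, $A^\perp = \Nor(p,\M)$, $k = n$: the non-triviality of the $(n-1)$-homology of $B(0,1)\setminus A^\perp$ forces $A = \Tan(p,\M)$ to be an $n$-dimensional vector space (and the $n=0$ case is handled by the emptiness clause). I expect the main obstacle to be the clean identification, with full rigor, of the homotopy type of $(\M\cap B(p,\rho))\setminus\{p\}$ — specifically justifying that removing the point $p$ from this contractible (but possibly irregular at the boundary) neighborhood yields the homology of $S^{n-1}$; this is where topological manifold-ness is essential and where one must invoke excision and invariance of domain / local homology rather than any smooth structure. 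The convex-cone lemmas and Federer's projection results do the rest almost formally.
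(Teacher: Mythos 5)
Your proposal is correct and follows essentially the same route as the paper: choose $\rho<\rch(\M)$, use Lemma~\ref{lemma:DefRetractOfBallMinusNormal} to identify the homotopy type of $B(p,\rho)\setminus(p+\Nor(p,\M))$ with that of $(\M\cap B(p,\rho))\setminus\{p\}$, observe that the latter has non-trivial $(n-1)$-homology because $\M$ is a topological $n$-manifold and $\M\cap B(p,\rho)$ is contractible, and then conclude via Lemma~\ref{lemma:HomologyCCone2}. The only cosmetic difference is that the paper routes the homology computation through a chart $f:U\to\R^n$ (choosing $\rho<d(p,\M\setminus U)$ and puncturing a contractible subset of $\R^n$), whereas you phrase the same excision step intrinsically via the local homology $H_*(\M,\M\setminus\{p\})$.
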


\begin{proof}
By definition of an embedded manifold, $\M$ has the induced topology. 
Let $p\in U$, where $U \subset \M$ is the open domain of a chart in the atlas of $\M$. This means that there exists a map $f: U \rightarrow \R^{n}$ that is a homeomorphism on its image.
We note that because $\M$ is compact, or equivalently closed and bounded, $\M \setminus U$ is compact too and thus $d(p,  \M \setminus U) >0$. 
We now choose $\rho$ 
\[
 0 < \rho < \min \Big(\rch( \M), d(p,  \M \setminus U) \Big).
\]
Note that in particular $\M \cap B(p, \rho) \subset U$.

By Lemma \ref{Fed4.15.1} the set $\M \cap B(p, \rho)$ is contractible. 
Because $f$ is an homeomorphism on its image $f( \M \cap B(p, \rho))$ is a contractible subset of $\R^n$ containing a neighborhood of $f(p)$.
Therefore, we know that the $n-1$ homology of $f( \Su \cap B(p, \rho)) \setminus \{f(p)\} $ is non-trivial. 
Using again that $f$ is an homeomorphism on its image, we see that the set $f( \Su \cap B(p, \rho)) \setminus \{f(p)\} $ is homeomorphic to $\Su \cap B(p, \rho) \setminus \{p\}$. By Lemma \ref{lemma:DefRetractOfBallMinusNormal},  $\Su \cap B(p, \rho) \setminus \{p\}$ has the same homotopy type as 
$B(p, \rho) \setminus \left( p + \Nor(p, \Su) \right) $.

We get that  the $n-1$ homology of $B(p, \rho) \setminus \left( p + \Nor(p, \Su) \right) $ is non-trivial, which, by Lemma 
\ref{lemma:HomologyCCone2} gives 
that  $\Tan(p, \M)$ is a $n$-dimensional vector space.
\end{proof}

\section{ Semi-continuity of normal cones and \texorpdfstring{$\mathbf{C^1}$}{once differentiable} embedding}\label{section:C1Embedding}

In this section we first discuss a weak form of continuity for tangent and normal cones of general sets of positive reach before focusing on manifolds.  

\subparagraph{Semi-continuity of $p \mapsto \Nor(p, \Su)$ and $p \mapsto \Tan(p, \Su)$ for sets of positive reach}

For a cone $X$, $X^{\angle \epsilon}$ is the set of vectors making an angle less than $\epsilon$ with some vector in $X$.
For sets with positive reach, we have a semi-continuity of the normal and tangent cones, more precisely,
\begin{lemma} \label{lemma:NorIsSemiContinuous}
If $\rch(\Su) >0$, then
\begin{equation}\label{equation:SemiContinuityNormal}
\forall p \in \Su, \forall \epsilon >0, \exists \alpha >0 \mid \:  p' \in \Su \cap B(p, \alpha) \Rightarrow \Nor(p', \Su) \subset \Nor(p, \Su)^{\angle \epsilon}.
\end{equation}
\end{lemma}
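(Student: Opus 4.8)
The plan is to argue by contradiction using a compactness/sequential argument together with the characterization of normal vectors via the closest point projection (Lemma \ref{Fed4.8.12}). Suppose the statement fails: there is a point $p \in \Su$ and an $\epsilon_0 > 0$ such that for every $\alpha > 0$ there is a point $p' \in \Su \cap B(p,\alpha)$ and a unit normal vector $v' \in \Nor(p',\Su)$ with $\angle(v', \Nor(p,\Su)) \geq \epsilon_0$. Taking $\alpha = 1/k$, I get a sequence $p_k \to p$ in $\Su$ and unit vectors $v_k \in \Nor(p_k,\Su)$ with $\angle(v_k, \Nor(p,\Su)) \geq \epsilon_0$. Since the unit sphere is compact, after passing to a subsequence $v_k \to v$ with $|v| = 1$, and by continuity of the angle function $\angle(v, \Nor(p,\Su)) \geq \epsilon_0 > 0$, so in particular $v \notin \Nor(p,\Su)$.

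Next I would fix a radius $\rho$ with $0 < \rho < \rch(\Su)$, so that $\rho < \lfs(q)$ for every $q \in \Su$. By Lemma \ref{Fed4.8.12} applied at each $p_k$, the point $x_k := p_k + \rho v_k$ satisfies $\pi_\Su(x_k) = p_k$ (using that $\rho |v_k| = \rho < \lfs(p_k)$ and $v_k/|v_k| \in \Nor(p_k,\Su)$). Now $x_k = p_k + \rho v_k \to p + \rho v =: x$. Since $|x_k - \pi_\Su(x_k)| = |x_k - p_k| = \rho \leq \rch(\Su)$, the points $x_k$ lie in the set where $\pi_\Su$ is well-defined and continuous (Lemma \ref{Fed4.8.4}, or more quantitatively Lemma \ref{Fed4.8.8}); passing to the limit, $\pi_\Su(x) = \lim_k \pi_\Su(x_k) = \lim_k p_k = p$. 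Moreover $|x - p| = \rho$, so $x = p + \rho v$ with $\pi_\Su(p + \rho v) = p$ and $|v| = 1$. Applying Lemma \ref{Fed4.8.12} in the other direction (at the point $p$, with the same $\rho < \lfs(p)$), this says precisely that $\rho v \in \Nor(p,\Su)$, hence $v \in \Nor(p,\Su)$ since the normal cone is a cone. This contradicts $v \notin \Nor(p,\Su)$ from the previous paragraph, which establishes \eqref{equation:SemiContinuityNormal}.

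The main obstacle — really the only subtle point — is making sure the closest-point projection is genuinely continuous at the relevant points and that one may pass to limits; this is where the bound $\mu = \rho \leq \rch(\Su)$ in Lemma \ref{Fed4.8.8} is used (one should take $\rho$ strictly below $\rch(\Su)$ so that the Lipschitz factor $\rch(\Su)/(\rch(\Su) - \mu)$ is finite, giving uniform control). The only other thing to be careful about is the direction of Lemma \ref{Fed4.8.12}: it is stated as an equality of sets, so it gives both the implication "$v$ normal at $p_k$ $\Rightarrow$ $\pi_\Su(p_k + \rho v) = p_k$" used going forward, and "$\pi_\Su(p + \rho v) = p$ $\Rightarrow$ $\rho v$ normal at $p$" used coming back, and no separate argument is needed. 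The semi-continuity of $p \mapsto \Tan(p,\Su)$ then follows by duality: since $\Tan(p,\Su) = \Nor(p,\Su)^\perp$ (Lemma \ref{Fed4.8.12}) and dualization of closed convex cones is order-reversing and inclusion $X \subset Y^{\angle\epsilon}$ dualizes to a corresponding angular containment of the dual cones, the analogous statement for tangent cones is immediate; I would state this as a short corollary or remark rather than repeating the argument.
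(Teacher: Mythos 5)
Your proof is correct and relies on the same core ingredients as the paper's proof: the characterization of the normal cone via the closest-point projection at a fixed offset distance $\rho < \rch(\Su)$ (Lemma~\ref{Fed4.8.12}) together with the continuity of $\pi_\Su$ near $\Su$ (Lemmas~\ref{Fed4.8.4} and~\ref{Fed4.8.8}). The only difference is organizational: you argue by contradiction via sequential compactness of unit normals, whereas the paper runs a direct argument using a nested family of compact sets $K_j$ on the offset boundary $\Su^{\partial\oplus\rho}$ whose empty intersection forces one $K_n$ to be empty — two equivalent ways of exploiting the same compactness.
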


\begin{proof}[Sketch of proof] The core of the proof is rather technical, but requires only basic topological properties, and relies on the manipulation of offsets. 
\end{proof}

The semi-continuity of normal cones translates into a symmetric result for tangent cones. 
This is easy to see: For set $\Su$ with positive reach, 
$\Tan(p, \Su) = \Nor(p,\Su)^\perp$, and, if $X$ and $Y$ are convex cones, then:
\begin{align}
 X\subset Y^{\angle \epsilon} &\iff   \forall x \in X \setminus \{0\},    \forall y' \in Y^\perp \setminus \{0\}, \: \angle x, y' > \frac{\pi}{2}  - \epsilon   \nonumber \\
 & \iff Y^\perp \subset \big(X^\perp\big)^{\angle \epsilon}.
\nonumber
\end{align}
Lemma \ref{lemma:NorIsSemiContinuous} therefore similarly (but inclusion reversed) gives semi-continuity of $p \mapsto \Tan(p, \Su)$
\begin{equation}\label{equation:SemiContinuityTangent}
 \forall p \in \Su, \forall \epsilon >0, \exists \alpha >0 \mid \:  p' \in \Su \cap B(p, \alpha) \Rightarrow \Tan(p, \Su) \subset \Tan(p', \Su)^{\angle \epsilon}.
\end{equation}

\subparagraph{Continuity if the tangent spaces are vector spaces}

Of course, in cases where $\Tan$ and $\Nor$ are vector spaces, this semi-continuity gives continuity. 
This results from a straightforward lemma (we included a proof in the appendix for completeness):
\begin{lemma}\label{lemma:anglebetweenKVectorSpacesIsSymmetric}
If $A$ and $B$ are $k$-dimensional vector subspaces of $\R^d$, with $1\leq k \leq d-1$, then
\[
A \subset B^{\angle \epsilon} \iff B \subset A^{\angle \epsilon}.
\]
\end{lemma}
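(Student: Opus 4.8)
The plan is to prove the symmetry of the relation $A \subset B^{\angle\epsilon}$ by unwinding the definition of the maximal angle \eqref{equation:DefinitionAngleVectorSpaces} and showing that $A \subset B^{\angle\epsilon}$ is equivalent to the statement $\angle A,B < \epsilon$ (or $\leq$, depending on how $X^{\angle\epsilon}$ is set up with strict/non-strict inequalities — I will use the strict convention consistent with the paper's definition of $X^{\angle\epsilon}$). Once both sides are shown equivalent to the same symmetric quantity, the lemma follows immediately. So the real content is the claim that, for $k$-dimensional subspaces $A,B$ of equal dimension, $A \subset B^{\angle\epsilon}$ implies $B \subset A^{\angle\epsilon}$; equivalently $\max_{a\in A\setminus\{0\}}\angle(a,B) < \epsilon$ implies $\max_{b\in B\setminus\{0\}}\angle(b,A) < \epsilon$. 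This is exactly the symmetry half of the second equality in \eqref{equation:DefinitionAngleVectorSpaces}, namely $\max_{a\in A}\min_{b\in B}\angle a,b = \max_{b\in B}\min_{a\in A}\angle a,b$, restricted to the case $\dim A = \dim B$.

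The cleanest route is via principal angles (the SVD / CS decomposition). First I would recall that for subspaces $A,B$ with $\dim A = \dim B = k$, there exist orthonormal bases $(a_1,\dots,a_k)$ of $A$ and $(b_1,\dots,b_k)$ of $B$ and principal angles $0 \le \theta_1 \le \dots \le \theta_k \le \pi/2$ such that $\langle a_i, b_j\rangle = \delta_{ij}\cos\theta_i$; these come from the singular value decomposition of the matrix of inner products $P_A^\top P_B$ where $P_A, P_B$ are the matrices whose columns are orthonormal bases of $A,B$. The key symmetric quantity is the largest principal angle $\theta_k$. I would then show $\angle A, B = \theta_k$: for any unit $a = \sum c_i a_i \in A$, the closest vector in $B$ is $P_B P_B^\top a = \sum c_i \cos\theta_i\, b_i$, so $\sin^2\angle(a,B) = 1 - \sum c_i^2 \cos^2\theta_i = \sum c_i^2 \sin^2\theta_i \le \sin^2\theta_k$, with equality when $a = a_k$. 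Hence $\max_{a\in A}\angle(a,B) = \theta_k$, and by the complete symmetry of the principal-angle configuration (swapping the roles of $A$ and $B$ yields the same $\theta_i$, since $P_B^\top P_A = (P_A^\top P_B)^\top$ has the same singular values), also $\max_{b\in B}\angle(b,A) = \theta_k$. Therefore both $A\subset B^{\angle\epsilon}$ and $B\subset A^{\angle\epsilon}$ are equivalent to $\theta_k < \epsilon$.

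The main obstacle — really the only subtle point — is the hypothesis $\dim A = \dim B$: without it the statement is false (e.g. a line inside a plane makes angle $0$ with the plane but the plane does not make a small angle with the line), and the step that genuinely uses it is the claim that the projection $P_B|_A : A \to B$ is injective, or equivalently that no principal angle equals $\pi/2$ forces nothing but that the set of principal angles is the \emph{same} from either side. Concretely, the equal-dimension assumption guarantees that the matrix $P_A^\top P_B$ is square, so its singular values (the $\cos\theta_i$) are symmetric in $A \leftrightarrow B$; a rectangular matrix would have extra zero singular values on one side, breaking the symmetry. I would make this explicit. The rest is routine linear algebra, and I would relegate the detailed principal-angle computation to the appendix as the paper already signals.
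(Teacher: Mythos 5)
Your proof is correct and takes a genuinely different route from the paper's. The paper's argument is a one-liner: for $A\neq B$ it invokes a mirror reflection that bisects $A$ and $B$, so that this isometry swaps the two subspaces and therefore carries the inclusion $A\subset B^{\angle\epsilon}$ to $B\subset A^{\angle\epsilon}$. Your principal-angle/SVD argument instead identifies both $\max_{a\in A}\angle(a,B)$ and $\max_{b\in B}\angle(b,A)$ with the largest principal angle $\theta_k$, the arccosine of the smallest singular value of the square $k\times k$ matrix $P_A^\top P_B$, and the symmetry is then just the fact that a matrix and its transpose have the same singular values. Your route is longer but self-contained, and it is also more careful on a point the paper glosses over: the paper claims the bisecting reflection can be taken across a \emph{hyperplane}, which fails for $k\geq 2$ (for instance in $\R^4$ no hyperplane reflection carries $\mathrm{span}\{e_1,e_2\}$ onto $\mathrm{span}\{e_3,e_4\}$). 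What is true, and what the paper's argument actually needs, is that the reflection across $\mathrm{span}\{a_i+b_i:1\le i\le k\}\oplus(A+B)^\perp$ swaps $A$ and $B$; this fixed subspace has codimension $k-\dim(A\cap B)$, which is generically not $1$, and your principal-angle bases $(a_i),(b_i)$ are precisely what one needs to write that reflection down. So your SVD computation both fixes and subsumes the paper's shorter argument. Your closing remark about the role of the equal-dimension hypothesis is also on target: it makes $P_A^\top P_B$ square, so that transposition neither adds nor deletes zero singular values.
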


From Lemma \ref{lemma:anglebetweenKVectorSpacesIsSymmetric}, we get that, in case of manifolds, 
Lemma \ref{lemma:NorIsSemiContinuous} and equation \eqref{equation:SemiContinuityTangent}
gives us the continuity of tangent spaces.
More precisely, we have:

\begin{lemma}\label{lemma:embeddedManifoldPositiveReachThenC1Embedded}
If $\M$ is a topologically embedded $n$-manifold in $\R^d$, with positive reach,
then $\M$ is $C^1$ embedded in $\R^d$. 
More precisely for any $p\in \M$, there exists some open $p \in U_p \subset \Tan(p, \M)$ and a map $\phi: U_p \rightarrow \Nor(p, \M)$, such that $x \mapsto  \Phi(x) \defunder{=} p + x + \phi(x)$ is a $C^1$
map from $U_p$ to $\M$.
 \end{lemma}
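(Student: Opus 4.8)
\textbf{Proof plan for Lemma~\ref{lemma:embeddedManifoldPositiveReachThenC1Embedded}.}
The plan is to use the closest-point projection $\pi_\M$ to build an explicit chart near $p$ and then to show its inverse is $C^1$ by exploiting the continuity of the tangent spaces. First I would fix $p\in\M$ and, exactly as in the proof of Lemma~\ref{lemma:embeddedManifoldPositiveReachThenTangentSpaceIsVectorial}, choose $\rho$ with $0<\rho<\min(\rch(\M),d(p,\M\setminus U))$ so that $\M\cap B(p,\rho)$ lies inside a single chart and, by Lemma~\ref{lemma:embeddedManifoldPositiveReachThenTangentSpaceIsVectorial}, $\Tan(p,\M)=:T_p$ is a genuine $n$-dimensional vector space with orthogonal complement $N_p=\Nor(p,\M)$. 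The candidate chart is the composition of $\pi_\M$ with the orthogonal projection $\pi_{T_p}$ onto $p+T_p$; equivalently, I want to invert $\Psi\defunder{=}\pi_{T_p}\circ\Phi$ where $\Phi(x)=p+x+\phi(x)$ is the sought graph map. So the real content is: (i) the graph representation exists locally, i.e. over a small ball $U_p\subset T_p$ the set $\M$ is $\{p+x+\phi(x): x\in U_p\}$ for a single-valued $\phi:U_p\to N_p$; and (ii) $\phi$ is $C^1$, with $D\phi(0)=0$.

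For (i), the key input is Lemma~\ref{Lem:distanceToTangent}: for $q\in\M$ with $|q-p|<\rch(\M)$ one has $\sin\angle(q-p,T_p)\le |q-p|/(2\rch(\M))$, so the secant directions from $p$ stay within a fixed cone around $T_p$; this gives that $\pi_{T_p}$ restricted to $\M\cap B(p,\rho')$ is injective for $\rho'$ small, hence $\M$ is locally a graph over its projection to $p+T_p$, and surjectivity onto a neighbourhood of $0\in T_p$ follows because $\pi_{T_p}|_{\M\cap B(p,\rho')}$ is a continuous injection of a set containing a neighbourhood of $p$ (invariance of domain, using that $\M\cap B(p,\rho)$ contains a homeomorphic copy of an $n$-ball around $p$, as established via the chart $f$ in the previous section). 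This defines $\phi$ on some open $U_p\ni 0$ and shows $\phi$ is continuous. For the quantitative Lipschitz-type control of $\phi$ near $0$, apply Lemma~\ref{Lem:distanceToTangent} at $p$: $|\phi(x)| = d(p+x+\phi(x),\, p+T_p) \le \frac{|x+\phi(x)|^2}{2\rch(\M)}$, which for small $x$ forces $|\phi(x)| = O(|x|^2)$, in particular $\phi$ is differentiable at $0$ with $D\phi(0)=0$.

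For (ii), differentiability at a general point $x_0\in U_p$ with $q_0\defunder{=}\Phi(x_0)$: since $\Tan(q_0,\M)=:T_{q_0}$ is again an $n$-dimensional vector space (Lemma~\ref{lemma:embeddedManifoldPositiveReachThenTangentSpaceIsVectorial}), I repeat the argument of the previous paragraph centred at $q_0$ to get that $\M$ is locally the graph of a map with vanishing derivative at $q_0$ relative to the splitting $T_{q_0}\oplus T_{q_0}^\perp$; translating this back to the fixed splitting $T_p\oplus N_p$ shows $\phi$ is differentiable at $x_0$ with $D\phi(x_0)$ the linear map whose graph is $T_{q_0}$ expressed over $T_p$. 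Finally, continuity of $x_0\mapsto D\phi(x_0)$ is precisely continuity of $x_0\mapsto T_{\Phi(x_0)}$ as a map into the Grassmannian, which is exactly what Lemma~\ref{lemma:NorIsSemiContinuous}, equation~\eqref{equation:SemiContinuityTangent}, and Lemma~\ref{lemma:anglebetweenKVectorSpacesIsSymmetric} together deliver (the two-sided $\epsilon$-inclusions of $k$-planes is convergence in the $\angle$-metric), composed with continuity of $\Phi$. Hence $\Phi$ is $C^1$.

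The main obstacle I anticipate is step (i): making rigorous that $\pi_{T_p}|_\M$ is a \emph{bijection} onto a full neighbourhood of $0$ in $T_p$ — injectivity is a clean consequence of the tangent-cone angle bound, but surjectivity onto an open set (rather than merely onto a set with nonempty interior) needs invariance of domain applied to the composition of the homeomorphism $f$ from the chart with $\pi_{T_p}\circ\Phi$, and one must check the domain of this composition genuinely contains an open $n$-ball around $p$ with image an open neighbourhood of $0$. Everything else is then a matter of transporting the pointwise quadratic estimate from Lemma~\ref{Lem:distanceToTangent} and invoking the already-proven continuity of $p\mapsto\Tan(p,\M)$.
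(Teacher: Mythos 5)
Your overall plan matches the paper's proof almost exactly: establish that $\Tan(q,\M)$ is an $n$-dimensional vector space (Lemma~\ref{lemma:embeddedManifoldPositiveReachThenTangentSpaceIsVectorial}), get continuity of $q\mapsto\Tan(q,\M)$ in the Grassmannian from Lemma~\ref{lemma:NorIsSemiContinuous} together with Lemma~\ref{lemma:anglebetweenKVectorSpacesIsSymmetric}, show that the orthogonal projection $\pi_{p+\Tan(p,\M)}$ is injective on $\M\cap B(p,\epsilon)$, invoke invariance of domain to obtain the graph map $\phi$, deduce $D\phi(0)=0$ from the quadratic estimate in Lemma~\ref{Lem:distanceToTangent}, repeat at nearby points to get differentiability everywhere, and use the tangent continuity to conclude $C^1$. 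The paper does all of this and nothing substantially different.

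However, there is a genuine gap in your justification of \emph{injectivity} in step~(i). You argue that Lemma~\ref{Lem:distanceToTangent} applied at $p$ keeps the \emph{secants from $p$} in a narrow cone around $T_p$, and conclude that $\pi_{T_p}|_{\M\cap B(p,\rho')}$ is injective. That inference does not hold: two points $q_1,q_2$ can each have a secant $q_i-p$ at small angle to $T_p$ while $q_1-q_2$ is orthogonal to $T_p$ (e.g.\ in $\R^2$ with $T_p$ the $x$-axis and cone half-angle $\alpha$, take $q_1=(t,t\tan\alpha')$, $q_2=(t,-t\tan\alpha')$ for any $\alpha'<\alpha$; this persists for every $\rho'$). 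Injectivity requires controlling the secant $q_1-q_2$ between \emph{two nearby points of $\M$}, not just secants emanating from $p$. The fix is exactly what the paper does: apply Lemma~\ref{Lem:distanceToTangent} at $q_1$ to bound $\angle\bigl(q_2-q_1,\Tan(q_1,\M)\bigr)$, and combine it with the Grassmannian-continuity estimate $\angle\Tan(q_1,\M),\Tan(p,\M)<\pi/4$ (chosen small enough) so that $\angle\bigl(q_2-q_1,\Tan(p,\M)\bigr)<\pi/2$, contradicting $q_2-q_1\perp T_p$. So you must first shrink $\epsilon$ so that the tangent-space angle bound holds on $\M\cap B(p,\epsilon)$, and only then can you conclude injectivity; the cone around $p$ alone is insufficient. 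With that one repair the proposal becomes essentially the paper's argument.
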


{
\begin{proof}[Sketch of proof] The proof gathers the results so far and then uses standard techniques from analysis and geometry to find locally find a differentiable map from the tangent space to the normal space. 
\end{proof}
}

%
%
\section{From \texorpdfstring{$\mathbf{C^1}$}{once differentiable} to \texorpdfstring{$\mathbf{C^{1,1}}$}{a Lipschitz derivative}}

\label{section:C11Embedding}

In this section we prove the first main result of this paper, namely that manifolds of positive reach are $C^{1,1}$. We distinguish ourselves from the previous methods by making all constants explicit. 


%

{In this paper we view derivatives of a $C^{1,1}$ function $F$ as linear operators (matrices). The norm we use on the derivatives of $F$ is the operator $2$-norm, that is 
\[ \|  DF(x)\| = \max_{ |u|=1} \frac{|DF(x) u | }{|u|} . 
\]
For example, $F$ having a $\lambda$-Lipschitz derivative means that the difference of Jacobians viewed as matrices (or linear operators) satisfies :
\begin{equation}
\| DF (x)-DF (y) \| \leq \lambda |x-y|\: \text{ for all} \:  x,y. 
\end{equation}
  }

The following Lemma on $C^{1,1}$ functions is essential to our proof and can be found in a number of texts concerning semiconcave functions  as well as \cite{lytchak2004geometry, lytchak2005almost}, albeit without constants, the version with constants we recall here can be found as Corollary 3.3.8 in \cite{cannarsa2004semiconcave}. 
\begin{lemma}\label{lemma:SemiConcaveAndSemiConvexIffC11}
Let $F: U \rightarrow \R$ be a function defined on a convex open set $U \subset \R^n$.
Then the following properties are equivalent:
\begin{itemize}
\item $F$ is of class $C^{1,1}$ with $1$-Lipschitz derivative.
\item both $x \rightarrow 1/2 \: x^2 - F(x)$ and $x \rightarrow 1/2 \:  x^2 + F(x)$ are convex functions.
\end{itemize}
\end{lemma}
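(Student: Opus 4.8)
The plan is to prove Lemma~\ref{lemma:SemiConcaveAndSemiConvexIffC11} by a straightforward but careful double implication, working one coordinate direction at a time to reduce to the one-dimensional situation.

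\textbf{Setup and one-dimensional reduction.} First I would observe that a function $G: U \to \R$ on a convex open set is convex if and only if its restriction to every line segment in $U$ is convex, and that for a $C^1$ function this is equivalent to the slopes of $G$ being monotone along each such segment. So both properties in the statement can be tested along segments. Fix $x, y \in U$ and let $e = (y-x)/|y-x|$. I would introduce the scalar functions $g_\pm(t) = \tfrac12 t^2 \pm F(x + te)$ for $t$ in the relevant interval, and note $g_\pm$ is convex iff $t \mapsto t \pm \partial_e F(x+te)$ is non-decreasing (here $\partial_e F$ is the directional derivative, which exists and is continuous since $F \in C^1$).

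\textbf{($\Downarrow$) From $C^{1,1}$ to the two convexity statements.} Assume $F$ is $C^1$ with $1$-Lipschitz derivative, so $\|DF(x) - DF(y)\| \le |x-y|$. Then for $s < t$, $(t \pm \partial_e F(x+te)) - (s \pm \partial_e F(x+se)) = (t-s) \pm (\partial_e F(x+te) - \partial_e F(x+se))$, and the Lipschitz bound gives $|\partial_e F(x+te) - \partial_e F(x+se)| \le |te - se| = t - s$, so this difference is $\ge 0$ in both sign choices. Hence $g_\pm$ is convex along every segment, so $\tfrac12 x^2 \pm F$ is convex on $U$.

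\textbf{($\Uparrow$) From the two convexity statements to $C^{1,1}$.} Conversely, assume both $\tfrac12 x^2 - F$ and $\tfrac12 x^2 + F$ are convex. Convexity of each already forces $F$ to be locally the difference of a smooth quadratic and a convex function, hence locally Lipschitz and (being assumed, or deduced, $C^1$) we may work with $DF$. Along a segment, convexity of $g_+$ gives $\partial_e F(x+te) - \partial_e F(x+se) \ge -(t-s)$ and convexity of $g_-$ gives $\partial_e F(x+te) - \partial_e F(x+se) \le (t-s)$; combining, $|\partial_e F(x+te) - \partial_e F(x+se)| \le t-s$. Taking $t - s = |x - y|$ and $e$ pointing from $x$ to $y$ yields $|\partial_e F(y) - \partial_e F(x)| \le |x-y|$ for the direction $e$; to upgrade this to the operator-norm bound $\|DF(x) - DF(y)\| \le |x-y|$ I would note that for an arbitrary unit vector $u$ one can estimate $|(DF(x)-DF(y))u| = |\partial_u F(x) - \partial_u F(y)|$ by comparing along the segment $[x,y]$ in direction $u$ — more precisely, apply the one-dimensional inequality on the segment from $x$ to $x + |x-y| \cdot \tfrac{y-x}{|y-x|}$ but tested in direction $u$, or simply use that the symmetric bilinear form controlled above is $\tfrac12 |x-y|^2$-semiconcave and $\tfrac12|x-y|^2$-semiconvex simultaneously, which pins the Hessian (in the distributional sense) between $-\mathrm{Id}$ and $+\mathrm{Id}$ and hence $DF$ is $1$-Lipschitz.

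\textbf{Main obstacle.} The only genuinely delicate point is the last one: passing from ``the directional derivative is $1$-Lipschitz along each fixed direction'' to ``$DF$ is $1$-Lipschitz as an operator-valued map''. The clean way is to note that convexity of $x \mapsto \tfrac12 x^2 \pm F(x)$ is a statement about the full Hessian (as a distribution / as a monotonicity of the gradient map), namely that $x \mapsto x \pm \nabla F(x)$ is monotone; two monotone maps whose difference and sum are both monotone force $\nabla F$ to be $1$-Lipschitz by the standard inequality $\langle (\nabla F(x) - \nabla F(y)), x - y\rangle \le |x-y|^2$ together with its mirror image, and then a polarization/density argument (or simply citing Corollary~3.3.8 of \cite{cannarsa2004semiconcave}, which is exactly this statement) closes it. Since the excerpt explicitly attributes the quantitative version to that reference, in the paper itself I would keep the proof short: reduce to one dimension, do the slope-monotonicity computation in both directions, and invoke the gradient-monotonicity characterization of Lipschitz gradients for the operator-norm upgrade.
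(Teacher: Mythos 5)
The paper does not prove this lemma at all; it simply cites Corollary~3.3.8 of \cite{cannarsa2004semiconcave}, so there is no ``paper proof'' to compare against and your plan to eventually just cite that reference matches what the paper actually does. Your forward direction ($C^{1,1}\Rightarrow$ both convexities) is correct as written.

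The reverse direction, however, has two real gaps. First, you write ``being assumed, or deduced, $C^1$'' --- but the second bullet of the lemma does not assume $F\in C^1$, so this must be deduced, and you never do it. The standard route is that $\tfrac12 x^2 - F$ convex gives a nonempty superdifferential $\partial^+F(x)$ at every $x$, $\tfrac12 x^2 + F$ convex gives a nonempty subdifferential $\partial^-F(x)$, and $\partial^+F(x)\cap\partial^-F(x)\ne\emptyset$ forces differentiability at $x$; continuity of $\nabla F$ then comes from the Lipschitz estimate once you have it. Second, and more seriously, your proposed operator-norm upgrade --- ``apply the one-dimensional inequality on the segment from $x$ to $x + |x-y|\cdot\tfrac{y-x}{|y-x|}$ but tested in direction $u$'' --- does not work: convexity of $g_{\pm}(t)=\tfrac12 t^2\pm F(x+te)$ only controls the second directional derivative $\partial^2_{ee}$ along the direction of travel $e$, not the mixed derivative $\partial^2_{eu}$ needed to compare $\partial_u F$ at the two endpoints. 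What the 1D reduction actually yields is the monotonicity bound $\langle \nabla F(y)-\nabla F(x),\,y-x\rangle \in [-|y-x|^2,\, |y-x|^2]$, which is a bound on a single inner product, not on $|\nabla F(y)-\nabla F(x)|$. Your fallback (``pins the Hessian between $-\mathrm{Id}$ and $+\mathrm{Id}$'') is the right idea, but it requires the Hessian to exist in some sense; to make it rigorous one should either invoke Alexandrov's theorem for the semiconcave function $\tfrac12 x^2 - F$, or better, mollify: $F_\varepsilon = F * \rho_\varepsilon$ is smooth, $\tfrac12 x^2 \pm F_\varepsilon$ remains convex (convolution preserves convexity and $\tfrac12 x^2 * \rho_\varepsilon = \tfrac12 x^2 + \text{const}$), so $-\mathrm{Id}\le \nabla^2 F_\varepsilon \le \mathrm{Id}$ and hence $\|\nabla^2 F_\varepsilon\|\le 1$ since the Hessian is symmetric; then $\nabla F_\varepsilon\to\nabla F$ locally uniformly and the Lipschitz bound passes to the limit. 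Without one of these two closing moves the proof is incomplete.
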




Following two lemmas are of use in the proof of Lemma \ref{lemma:VDotphiIsSemiConvex}.
\begin{lemma}\label{lemma:SmallLemma}
Assume that $\M \subset \R^d$ is a topologically embedded $n$-manifold, with  reach larger or equal to $R>0$. Suppose that $p\in \M$, and let $U_p\subset \Tan(p, \M)$, $\phi: U_p \rightarrow \Nor(p, \M)$, and $\Phi$ be as in Lemma \ref{lemma:embeddedManifoldPositiveReachThenC1Embedded}.

Then, for all $\theta >0$, there is an $\alpha >0$ 
such that:
 \begin{equation}
|y|, |y'| \leq \alpha 
\Rightarrow 
\begin{cases}
&\angle \Tan (\Phi(y), \M) ,  \Tan (\Phi(y'), \M)< \theta  
\\
 &\operatorname{and}\\
&\angle \Phi(y') - \Phi(y), \Tan(p, \M) < \theta.\label{eq:BoundYTheta} 
\end{cases}
 \end{equation}
\end{lemma}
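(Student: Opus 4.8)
\textbf{Proof proposal for Lemma \ref{lemma:SmallLemma}.}

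The plan is to extract this from the two semi-continuity/continuity statements already in hand, namely Lemma \ref{lemma:embeddedManifoldPositiveReachThenTangentSpaceIsVectorial} (tangent cones are honest $n$-dimensional vector spaces), Lemma \ref{lemma:NorIsSemiContinuous} together with \eqref{equation:SemiContinuityTangent} (semi-continuity of $\Tan$), Lemma \ref{lemma:anglebetweenKVectorSpacesIsSymmetric} (for vector spaces of equal dimension, $A\subset B^{\angle\epsilon}\iff B\subset A^{\angle\epsilon}$, so the semi-continuity upgrades to genuine continuity of $\Tan$), and finally Lemma \ref{lemma:embeddedManifoldPositiveReachThenC1Embedded} ($\Phi$ is a $C^1$ chart with $D\Phi(0)$ the inclusion $\Tan(p,\M)\hookrightarrow\R^d$). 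In particular $\Phi$ is continuous at $0$ with $\Phi(0)=p$, so points $\Phi(y)$ with $|y|$ small stay close to $p$ in $\R^d$.

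First I would handle the first line of \eqref{eq:BoundYTheta}. Since $\M$ has positive reach, Lemma \ref{lemma:embeddedManifoldPositiveReachThenTangentSpaceIsVectorial} says every $\Tan(q,\M)$ for $q\in\M$ is an $n$-dimensional vector space, and by the argument following \eqref{equation:SemiContinuityTangent} (using Lemma \ref{lemma:anglebetweenKVectorSpacesIsSymmetric}) the map $q\mapsto\Tan(q,\M)$ is continuous on $\M$ in the $\Grass(n,\R^d)$ metric. Hence there is a $\delta>0$ with $\angle\Tan(p,\M),\Tan(q,\M)<\theta/2$ whenever $q\in\M\cap B(p,\delta)$; by the triangle inequality in $\Grass(n,\R^d)$, any two such $q,q'$ satisfy $\angle\Tan(q,\M),\Tan(q',\M)<\theta$. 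By continuity of $\Phi$ at $0$, choose $\alpha_1>0$ so that $|y|\le\alpha_1\Rightarrow|\Phi(y)-p|<\delta$; then $|y|,|y'|\le\alpha_1$ forces $\Phi(y),\Phi(y')\in\M\cap B(p,\delta)$ and the first bound follows.

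Next the second line. Since $\Phi$ is $C^1$ on a neighbourhood of $0\in\Tan(p,\M)$ with $D\Phi(0)u=u$ for $u\in\Tan(p,\M)$ (the derivative is the inclusion, as $\phi(0)=0$ and $D\phi(0)=0$ in Lemma \ref{lemma:embeddedManifoldPositiveReachThenC1Embedded}), a first-order Taylor expansion gives $\Phi(y')-\Phi(y)=(y'-y)+o(|y'|+|y|)$, with the $o$-term lying (as always) in $\R^d$ but of magnitude negligible compared to $|y'-y|$ only when $y,y'$ are not too aligned — so rather than dividing I would argue directly: write $\Phi(y')-\Phi(y)=(y'-y)+r$ where $y'-y\in\Tan(p,\M)$ and, by uniform continuity of $D\Phi$ near $0$, $|r|=\bigl|\int_0^1\!\bigl(D\Phi(y+t(y'-y))-D\Phi(0)\bigr)(y'-y)\,dt\bigr|\le\eta(\alpha)\,|y'-y|$ with $\eta(\alpha)\to0$ as $\alpha\to0$. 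Since the component of $\Phi(y')-\Phi(y)$ along $\Tan(p,\M)$ has norm at least $|y'-y|-|r|\ge(1-\eta(\alpha))|y'-y|$, the sine of the angle to $\Tan(p,\M)$ is at most $|r|/\bigl((1-\eta(\alpha))|y'-y|\bigr)\le\eta(\alpha)/(1-\eta(\alpha))$, which is $<\sin\theta$ once $\alpha$ is small enough; pick $\alpha_2$ accordingly (and note $y=y'$ is excluded since then $\Phi(y')-\Phi(y)=0$ and there is nothing to bound).

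Finally set $\alpha=\min(\alpha_1,\alpha_2)$, which gives both conclusions of \eqref{eq:BoundYTheta} simultaneously. The only mildly delicate point — the ``main obstacle'' — is the second bound: one must avoid the temptation to divide by $|y'-y|$ in the Taylor remainder carelessly, and instead control the ratio of the remainder to the $\Tan(p,\M)$-component, which is exactly what the $C^1$ regularity of $\Phi$ with $D\Phi(0)=\mathrm{incl}$ buys through uniform continuity of $D\Phi$ on a compact neighbourhood of $0$.
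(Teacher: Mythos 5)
Your proposal is correct, and both halves rest on the machinery the paper has already assembled; given that the appendix of the paper as provided does not actually display a proof of this lemma, a direct line-by-line comparison is not possible, but the route you take is the natural one (continuity of $q\mapsto\Tan(q,\M)$ via Lemmas \ref{lemma:NorIsSemiContinuous} and \ref{lemma:anglebetweenKVectorSpacesIsSymmetric} for the first bound, and $C^1$ regularity of $\Phi$ from Lemma \ref{lemma:embeddedManifoldPositiveReachThenC1Embedded} for the second). Two small remarks. First, in the second half you can exploit the structure of the decomposition more directly: since $\Phi(y)=p+y+\phi(y)$ with $y'-y\in\Tan(p,\M)$ and $\phi(y')-\phi(y)\in\Nor(p,\M)$, the remainder $r=\phi(y')-\phi(y)$ is \emph{orthogonal} to $\Tan(p,\M)$, so the tangential component of $\Phi(y')-\Phi(y)$ is exactly $y'-y$ (not merely of norm $\ge|y'-y|-|r|$), and $\sin\angle\bigl(\Phi(y')-\Phi(y),\Tan(p,\M)\bigr)=|r|/|\Phi(y')-\Phi(y)|\le|r|/|y'-y|$, which is bounded by $\sup_{|z|\le\alpha}\|D\phi(z)\|\to0$ since $D\phi(0)=0$. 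Your version is still valid but strictly coarser. Second, for the second bound one could alternatively bypass Taylor expansion entirely by combining Federer's Lemma \ref{Lem:distanceToTangent} (which gives $\sin\angle\bigl(\Phi(y')-\Phi(y),\Tan(\Phi(y),\M)\bigr)\le|\Phi(y')-\Phi(y)|/(2R)$, small when $\alpha$ is small) with the already-established angle bound $\angle\Tan(\Phi(y),\M),\Tan(p,\M)<\theta/2$ and the observation that $\angle(v,B)\le\angle(v,A)+\angle A,B$ for a vector $v$ and subspaces $A,B$; this makes explicit use of the positive reach $R$, whereas yours only needs $C^1$ regularity — both are fine for an existence statement.
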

The proof is straightforward and can be found in the appendix. 


\begin{lemma}\label{lemma:ProjectionOfTangentConeBoundaryEqualProjOnTgtSpace}
For any $p\in \M$, there is $\alpha>0$ such that for any $\alpha'$ such that $0<\alpha' \leq \alpha$,
for any $q \in \M \cap  \partial \big(B(p, \alpha' ) \big)$ and 
 $q' \in \M \cap B(p, \alpha' )$ the closest point projection of $q'$ on 
 $q \, + \Tan(q, \M)$ belongs to ${ q \, +} \Tan(q, \M \cap B(p, \alpha' ) )$, in particular, one has:
\begin{align}\label{eq:claim_intersect_tangent_to_manifold}
d_{\mathbb{E} } (q' ,\, q \, + \Tan(q, \M \cap B(p, \alpha' ) )  = d_{\mathbb{E} } (q' ,\, q \, + \Tan(q, \M) ) 
\end{align}
\end{lemma}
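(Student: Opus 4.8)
The plan is to reduce the statement to two elementary facts about the point $q$: that the orthogonal projection of $q'-q$ onto the vector space $\Tan(q,\M)$ points strictly ``into'' the ball $B(p,\alpha')$, and that every such inward tangent direction of $\M$ at $q$ is also tangent to $\M\cap B(p,\alpha')$ at $q$. First I would fix $p$ and choose $0<\alpha<\tfrac12\rch(\M)$ (recall $\rch(\M)>0$), then take any $0<\alpha'\le\alpha$, $q\in\M\cap\partial B(p,\alpha')$ and $q'\in\M\cap B(p,\alpha')$. By Lemma~\ref{lemma:embeddedManifoldPositiveReachThenTangentSpaceIsVectorial} the cone $\Tan(q,\M)$ is a genuine $n$-dimensional linear subspace, so $\Nor(q,\M)=\Tan(q,\M)^{\perp}$ and the orthogonal projection $\pi_{\Tan(q,\M)}$ is well defined. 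If $q'=q$ the closest point of $q'$ on $q+\Tan(q,\M)$ is $q$ itself and the claim is trivial, so assume $q'\neq q$; the claimed closest point is then $q+\pi_{\Tan(q,\M)}(q'-q)$, and it suffices to show $\pi_{\Tan(q,\M)}(q'-q)\in\Tan(q,\M\cap B(p,\alpha'))$. Indeed, since $\Tan(q,\M\cap B(p,\alpha'))\subseteq\Tan(q,\M)$ holds by Definition~\ref{def:4.3and4.4Fed} (a subset of $\M$ has fewer limit directions), minimising the distance to $q'$ over the smaller affine set $q+\Tan(q,\M\cap B(p,\alpha'))$ gives $\ge d_{\mathbb{E}}(q',q+\Tan(q,\M))$, while the membership just stated gives $\le$; this is precisely \eqref{eq:claim_intersect_tangent_to_manifold}.

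The first step is a sign estimate: $\langle\pi_{\Tan(q,\M)}(q'-q),\,q-p\rangle<0$. From $|q'-p|\le\alpha'=|q-p|$ and $q'-p=(q'-q)+(q-p)$ one gets $\langle q'-q,q-p\rangle\le-\tfrac12|q'-q|^{2}$. Decompose $q-p=t_q+n_q$ and $q'-q=t'+n'$ with $t_q,t'\in\Tan(q,\M)$ and $n_q,n'\in\Tan(q,\M)^{\perp}$. Applying Lemma~\ref{Lem:distanceToTangent} to the pairs $(q,p)$ and $(q,q')$ (both have length $<\rch(\M)$ by the choice of $\alpha$) yields $|n_q|\le|q-p|^{2}/(2\rch(\M))$ and $|n'|\le|q'-q|^{2}/(2\rch(\M))$, whence, using $\langle n',t_q\rangle=0$ and $|q-p|=\alpha'<\rch(\M)$,
\begin{align}
\langle\pi_{\Tan(q,\M)}(q'-q),q-p\rangle
&=\langle(q'-q)-n',\,t_q+n_q\rangle
=\langle q'-q,q-p\rangle-\langle n',n_q\rangle\nonumber\\
&\le-\tfrac12|q'-q|^{2}+|n'|\,|n_q|
\le\tfrac12|q'-q|^{2}\Bigl(\tfrac{\alpha'^{2}}{2\rch(\M)^{2}}-1\Bigr)<0.\nonumber
\end{align}
In particular $\pi_{\Tan(q,\M)}(q'-q)\neq0$.

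The second step supplies the geometric content: any $u\in\Tan(q,\M)$ with $|u|=1$ and $\langle u,q-p\rangle<0$ lies in $\Tan(q,\M\cap B(p,\alpha'))$. Using the local $C^{1}$ graph of $\M$ at $q$ from Lemma~\ref{lemma:embeddedManifoldPositiveReachThenC1Embedded} (or, equivalently, the infinitesimal tangency in Lemma~\ref{Fed4.8.12}), the curve $c(t)=q+tu+\phi_q(tu)$ stays in $\M$ for small $t\ge0$, satisfies $c(t)=q+tu+o(t)$, hence $|c(t)-p|^{2}=|q-p|^{2}+2t\langle u,q-p\rangle+o(t)<\alpha'^{2}$ for small $t>0$; thus $c(t)\in\M\cap B(p,\alpha')$ with $\tfrac{c(t)-q}{|c(t)-q|}\to u$ and $|c(t)-q|\to0$, so $u\in\Tan(q,\M\cap B(p,\alpha'))$ by Definition~\ref{def:4.3and4.4Fed}. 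Applying this to $u=\pi_{\Tan(q,\M)}(q'-q)/|\pi_{\Tan(q,\M)}(q'-q)|$, which is legitimate by the first step, and using that $\Tan(q,\M\cap B(p,\alpha'))$ is a cone, gives $\pi_{\Tan(q,\M)}(q'-q)\in\Tan(q,\M\cap B(p,\alpha'))$, completing the argument as explained in the first paragraph. The only real content is this second step — identifying the tangent cone of the \emph{restricted} set at the boundary point $q$ with the inward half of $\Tan(q,\M)$; the remaining estimates are routine, and one should simply check that the chosen $\alpha$ works uniformly in $q$, $q'$ and $\alpha'$, which it does because every bound above only invokes $\alpha'\le\alpha<\tfrac12\rch(\M)$.
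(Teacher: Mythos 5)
Your proof is correct, and it takes a genuinely different route from the paper. The paper establishes the transversality condition $\Nor(q, B(p,\alpha'))\cap\big(-\Nor(q,\M)\big)=\{0\}$ (via Lemma~\ref{lemma:SmallLemma}) so as to invoke Federer's Theorem 4.10(3) on normal cones of intersections, computes $\Tan\big(q, \M\cap B(p,\alpha')\big)=\Tan(q,\M)\cap\{u:\langle u,q-p\rangle\le 0\}$ by duality, and then argues geometrically — constructing the separating hyperplane $P=\partial\Tan(q,\M\cap B(p,\alpha'))\oplus\Nor(q,\M)$ and the tangent sphere $S_I$ — to conclude that any point of $\M\cap B(p,\alpha')$ projecting onto the wrong half of $\Tan(q,\M)$ would lie inside an empty ball of radius $<\rch(\M)$. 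You instead avoid Theorem~4.10(3) entirely: your first step gives a direct quantitative estimate $\langle \pi_{\Tan(q,\M)}(q'-q),q-p\rangle<0$ using only the tangent-angle bound of Lemma~\ref{Lem:distanceToTangent} and the sphere constraint $|q'-p|\le |q-p|$, and your second step proves membership of strictly inward unit tangents in $\Tan(q,\M\cap B(p,\alpha'))$ by an explicit first-variation curve argument in the $C^1$ chart. Your approach is more elementary and self-contained (no appeal to Federer's intersection theorem), while the paper's yields the full closed-cone identity for $\Tan(q,\M\cap B(p,\alpha'))$, which is then reused in the subsequent computation \eqref{RewriteTanConeIntersectionBall}; you only obtain the open (strict-inequality) half, which suffices for the stated lemma but would not directly replace the displayed identity in the proof of Lemma~\ref{lemma:VDotphiIsSemiConvex} if one wanted to cite it verbatim.
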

The result in Lemma \ref{lemma:ProjectionOfTangentConeBoundaryEqualProjOnTgtSpace} is rather intuitive but the proof requires some geometric constructions and can be found in the appendix. 


\begin{lemma}\label{lemma:VDotphiIsSemiConvex}
Assume that $\M \subset \R^d$ is a topologically embedded $n$-manifold with  positive reach. 
Suppose that $p\in \M$, with $\rch_{loc.}(p, \M) >0$
and let $U_p\subset \Tan(p, \M)$, $\phi: U_p \rightarrow \Nor(p, \M)$, and $\Phi$ be as in Lemma \ref{lemma:embeddedManifoldPositiveReachThenC1Embedded}.
 
Then, for all $\epsilon >0$ and $v \in \Nor(p, \M)$,  with $| v | = 1$, there is an $\alpha >0$ such that the restriction to $U_p \cap B(0, \alpha)^\circ$ of
  \[
  x \mapsto 1/2 \: x^2 - (\rch_{loc.}(p, {\M}) - \epsilon)  \langle v, \phi(x) \rangle
  \]
  is convex.
  {
If $\rch_{loc.}(p, {\M}) = \infty$, then the statement holds with $\rch_{loc.}(p, {\M})$ replaced by any $R>\epsilon $.   
}	

 Moreover,  if $r >  \rch_{loc.}(p, {\M})$
then, for any $\eta>0$, there is  $v \in \Nor(p, \M)$,  with $| v | = 1$,
such that the restriction of 
\[
 x \mapsto 1/2 \: x^2 -  r  \langle v, \phi(x) \rangle
\]
to $B(0,\eta)$ is not convex. 
\end{lemma}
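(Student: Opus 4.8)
\textbf{Proof plan for Lemma~\ref{lemma:VDotphiIsSemiConvex}.}
The plan is to prove the two halves of the statement separately, using the metric characterization of the reach (Theorem~\ref{theorem:ReachEquivalentMetricDistorsion}) together with the quantitative angle bound of Lemma~\ref{Lem:distanceToTangentCharacter} / Lemma~\ref{Lem:distanceToTangent}, and the auxiliary Lemmas~\ref{lemma:SmallLemma} and \ref{lemma:ProjectionOfTangentConeBoundaryEqualProjOnTgtSpace}. For the first (convexity) part, by Lemma~\ref{lemma:SemiConcaveAndSemiConvexIffC11} (suitably rescaled) it suffices to show that for each fixed unit $v \in \Nor(p,\M)$ the scalar function $g(x) = \langle v, \phi(x)\rangle$ is semiconcave with the right constant on a small ball, i.e. that $x \mapsto \tfrac12 x^2 - (\rch_{loc.}(p,\M)-\epsilon)\,g(x)$ is convex near $0$. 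The idea is to verify convexity along every line segment: fix $x_0, x_1 \in U_p \cap B(0,\alpha)^\circ$, set $q_i = \Phi(x_i) \in \M$, and note $q_i$ both lie in $\M \cap B(p,\rho)$ for $\rho$ small, a set whose reach is $\ge \rch_{loc.}(p,\M) - \epsilon$ once $\rho$ is small enough (this is the definition \eqref{eq:DefLocReach} of the local reach, together with Remark~\ref{remark:ReachLocalWellDefined}). Then Lemma~\ref{Lem:distanceToTangent} applied inside $\M \cap B(p,\rho)$ bounds $\sin \angle(q_1 - q_0, \Tan(q_0, \M\cap B(p,\rho)))$ by $|q_1-q_0|/(2(\rch_{loc.}(p,\M)-\epsilon))$, and Lemma~\ref{lemma:ProjectionOfTangentConeBoundaryEqualProjOnTgtSpace} lets us replace the tangent cone of the truncated manifold by $\Tan(q_0,\M)$, which is the graph of $D\phi$ at $x_0$. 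Unwinding this in the graph coordinates: the displacement of $\phi$ between $x_0$ and $x_1$ deviates from its first-order prediction $\phi(x_0) + D\phi(x_0)(x_1-x_0)$ by at most roughly $|x_1-x_0|^2/(2(\rch_{loc.}(p,\M)-\epsilon))$; since this holds symmetrically (swapping $x_0$ and $x_1$) and the $\tfrac12 x^2$ term exactly absorbs the quadratic defect, one reads off that $\tfrac12 x^2 - (\rch_{loc.}(p,\M)-\epsilon)g(x)$ has non-negative second difference along the segment, hence is convex. Lemma~\ref{lemma:SmallLemma} is what guarantees that, after shrinking $\alpha$, the relevant tangent directions and secant directions stay within an angle $\theta$ of $\Tan(p,\M)$, so that the projection onto $\Tan(p,\M)$ used to define $\phi$ is a diffeomorphism and the passage from ``angle bound in $\R^d$'' to ``quadratic bound on $\phi$'' is uniform; the case $\rch_{loc.}(p,\M)=\infty$ is the same argument with $R$ in place of $\rch_{loc.}(p,\M)-\epsilon$.

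For the second (non-convexity) part, let $r > \rch_{loc.}(p,\M)$ and fix $\eta>0$. By definition \eqref{eq:DefLocReach} of $\rch_{loc.}(p,\M)$ as a limit, for every $\rho>0$ we have $\rch(\M \cap B(p,\rho)) \le \rch_{loc.}(p,\M) < r$, so by Lemma~\ref{Lem:distanceToTangentCharacter} (the characterization of the reach as an infimum over pairs of the ratio $|p-q|^2/(2 d_{\mathbb{E}}(q, p+\Tan(p,\M)))$) there exist, inside any such small truncated neighbourhood, points $q_0, q_1 \in \M \cap B(p,\rho)$ with $|q_1 - q_0|^2 / (2\, d_{\mathbb{E}}(q_1, q_0 + \Tan(q_0,\M))) < r$, i.e. the deviation of $q_1$ from the tangent space at $q_0$ strictly exceeds $|q_1-q_0|^2/(2r)$. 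Choosing $\rho$ small enough that $q_0,q_1 \in \Phi(B(0,\eta))$ and again using Lemma~\ref{lemma:ProjectionOfTangentConeBoundaryEqualProjOnTgtSpace} to identify $\Tan(q_0,\M\cap B(p,\rho))$ with $\Tan(q_0,\M)$, we translate this into: the second difference of $\phi$ along the segment from $x_0 = \pi_{\Tan(p,\M)}(q_0-p)$ to $x_1 = \pi_{\Tan(p,\M)}(q_1-p)$ is, in the direction $v$ of maximal deviation (a unit normal vector, which exists since $\Nor(p,\M)$ is a genuine vector space by Lemma~\ref{lemma:embeddedManifoldPositiveReachThenTangentSpaceIsVectorial}), strictly larger in absolute value than $|x_1-x_0|^2/(2r)$ after accounting for the angle corrections, which are negligible once $\rho$ is small. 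Consequently $\tfrac12 x^2 - r\langle v, \phi(x)\rangle$ fails to be convex along that segment inside $B(0,\eta)$.

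The main obstacle I expect is the careful bookkeeping in passing between the ambient-space angle/distance bounds and genuine statements about the function $\phi$ and its second differences: one must control simultaneously (i) the difference between $\Tan(q_i,\M)$ and the graph of $D\phi(x_i)$, which is exact by definition of $\phi$ but requires that $\pi_{\Tan(p,\M)}$ restricted to a piece of $\M$ is a diffeomorphism onto its image, (ii) the distortion of distances between $|q_1-q_0|$ in $\R^d$ and $|x_1-x_0|$ in $\Tan(p,\M)$, and (iii) the fact that Lemma~\ref{Lem:distanceToTangent} gives a one-sided (sine of an angle) bound whereas convexity needs a two-sided control of the quadratic defect — this is exactly why one must apply the inequality with the roles of $q_0$ and $q_1$ interchanged and why the semiconcavity/convexity dichotomy of Lemma~\ref{lemma:SemiConcaveAndSemiConvexIffC11} is the right packaging. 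All of these corrections are of order $\theta$ (from Lemma~\ref{lemma:SmallLemma}) times $|x_1-x_0|^2$ or $o(|x_1-x_0|^2)$ and hence can be absorbed by taking $\epsilon$ into the constant and $\alpha$ small; making this uniform absorption rigorous, rather than merely plausible, is where the real work lies. The detailed computation is deferred to the appendix.
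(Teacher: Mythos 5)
Your proposal is correct in substance and follows essentially the same route as the paper: bound the deviation of $\phi$ from its first-order Taylor expansion via the reach-to-tangent-plane characterization (Lemma~\ref{Lem:distanceToTangentCharacter}) applied to $\M\cap B(p,\rho)$, use Lemma~\ref{lemma:ProjectionOfTangentConeBoundaryEqualProjOnTgtSpace} to replace the truncated tangent cone by $\Tan(q,\M)$ and Lemma~\ref{lemma:SmallLemma} to control the angle distortions by a factor $\cos^3\theta$, then invoke the ``$C^1$ function lying above its tangent is convex'' criterion for the first half, and for the second half pick the pair $q,q'$ witnessing $\rch(\M\cap B(p,\eta))<r$ and take $v$ along the resulting normal deviation. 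The only point where you are looser than the paper is in the choice of $v$ for the non-convexity part: ``direction of maximal deviation'' should be made precise as the direction of $q'-q_2$, where $q_2$ is the intersection of $q'+\Nor(p,\M)$ with $q+\Tan(q,\M)$ (so that $v$ genuinely lies in $\Nor(p,\M)$ rather than in $\Nor(q,\M)$), with the factor $\cos\theta$ from the right triangle $q'q_1q_2$ supplying the ``negligible angle correction'' you allude to.
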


{
\begin{proof}[Sketch of proof] The proof of Lemma \ref{lemma:VDotphiIsSemiConvex} is rather intricate, but boils down to upper bound some angles in Figure \ref{fig:GraphOfSemiConvex}. 
\end{proof}}  



A relatively straightforward consequence of Lemmas \ref{lemma:SemiConcaveAndSemiConvexIffC11} and \ref{lemma:VDotphiIsSemiConvex}
is the following (a complete proof of the statement has been added in the appendix for completeness):

\begin{lemma}\label{lemma:phiIsC11}
  Consider a $n$-manifold $\M \subset \R^d$ topologically embedded in $\R^d$, with  positive reach,
a point  $p\in \M$,    $U_p\subset \Tan(p, \M)$ a neighborhood of $p$  in $\Tan(p, \M)$,
and $\phi: U_p \rightarrow \Nor(p, \M)$, the a $C^1$ map defined in Lemma \ref{lemma:embeddedManifoldPositiveReachThenC1Embedded},
  such that $x \mapsto \Phi(x) = p + x + \phi(x)$ is a chart from $U_p$ to $\M$.
  
  Then, for all $\epsilon >0$, there is $\alpha >0$ such that the derivative of the restriction of $\phi$  to $U_p \cap B(0, \alpha)^\circ$
   is $\frac{1}{(\rch_{loc.}(p, \Su) - \epsilon)}$-Lipschitz,
  where the distance on derivatives is the  operator norm of the difference, that is 
	\[
\big\|D\phi(y_2)  - D\phi(y_1)  \big\|  \leq \frac{1}{(\rch_{loc.}(p, \Su)  - \epsilon)}  | y_2 - y_1 |.
\]
If $\rch_{loc.}(p, {\M}) = \infty$, then the statement holds with $\rch_{loc.}(p, {\M})$ replaced by any {$R>\epsilon$}.   	
 \end{lemma}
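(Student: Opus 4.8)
The plan is to deduce Lemma~\ref{lemma:phiIsC11} directly from Lemma~\ref{lemma:SemiConcaveAndSemiConvexIffC11} and Lemma~\ref{lemma:VDotphiIsSemiConvex}, by passing from the scalar components of $\phi$ to the vector-valued derivative bound. First I would fix $\epsilon>0$ and set $R \defunder{=} \rch_{loc.}(p,\M)-\epsilon$ (or an arbitrary $R>\epsilon$ in the infinite case). For each unit vector $v\in\Nor(p,\M)$, Lemma~\ref{lemma:VDotphiIsSemiConvex} gives an $\alpha_v>0$ such that $x\mapsto \tfrac12 x^2 - R\langle v,\phi(x)\rangle$ is convex on $U_p\cap B(0,\alpha_v)^\circ$. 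Applying this to both $v$ and $-v$ shows that $x\mapsto \tfrac12 x^2 \mp R\langle v,\phi(x)\rangle$ are \emph{both} convex there, so by Lemma~\ref{lemma:SemiConcaveAndSemiConvexIffC11} (after the scaling $F = R\langle v,\phi\rangle$, whose $1$-Lipschitz-derivative version corresponds to an $\tfrac1R$-Lipschitz derivative of $\langle v,\phi\rangle$) the function $x\mapsto \langle v,\phi(x)\rangle$ is $C^{1,1}$ with $\tfrac1R$-Lipschitz derivative on that neighborhood.

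The second step is a compactness/uniformity argument to remove the dependence of $\alpha_v$ on $v$: the unit sphere of $\Nor(p,\M)$ is compact, and one checks that the property "$x\mapsto \tfrac12 x^2 - R\langle v,\phi(x)\rangle$ convex on $B(0,\alpha)^\circ$" is stable under small perturbations of $v$ (the map $v\mapsto \langle v,\phi(\cdot)\rangle$ is continuous in a suitable sense, e.g.\ in $C^1$ on a fixed smaller ball, since $\phi$ is $C^1$), so finitely many $\alpha_{v_i}$ cover the sphere and $\alpha \defunder{=} \min_i \alpha_{v_i}$ works for all unit $v$ simultaneously. Alternatively, and perhaps more cleanly, I would argue that convexity of $\tfrac12 x^2 - R\langle v,\phi(x)\rangle$ for all unit $v$ is equivalent to a single statement about the Hessian-type second-order behaviour of $\phi$, and invoke Lemma~\ref{lemma:VDotphiIsSemiConvex} in the form that already produces a uniform $\alpha$ — but since the lemma as stated fixes $v$ first, the compactness route is the safe one to write down.

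The final step converts the family of scalar bounds into the operator-norm bound on $D\phi$. For $y_1,y_2\in U_p\cap B(0,\alpha)^\circ$ and any unit $v\in\Nor(p,\M)$, we have $\big\langle v,\, (D\phi(y_2)-D\phi(y_1))u\big\rangle = \big(D(\langle v,\phi\rangle)(y_2) - D(\langle v,\phi\rangle)(y_1)\big)u$, whose absolute value is at most $\tfrac1R |y_2-y_1|\,|u|$ by the scalar result. Since $\phi$ takes values in $\Nor(p,\M)$, the vector $(D\phi(y_2)-D\phi(y_1))u$ lies in $\Nor(p,\M)$, so taking the supremum over unit $v\in\Nor(p,\M)$ recovers its Euclidean norm; taking then the supremum over unit $u\in\Tan(p,\M)$ gives $\|D\phi(y_2)-D\phi(y_1)\|\leq \tfrac1R |y_2-y_1|$, which is exactly the claim with $\rch_{loc.}(p,\Su)-\epsilon$ in the denominator. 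The infinite-reach case is identical with $R$ arbitrary above $\epsilon$.

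I expect the main obstacle to be the uniformity step: justifying that a single $\alpha$ works for every unit normal direction $v$, rather than an $\alpha$ depending on $v$. The cleanest fix is the compactness argument sketched above, using that $\phi\in C^1$ guarantees $v\mapsto \langle v,\phi(\cdot)\rangle$ varies continuously enough that the open condition "second-order convex on $B(0,\alpha)^\circ$" is preserved on a neighborhood of each $v$ in the (finite-dimensional, compact) unit sphere of $\Nor(p,\M)$. Everything else is the bookkeeping of Lemma~\ref{lemma:SemiConcaveAndSemiConvexIffC11}'s scaling and the standard fact that the operator norm of a matrix with range in a fixed subspace $N$ equals $\sup_{|v|=1,\,v\in N}\sup_{|u|=1}\langle v, Au\rangle$.
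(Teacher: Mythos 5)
Your overall approach matches the paper's: combine Lemmas~\ref{lemma:SemiConcaveAndSemiConvexIffC11} and~\ref{lemma:VDotphiIsSemiConvex} applied to $v$ and $-v$, deduce that $\langle v,\phi\rangle$ has $\tfrac1R$-Lipschitz derivative for each unit normal $v$, and then pass to the operator-norm bound using that $D\phi$ takes values in $\Nor(p,\M)$. The paper's own proof does exactly this and simply asserts that one may ``reuse the same $\epsilon$ and $\alpha$ selection as in Lemma~\ref{lemma:VDotphiIsSemiConvex}'' --- i.e.\ it relies on the fact, visible in the \emph{proof} of Lemma~\ref{lemma:VDotphiIsSemiConvex}, that the $\alpha$ there depends only on $\epsilon$ and $p$ (through the choice of $\theta$ in~\eqref{eq:boundTheta}, Lemma~\ref{lemma:SmallLemma}, Lemma~\ref{lemma:ProjectionOfTangentConeBoundaryEqualProjOnTgtSpace}, and the local-reach bound~\eqref{eq:DefRFromrchLoc}) and not on $v$ at all: the unit vector $v$ is only used in the very last step, projecting the already-established geometric estimate $|q'-q_2|\leq |y'-y|^2/(2R\cos^3\theta)$ onto the direction $v$.

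Your worry about the $v$-dependence of $\alpha$ is therefore spurious given the proof of Lemma~\ref{lemma:VDotphiIsSemiConvex}, but you were right to be uneasy because the \emph{statement} of that lemma, read literally, is $\forall\epsilon\,\forall v\,\exists\alpha$. Unfortunately the compactness patch you propose has a genuine gap: for a fixed radius $\alpha$, the set of $v$ on the unit sphere of $\Nor(p,\M)$ for which $x\mapsto \tfrac12 x^2 - R\langle v,\phi(x)\rangle$ is convex on $B(0,\alpha)^\circ$ is a \emph{closed} set (convexity is preserved under pointwise limits), not an open one. Since $\phi$ is only $C^1$, continuous dependence of $\langle v,\phi\rangle$ on $v$ in the $C^1$ topology gives you no control over second-order (convexity) behaviour, so you cannot conclude that a small perturbation of $v$ keeps the function convex, and the open-cover argument does not launch. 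The alternative you flag --- that Lemma~\ref{lemma:VDotphiIsSemiConvex} really does produce a $v$-uniform $\alpha$ --- is not merely ``cleaner''; it is the only one of your two routes that actually closes, and it is what the paper does.

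The remaining steps are fine: applying Lemma~\ref{lemma:VDotphiIsSemiConvex} to $-v$ is legitimate because $\Nor(p,\M)$ is a vector space (Lemma~\ref{lemma:embeddedManifoldPositiveReachThenTangentSpaceIsVectorial} together with Observation~\ref{Obs:dualVspace}), the rescaling $F = R\langle v,\phi\rangle$ in Lemma~\ref{lemma:SemiConcaveAndSemiConvexIffC11} is handled correctly, and the observation that the sup over unit $v\in\Nor(p,\M)$ recovers the Euclidean norm of $(D\phi(y_2)-D\phi(y_1))u$ because $D\phi$ has range in $\Nor(p,\M)$ is exactly the (slightly implicit) last step of the paper's proof.
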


Next lemma is instrumental in the proof of Lemma \ref{lemma:MIsC11WithOptimalBounds}:
\begin{lemma}\label{lemma:FromOperatorNormToAngles}
For $F_1,F_2$ linear maps from $\R^n$ to $\R^m$,
denote by  $L_1$ and $L_2$ the corresponding graphs in the product space $\R^n \times \R^m =\R^{n+m}$
 equipped with the canonical dot product: $L_i = \{ (x, F_i(x) ) \in \R^{n+m} \}$.
 \[
2 \sin  \frac{\angle L_1, L_2}{2} \leq   \big\| F_2 - F_1 \big\|
\]
where $  \| \cdot    \|$ is the $L^2$ operator norm.
\end{lemma}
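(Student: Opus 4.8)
The plan is to unwind the definition of the angle $\angle L_1, L_2$ between the two graph subspaces and bound it directly by the operator norm of $F_2-F_1$. Recall from \eqref{equation:DefinitionAngleVectorSpaces} that $\angle L_1, L_2 = \max_{u_1 \in L_1 \setminus\{0\}} \min_{u_2 \in L_2 \setminus\{0\}} \angle u_1, u_2$, so it suffices to show that for every unit vector $u_1 = (x, F_1(x)) / \sqrt{|x|^2 + |F_1(x)|^2} \in L_1$ there is a vector $u_2 \in L_2$ with $\angle u_1, u_2 \leq \theta$, where $\theta$ is the angle determined by $2\sin(\theta/2) = \|F_2 - F_1\|$. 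The natural candidate is $u_2 = (x, F_2(x))$ with the same base point $x \in \R^n$: then $u_1$ and the normalization of $u_2$ share the same first coordinate block, and their difference lives entirely in the $\R^m$ factor.

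The key elementary estimate is the following: for two vectors $a = (x, y_1)$ and $b = (x, y_2)$ in $\R^n \times \R^m$ sharing the first block, one has $\angle a, b \leq$ the angle $\psi$ defined by $2\sin(\psi/2) = |y_2 - y_1| / \max(|a|, |b|)$, or more simply, the chord between the normalized vectors $\hat a$ and $\hat b$ satisfies $|\hat a - \hat b| = 2 \sin(\angle a,b / 2)$, and I claim $|\hat a - \hat b| \leq |y_2 - y_1| / \sqrt{|x|^2 + \max(|y_1|,|y_2|)^2} \leq |y_2 - y_1|/|x| \cdot (\text{something})$ — but the clean route is: the distance from $\hat a$ to the line $\R b$ equals $\sin \angle a, b$, and since $a$ and $b$ have equal-length first components, a short computation shows $\sin \angle a, b \leq |y_2-y_1|/(|a|+|b|) \cdot 2$ is not quite it either; instead I would use that the unit vectors $\hat a, \hat b$ differ by a chord controlled by the "transversal" displacement. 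The cleanest formulation: project everything and use that for $a,b$ with common first block, writing $r = |x|$, one has $|\hat a - \hat b|^2 \le |y_1 - y_2|^2/(r^2 + \min(|y_1|^2,|y_2|^2))$ after expanding — and since $|F_1(x) - F_2(x)| \le \|F_1 - F_2\|\,|x| \le \|F_1-F_2\|\sqrt{r^2+\dots}$, the ratio $|y_1-y_2|/|a|$ is bounded by $\|F_2-F_1\|$; then $2\sin(\angle a,b/2) = |\hat a - \hat b| \le |y_1 - y_2|/|a| \le \|F_2 - F_1\|$, which gives exactly the claimed inequality after taking the max over $x$.

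Concretely, the steps in order: (i) fix $x \neq 0$, set $a = (x, F_1(x))$, $b = (x, F_2(x))$; (ii) observe $b - a = (0, F_2(x) - F_1(x))$ with $|b-a| = |F_2(x) - F_1(x)| \le \|F_2 - F_1\|\,|x| \le \|F_2 - F_1\|\,|a|$; (iii) use the identity $2\sin(\angle a, b/2) = |a/|a| - b/|b||$ together with the elementary inequality $|a/|a| - b/|b|| \le |b - a|/\max(|a|,|b|) \le |b-a|/|a|$ (this last inequality, valid for any two nonzero vectors, is the one computational point to verify — it follows from $\bigl| \frac{a}{|a|} - \frac{b}{|b|} \bigr| \le \frac{|a-b|}{|a|}$ when $|a| \le |b|$, proved by writing $\frac{a}{|a|} - \frac{b}{|b|} = \frac{a-b}{|a|} + b\bigl(\frac{1}{|a|} - \frac{1}{|b|}\bigr)$ and noting the two terms, suitably oriented, do not increase the norm — or more robustly by squaring); (iv) chain the estimates to get $2\sin(\angle a,b/2) \le \|F_2 - F_1\|$; (v) since this holds for the particular $b \in L_2$ for every $x$, and $x \mapsto a$ ranges over all of $L_1\setminus\{0\}$, conclude $\min_{u_2 \in L_2} \angle a, u_2 \le \angle a, b$, hence $\angle L_1, L_2 \le \theta$ with $2\sin(\theta/2) \le \|F_2 - F_1\|$, which is the assertion. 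The main obstacle is step (iii): pinning down the correct elementary chord-vs-displacement inequality for unequal-length vectors and making sure the $\max$ in the denominator lands on the side that produces the clean bound $\|F_2-F_1\|$ rather than a weaker constant; I expect this to be a one-paragraph computation once the right normalization is chosen.
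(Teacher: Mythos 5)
Your proposal takes essentially the same route as the paper: for each point $x$, pair the two points $a=(x,F_1(x))\in L_1$ and $b=(x,F_2(x))\in L_2$ lying above it, note $b-a=(0,F_2(x)-F_1(x))$, and convert the chord between the normalized vectors into the claimed operator-norm bound. The reduction from $\angle L_1,L_2$ to a single such pair and the final chaining are sound. The place where you explicitly flag ``the one computational point to verify'' is, however, stated incorrectly as written: the first inequality in the chain
\[
\Bigl|\tfrac{a}{|a|}-\tfrac{b}{|b|}\Bigr|\;\le\;\frac{|a-b|}{\max(|a|,|b|)}
\]
is false in general. Take $a=(1,0)$, $b=(-2,0)$: the left-hand side is $2$, while $|a-b|/\max(|a|,|b|)=3/2$. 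The correct elementary inequality has $\min$, not $\max$, in the denominator; writing $r=|a|$, $s=|b|$, $c=\cos\angle a,b$, one checks that
\[
\bigl|\hat a-\hat b\bigr|^2=2-2c \;\le\; \frac{r^2+s^2-2rsc}{\min(r,s)^2}=\frac{|a-b|^2}{\min(r,s)^2}
\]
by factoring (assuming $r\le s$) $\;1-s^2/r^2-2c+2cs/r=(1-s/r)\bigl[(1+s/r)-2c\bigr]\le 0$. This version still closes your argument: since $|x|\le\min(|a|,|b|)$ (because $|a|^2=|x|^2+|F_1(x)|^2\ge|x|^2$ and likewise for $b$), you get $|a-b|=|F_2(x)-F_1(x)|\le\|F_2-F_1\|\,|x|\le\|F_2-F_1\|\min(|a|,|b|)$, hence $2\sin(\angle a,b/2)\le\|F_2-F_1\|$ exactly as desired.

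The paper's proof sidesteps this case analysis by a cleaner normalization: it takes $|v_0|=1$ so that both $|(v_0,F_1(v_0))|\ge 1$ and $|(v_0,F_2(v_0))|\ge 1$, and then uses the simpler fact that for two vectors of norm $\ge 1$ one has $|\hat a-\hat b|\le|a-b|$ (affine in $c$, nonnegative at $c=\pm1$ since $(r-s)^2\ge0$ and $(r+s)^2\ge4$). You may prefer that formulation since it avoids the $\min$/$\max$ bookkeeping entirely; either way, replace the erroneous $\max$ inequality with one of these two correct variants and the proof is complete.
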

\begin{proof}[Sketch of proof] 
The proof of Lemma \ref{lemma:FromOperatorNormToAngles} involves a number estimates on angles, which then are relatively straightforwardly reformulated in terms of an operator norm. 
\end{proof}

\begin{lemma}\label{lemma:MIsC11WithOptimalBounds}
If $\M \subset \R^d$ is a $n$-manifold  topologically embedded in $\R^d$, with positive reach,
then $\M$ is $C^{1,1}$ embedded 
 and if  $p_1, p_2 \in \M$ then
 \[
 \angle \Tan(p_1, \M), \Tan(p_2,\M) \leq  \frac{1}{ \rch_{loc.}(\M) }  d_{\M} (p_1, p_2),
\]
where $d_{\M}$ denotes the geodesic distance inside $\M$ and we say that $1 / \infty  =0$.

If now $\M$ is an $n$-manifold  with boundary, topologically embedded in $\R^d$, the same property holds when the geodesic between $p_1$ and $p_2$
avoids the boundary.  
  \end{lemma}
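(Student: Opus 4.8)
The plan is to assemble Lemma~\ref{lemma:MIsC11WithOptimalBounds} from the local results already established, with the only genuinely new ingredient being the passage from local Lipschitz control to a global geodesic bound. First I would fix a point $p\in\M$ and invoke Lemma~\ref{lemma:phiIsC11}: for every $\epsilon>0$ there is a neighbourhood $U_p\cap B(0,\alpha)^\circ$ on which the chart $\Phi(x)=p+x+\phi(x)$ has $\bigl(\tfrac{1}{\rch_{loc.}(p,\M)-\epsilon}\bigr)$-Lipschitz derivative $D\phi$ (with $\rch_{loc.}(p,\M)$ replaced by any $R>\epsilon$ when it is infinite). Since $D\Phi(x)$ is built from $D\phi(x)$ and the identity on $T_p\M$, its graph is precisely $\Tan(\Phi(x),\M)$, so Lemma~\ref{lemma:FromOperatorNormToAngles} converts the operator-norm bound into an angle bound:
\[
2\sin\frac{\angle \Tan(\Phi(y_1),\M),\Tan(\Phi(y_2),\M)}{2}\le \|D\phi(y_2)-D\phi(y_1)\|\le \frac{|y_2-y_1|}{\rch_{loc.}(p,\M)-\epsilon}.
\]
Using $2\sin(\theta/2)\ge \theta$ for $\theta\in[0,\pi]$ on the left, this gives a bound on $\angle\Tan(\Phi(y_1),\M),\Tan(\Phi(y_2),\M)$ by $\tfrac{|y_2-y_1|}{\rch_{loc.}(p,\M)-\epsilon}$, and since the chart $\Phi$ is $1$-Lipschitz to first order (its derivative has norm close to $1$ near $0$), the domain distance $|y_2-y_1|$ is controlled by — in fact arbitrarily close to — the arc length along $\M$ between $\Phi(y_1)$ and $\Phi(y_2)$. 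Taking $\epsilon\to 0$ then yields: the map $\Tan:\M\to\Grass(n,\R^d)$ is \emph{locally} $\bigl(\tfrac{1}{\rch_{loc.}(p,\M)}\bigr)$-Lipschitz with respect to geodesic distance near each $p$, hence in particular $\M$ is $C^{1,1}$ embedded.

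The second part, the global geodesic inequality, is where the work lies. Given $p_1,p_2\in\M$, if they lie in different connected components we have $d_\M(p_1,p_2)=\infty$ and there is nothing to prove, so assume they are joined by a path of finite length; by Menger's lemma (Lemma~\ref{Menger}) there is a minimizing geodesic $\gamma:[0,L]\to\M$ with $L=d_\M(p_1,p_2)$, parametrized by arc length. The idea is to integrate the infinitesimal tangent-variation bound along $\gamma$. Concretely, define $g(s)=\Tan(\gamma(s),\M)\in\Grass(n,\R^d)$; I want to show $g$ is $\tfrac{1}{\rch_{loc.}(\M)}$-Lipschitz as a map $[0,L]\to\Grass$, since then $\angle g(0),g(L)\le \tfrac{L}{\rch_{loc.}(\M)}$, which is exactly the claim once we note $\rch_{loc.}(\M)=\inf_{p}\rch_{loc.}(p,\M)\le \rch_{loc.}(\gamma(s),\M)$ for every $s$ (so the uniform constant $1/\rch_{loc.}(\M)$ dominates all the local ones). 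To get Lipschitzness of $g$ on $[0,L]$: for each $s_0$ the local statement above gives an interval $(s_0-\delta,s_0+\delta)$ on which $\angle g(s),g(s')\le \tfrac{|s-s'|}{\rch_{loc.}(\gamma(s_0),\M)-\epsilon}\le\tfrac{|s-s'|}{\rch_{loc.}(\M)-\epsilon}$, using here that along a \emph{minimizing} geodesic the arc length between $\gamma(s)$ and $\gamma(s')$ is exactly $|s-s'|$, so the ``$|y_2-y_1|\le$ arc length'' estimate is tight. A compactness argument on $[0,L]$ (finite subcover of these intervals, chaining the triangle inequality on the metric $\angle\cdot,\cdot$) upgrades this to the global $\tfrac{|s-s'|}{\rch_{loc.}(\M)-\epsilon}$ bound on all of $[0,L]$; then $\epsilon\to0$.

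For the manifold-with-boundary case the same argument goes through verbatim provided the minimizing geodesic between $p_1$ and $p_2$ stays in the interior $\M\setminus\partial\M$, because Lemma~\ref{lemma:embeddedManifoldPositiveReachThenC1Embedded} and hence Lemma~\ref{lemma:phiIsC11} only supply interior charts; one restricts attention to a neighbourhood of $\gamma([0,L])$ that avoids $\partial\M$ and runs the covering argument there. The main obstacle I anticipate is the bookkeeping around the chart's distortion: Lemma~\ref{lemma:phiIsC11} controls $D\phi$, but to conclude that $|y_2-y_1|$ is genuinely bounded by (and in the limit equal to) the geodesic arc length I need that $D\Phi$ has operator norm tending to $1$ as the neighbourhood shrinks, i.e. that $\|D\phi(0)\|=0$ — which follows because $D\Phi(0)$ maps $T_p\M$ isometrically onto $\Tan(p,\M)=T_p\M$, so $D\phi(0)=0$ — combined with continuity of $D\phi$. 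Handling this cleanly, and making sure the $\epsilon$'s and the $2\sin(\theta/2)\ge\theta$ slack do not accumulate but wash out in the limit, is the delicate part; everything else is assembling Lemmas~\ref{lemma:phiIsC11}, \ref{lemma:FromOperatorNormToAngles} and \ref{Menger} in the right order.
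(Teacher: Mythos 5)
Your overall architecture matches the paper's: localize via Lemma \ref{lemma:phiIsC11}, convert the operator-norm bound to an angle bound via Lemma \ref{lemma:FromOperatorNormToAngles}, then chain along a Menger geodesic using a compactness/subdivision argument and take limits. However, the key local step as you have written it contains a sign error that is not cosmetic. You invoke the inequality $2\sin(\theta/2)\ge\theta$ for $\theta\in[0,\pi]$, but this is false: since $\sin x\le x$ one has $2\sin(\theta/2)\le\theta$, with strict inequality for $\theta>0$ (e.g. $2\sin(\pi/2)=2<\pi$). So from Lemma \ref{lemma:FromOperatorNormToAngles}'s output $2\sin(\angle/2)\le\|D\phi(y_2)-D\phi(y_1)\|$ you cannot directly conclude $\angle\le\|D\phi(y_2)-D\phi(y_1)\|$; that would need $\angle\le 2\sin(\angle/2)$, which is the wrong way round.

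The paper resolves exactly this point by introducing a second parameter $\theta$ controlling the size of each chart: in the neighbourhood $V_q$ the derivative of $\phi_q$ is bounded by $\tan\theta$, which forces each local angle $\angle\Tan(\gamma(t'_j),\M),\Tan(\gamma(t'_{j+1}),\M)$ to satisfy $\angle<2\theta$. Since $x\mapsto\sin x/x$ is decreasing, $2\sin(\angle/2)>\frac{\sin\theta}{\theta}\,\angle$, so one picks up a correction factor $\theta/\sin\theta>1$ that is uniform over the whole subdivision and tends to $1$ as $\theta\to 0$. The final bound is $\frac{\theta}{(R-\epsilon)\sin\theta}\,d_\M(p_1,p_2)$, and one lets \emph{both} $\theta\to 0$ and $\epsilon\to 0$. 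You gesture at this when you say the ``slack'' should ``wash out in the limit'', but the concrete mechanism (the uniform angle bound $\angle<2\theta$ in each chart, the monotonicity of $\sin x/x$, and the resulting multiplicative $\theta/\sin\theta$ factor) is missing, and the written inequality is in the wrong direction. Without that, the chaining argument as you state it does not close. A minor further point: the paper also disposes of the case $\rch_{loc.}(\M)=\infty$ first (each component is then affine, by Theorem \ref{theorem:ReachEquivalentMetricDistorsion}), which is cleaner than threading that case through the $\epsilon$-bookkeeping.
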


\begin{proof}[Sketch of proof] 
The proof consists in gluing the local estimates on the angle between tangent spaces that follow by combining Lemmas \ref{lemma:phiIsC11} and \ref{lemma:FromOperatorNormToAngles} along a geodesic. 
\end{proof}


We can now prove one of our main theorems:
\ReachImpliesQuantifiedLipschitzDerivative*

\begin{proof}[Sketch of proof] One direction of the proof is given by Lemma \ref{lemma:MIsC11WithOptimalBounds}, the inverse statement relies on the second statement of Lemma \ref{lemma:VDotphiIsSemiConvex} together with a significant amount of computation. 
\end{proof}

\begin{lemma}\label{lemma:LocalReachAsTangentVaraiationBoundOnGeodesics}
 If $\M \subset \R^d$ is a $n$-manifold  topologically embedded in $\R^d$, 
 and $\gamma$ a geodesic in $\M$ of length $\ell$, the local reach of 
 $\gamma([0,\ell])$ is upper bounded by the local reach of $\M$ and in particular:
\begin{align}\label{eq:TightTgtVariationGeodesics}
&\forall s_1,s_2 \in [0,\ell ], &&\angle \dot{\gamma}(s_1),  \dot{\gamma}(s_2) \leq \frac{1}{\rch_{loc.}(\M)} |s_2 - s_1 |
\end{align}
 \end{lemma}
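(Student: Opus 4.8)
The plan is to transfer the tangent variation bound already established for $\M$ (Lemma \ref{lemma:MIsC11WithOptimalBounds}, equivalently the left-hand side of \eqref{eq:TightTgtVariation}) onto the one-dimensional submanifold $\gamma([0,\ell])$. The key geometric input is that a minimizing geodesic $\gamma$ in $\M$ is an arc-length parametrized curve whose velocity $\dot\gamma(s)$ spans $\Tan(\gamma(s),\gamma([0,\ell]))$, and whose velocity lies in $\Tan(\gamma(s),\M)$. First I would observe that for $s_1,s_2\in[0,\ell]$ the geodesic distance inside $\gamma([0,\ell])$ between $\gamma(s_1)$ and $\gamma(s_2)$ is exactly $|s_2-s_1|$ (this is what it means for $\gamma$ to be a unit-speed shortest path: no shorter path along the curve exists, and the curve has length equal to the parameter difference). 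Consequently, applying Lemma \ref{lemma:MIsC11WithOptimalBounds} to the manifold-with-boundary $\M' := \gamma([0,\ell])$ — a one-dimensional $C^{1,1}$ manifold, being a $C^{1,1}$ curve by virtue of $\M$ being $C^{1,1}$ embedded — yields
\[
\angle \dot\gamma(s_1),\dot\gamma(s_2) \;=\; \angle \Tan(\gamma(s_1),\M'),\Tan(\gamma(s_2),\M') \;\leq\; \frac{1}{\rch_{loc.}(\M')}\, |s_2-s_1|,
\]
since the geodesic in $\M'$ joining the two points is $\gamma$ itself, which lies in the interior whenever $s_1,s_2\in(0,\ell)$, and the endpoint cases follow by continuity.

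It then remains to prove the claimed comparison $\rch_{loc.}(\M') \geq \rch_{loc.}(\M)$, i.e. that restricting to the geodesic can only increase the local reach; this is the sentence "the local reach of $\gamma([0,\ell])$ is upper bounded by the local reach of $\M$", stated slightly confusingly — the inequality giving the bound above goes in the direction $\rch_{loc.}(\gamma([0,\ell])) \geq \rch_{loc.}(\M)$. I would derive this from the metric characterization of the reach, Theorem \ref{theorem:ReachEquivalentMetricDistorsion}, together with Definition \ref{def:LocalGolobalReach}: for $\rho$ small, $\gamma([0,\ell]) \cap B(p,\rho)$ is a subarc of $\gamma$, hence geodesically convex, and its intrinsic metric agrees with the restriction of $d_\M$; because $\gamma([0,\ell])\cap B(p,\rho) \subset \M\cap B(p,\rho)$ and the geodesic distance within the smaller set dominates that within the larger, the set of radii $r$ satisfying the distortion inequality \eqref{eq:MetricDefREach} for $\gamma([0,\ell])\cap B(p,\rho)$ contains the corresponding set for $\M \cap B(p,\rho)$ (this is exactly the monotonicity argument invoked in Remark \ref{remark:ReachLocalWellDefined}, citing \cite[Lemma 5]{attali2015geometry}). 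Taking reach and then the $\rho\to0$ limit gives $\rch_{loc.}(p,\gamma([0,\ell])) \geq \rch_{loc.}(p,\M) \geq \rch_{loc.}(\M)$, and taking the infimum over $p$ on the curve yields $\rch_{loc.}(\gamma([0,\ell])) \geq \rch_{loc.}(\M)$. Substituting into the displayed inequality finishes \eqref{eq:TightTgtVariationGeodesics}.

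The main obstacle I anticipate is the bookkeeping around the metric comparison: one must be careful that $\gamma([0,\ell])\cap B(p,\rho)$ is genuinely a single subarc (true once $\rho$ is below the injectivity-type scale, which holds because $\gamma$ is an embedded geodesic and $\rho < \rch(\M)$ makes $\M\cap B(p,\rho)$ geodesically convex per \cite[Corollary 1]{TangentVarJournal}), and that the intrinsic distance on this subarc coincides with the restriction of $d_\M$ rather than being strictly larger — which again uses that $\gamma$ is a shortest path in $\M$. A secondary subtlety is the degenerate case $\ell$ exceeding the scale on which $\gamma$ stays embedded, or $\gamma$ being a closed geodesic; but since the statement only concerns the tangent angle along $\gamma$ and the curve $\gamma([0,\ell])$ has positive reach inherited from $\M$, these cases are handled by the same local argument applied at each point. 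Everything else is a direct application of results already in hand.
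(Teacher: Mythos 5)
Your proof is correct and follows the same route as the paper's: combine the metric characterization of reach (Theorem \ref{theorem:ReachEquivalentMetricDistorsion}) to establish $\rch_{loc.}(\gamma([0,\ell]))\geq\rch_{loc.}(\M)$, then apply Lemma \ref{lemma:MIsC11WithOptimalBounds} to the one-dimensional manifold-with-boundary $\gamma([0,\ell])$. Your remark that the phrase ``upper bounded by'' in the statement points the wrong way is a correct catch --- the paper's own proof, like yours, relies on $\rch_{loc.}(\M)$ being a \emph{lower} bound for $\rch_{loc.}(\gamma([0,\ell]))$, which follows because a subarc of a minimizing geodesic carries the restricted intrinsic metric of $\M$ while imposing fewer pairs to test in \eqref{eq:MetricDefREach}.
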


\begin{proof}[Sketch of proof] The lemma follows by combining Theorem \ref{theorem:ReachEquivalentMetricDistorsion} and Lemma \ref{lemma:MIsC11WithOptimalBounds}.  
\end{proof}

\RPosReachImpliesOptimalSizeNeighbourAsGraph*

{
\begin{proof}[Sketch of proof] By combining Theorems \ref{ReachImpliesQuantifiedLipschitzDerivative} and \ref{theorem:ReachEquivalentMetricDistorsion} with the submersion theorem we can establish a local diffeomorphism. The global diffeomorphism is established by proving that the covering number is one. 
\end{proof}

}




Lemma \ref{cor:OtherPaper} provides us with an explicit bound on $\alpha$ in Lemmas \ref{lemma:VDotphiIsSemiConvex} and \ref{lemma:phiIsC11}. Such a bound is useful for explicit calculations, see e.g. \cite{FreeLunch}. 
\begin{lemma} \label{cor:OtherPaper} 
As long as 
\[
 \frac{R- \epsilon}{R} \geq 1/2, 
\] or equivalently $\epsilon/ R \leq 1/2$,
it suffices for  $\alpha$, as defined in first statement of Lemmas \ref{lemma:VDotphiIsSemiConvex} and \ref{lemma:phiIsC11},  to satisfy
\[
\alpha \leq \sqrt{\epsilon R} .
\]
\end{lemma}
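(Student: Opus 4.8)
The plan is to trace back through the chain of local constructions—Lemma~\ref{lemma:embeddedManifoldPositiveReachThenC1Embedded}, Lemma~\ref{lemma:VDotphiIsSemiConvex} and Lemma~\ref{lemma:phiIsC11}—and record, at each step where an ``$\alpha>0$ such that \dots'' is produced, an explicit inequality sufficient for the conclusion. The governing estimate is the one in Lemma~\ref{lemma:SmallLemma}: the constant $\alpha$ in Lemma~\ref{lemma:VDotphiIsSemiConvex} is dictated by the requirement that for $|y|,|y'|\leq\alpha$ the tangent spaces $\Tan(\Phi(y),\M)$ and $\Tan(\Phi(y'),\M)$ make an angle less than some $\theta$, and that $\Phi(y')-\Phi(y)$ makes an angle less than $\theta$ with $\Tan(p,\M)$. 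By Lemma~\ref{lemma:MIsC11WithOptimalBounds} (or its ingredients, Lemmas~\ref{lemma:phiIsC11} and~\ref{lemma:FromOperatorNormToAngles}) we already have a \emph{quantitative} bound on the tangent variation: the angle between tangent spaces at two points grows at most like $d_\M/\rch_{loc.}$, and the chord–tangent angle is controlled by Lemma~\ref{Lem:distanceToTangent}, $\sin\angle(q-p,\Tan(p,\M))\leq |p-q|/(2R)$. So the only genuinely new work is bookkeeping: express the $\theta$ needed in the convexity argument of Lemma~\ref{lemma:VDotphiIsSemiConvex} in terms of $\epsilon/R$, then invert the tangent-variation bound to get the corresponding $\alpha$ in terms of the geodesic (hence Euclidean, up to the $C^{1,1}$ bi-Lipschitz factor) distance.

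Concretely, I would first recall from the proof of Lemma~\ref{lemma:VDotphiIsSemiConvex} what angular smallness is actually used: inspecting Figure~\ref{fig:GraphOfSemiConvex} and the associated estimate, the convexity of $x\mapsto \tfrac12 x^2-(R-\epsilon)\langle v,\phi(x)\rangle$ on $B(0,\alpha)^\circ$ needs the relevant angles there to be bounded by something of order $\epsilon/R$ (this is where the hypothesis $\epsilon/R\leq 1/2$, equivalently $(R-\epsilon)/R\geq 1/2$, enters — it keeps a denominator of the form $R-\epsilon$ comparable to $R$, so the various first-order Taylor remainders are absorbed). Next I would feed this $\theta\sim\epsilon/R$ into Lemma~\ref{lemma:SmallLemma}: using the tangent-variation bound $\angle\Tan(\Phi(y),\M),\Tan(\Phi(y'),\M)\leq d_\M(\Phi(y),\Phi(y'))/\rch_{loc.}(\M)\leq d_\M/(R)$ and $d_\M\leq C|\,y-y'\,|$ near $p$ with $C\to 1$, together with the chord bound $\sin\angle(\Phi(y')-\Phi(y),\Tan(p,\M))\leq |\Phi(y)-\Phi(y')|/(2R)$, one sees that taking $|y|,|y'|\leq\alpha$ with $\alpha\lesssim\sqrt{\epsilon R}$ forces both displayed angles in \eqref{eq:BoundYTheta} below the required $\theta$. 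Squaring enters precisely because $\tfrac12 x^2$ is quadratic while $\langle v,\phi(x)\rangle$ vanishes to second order at $0$, so the obstruction to convexity at scale $\alpha$ is of size $\sim\alpha^2/R$ against a margin $\sim\epsilon$; balancing $\alpha^2/R\lesssim\epsilon$ gives $\alpha\lesssim\sqrt{\epsilon R}$.

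Finally, for Lemma~\ref{lemma:phiIsC11} the same $\alpha$ works, since that lemma is deduced from Lemma~\ref{lemma:VDotphiIsSemiConvex} via Lemma~\ref{lemma:SemiConcaveAndSemiConvexIffC11} applied to each coordinate $\langle v,\phi\rangle$ of $\phi$ (with $v$ ranging over an orthonormal basis of $\Nor(p,\M)$, which does not shrink the domain), so no further restriction on $\alpha$ is incurred. Thus it remains only to check that the constant $1$ in front of $\sqrt{\epsilon R}$ is admissible, i.e. that none of the intermediate inequalities loses more than the slack provided by the regime $\epsilon/R\leq 1/2$; I would verify this by carrying the explicit constants through the two or three inequalities above rather than through the whole proof. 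The main obstacle is the first step — pinning down exactly which angle in the proof of Lemma~\ref{lemma:VDotphiIsSemiConvex} must be small and with which explicit constant, since the published proof states the existence of $\alpha$ only qualitatively; once that constant is identified, the remaining estimates are routine and the bound $\alpha\leq\sqrt{\epsilon R}$ follows, possibly after mild sharpening using $\epsilon/R\leq 1/2$.
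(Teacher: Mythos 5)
Your overall plan---trace $\alpha$ back through Lemma~\ref{lemma:SmallLemma} to the angular constraint appearing in the proof of Lemma~\ref{lemma:VDotphiIsSemiConvex}, then invert the tangent-variation bound from Lemma~\ref{lemma:MIsC11WithOptimalBounds}---is exactly what the paper does. The problem is the scaling you assign to the angle $\theta$. The constraint that the proof of Lemma~\ref{lemma:VDotphiIsSemiConvex} actually imposes is $\cos^3\theta > (R-\epsilon)/R$, i.e.\ $1-\cos^3\theta < \epsilon/R$; since $1-\cos^3\theta \approx \tfrac{3}{2}\theta^2$ for small $\theta$, this gives $\theta \lesssim \sqrt{\epsilon/R}$, not the $\theta\sim\epsilon/R$ you assert. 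Carrying $\theta\sim\epsilon/R$ through the tangent-variation bound $\angle \Tan(p,\M),\Tan(q,\M)\leq d_{\M}(p,q)/R$ would force $\alpha\lesssim\epsilon$, which for small $\epsilon$ is far more restrictive than $\sqrt{\epsilon R}$, so the chain of inequalities as you have written it does not reach the stated conclusion.

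The square root therefore originates in the second-order Taylor expansion of $\cos^3\theta$ near $0$---the paper establishes $\cos^3\theta \leq 1-\theta^2$ on $[0,\pi/4]$, with $\epsilon/R\leq 1/2$ forcing $\theta\in[0,\pi/4]$ because $\cos^3\theta \geq 1/2 > \cos^3(\pi/4)$---and not in your alternative heuristic that ``squaring enters because $\phi$ vanishes to second order while $\tfrac12 x^2$ is quadratic.'' If one tries to make that heuristic precise (a Hessian perturbation of size $\sim\alpha/R$ balanced against the convexity margin $\epsilon/R$), it again produces $\alpha\lesssim\epsilon$, the wrong power. Once you replace $\theta\sim\epsilon/R$ by $\theta\leq\sqrt{\epsilon/R}$, the rest of your sketch---bound $\angle\Tan(p,\M),\Tan(q,\M)\leq d_{\M}(p,q)/R$ via Lemma~\ref{lemma:MIsC11WithOptimalBounds}, use $|y|\leq d_{\M}(p,q)$, and conclude $|y|\leq R\sqrt{\epsilon/R}=\sqrt{\epsilon R}$ suffices---is precisely the paper's proof.
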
 

The proof of this statement is relatively straightforward and can be found in the appendix. 



\section{Global and local reach}\label{section:GlobalLocalReach}

In this section we generalize the results of \cite{aamari2019estimating} on how the reach is attained from $C^2$ manifolds to arbitrary sets of positive reach. This result states that the reach is a consequence of either curvature maxima or so-called bottlenecks. In our result we replace the curvature by the local reach, which in turn is defined using local metric distortion, while the bottlenecks are similar in nature. In a number of cases we will have to assume that the set is bounded/compact.  

We recall the definition from the introduction:
\LocalGolobalReach*

\begin{lemma}\label{lemma:LocalReachLowerSC} 
The map $p\mapsto \rch_{loc.}(p, \Su)$  defined over $\Su$ is lower semi-continuous, in other words:
\[
\rch_{loc.}(p, \Su) = \lim_{\substack{\rho \rightarrow 0 \\ \rho >0}} \: \inf_{q\in \Su \cap B^{\circ}(p, \rho)} \:  \rch_{loc.}(q, \Su).
\]
\end{lemma}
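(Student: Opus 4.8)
The statement to prove is that $p \mapsto \rch_{loc.}(p,\Su)$ is lower semi-continuous, which here is phrased as the equality
\[
\rch_{loc.}(p, \Su) = \lim_{\substack{\rho \rightarrow 0 \\ \rho >0}} \: \inf_{q\in \Su \cap B^{\circ}(p, \rho)} \:  \rch_{loc.}(q, \Su).
\]
Denote by $L(p)$ the right-hand limit. One inequality is essentially formal: for any $\rho>0$ the infimum over $q \in \Su \cap B^\circ(p,\rho)$ includes $q=p$, so $\inf_{q} \rch_{loc.}(q,\Su) \le \rch_{loc.}(p,\Su)$ for every $\rho$, hence $L(p) \le \rch_{loc.}(p,\Su)$. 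Moreover $\rho \mapsto \inf_{q \in \Su \cap B^\circ(p,\rho)} \rch_{loc.}(q,\Su)$ is non-increasing in $\rho$, so the limit $L(p)$ exists in $[0,+\infty]$ and equals $\sup_{\rho>0}\inf_{q\in \Su\cap B^\circ(p,\rho)} \rch_{loc.}(q,\Su)$. The real content is the reverse inequality $L(p) \ge \rch_{loc.}(p,\Su)$, i.e. that points $q$ near $p$ cannot have local reach much smaller than that of $p$.

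\textbf{Key steps.} First I would reduce to the case $\rch(\Su) > 0$: if $\rch(\Su)=0$ there is nothing delicate globally, but the statement is local, so I would instead note that $\rch_{loc.}(p,\Su)$ only depends on $\Su \cap B(p,\rho_0)$ for small $\rho_0$, and if this is trivial the claim degenerates; the interesting regime is where $\rch_{loc.}(p,\Su)>0$. Assume then $0 < r < \rch_{loc.}(p,\Su)$; I want to show that for $q$ sufficiently close to $p$, also $\rch_{loc.}(q,\Su) \ge r$. By Definition~\ref{def:LocalGolobalReach}, $\rch_{loc.}(p,\Su) = \lim_{\rho \to 0} \rch(\Su \cap B(p,\rho))$, so there is $\rho_0 \in (0, \rch(\Su))$ (take $\rho_0$ also small compared to $\rch(\Su)$ if the latter is positive, otherwise just small) with $\rch(\Su \cap B(p,\rho_0)) > r$. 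Now the crucial geometric observation: for $q \in \Su$ with $|q-p|$ small, there is $\rho_1>0$ with $\Su \cap B(q,\rho_1) \subset \Su \cap B(p,\rho_0)$ — concretely $\rho_1 = \rho_0 - |q-p|$ works once $|q-p| < \rho_0$. Then I would invoke the monotonicity of reach under intersection with balls that is used in Remark~\ref{remark:ReachLocalWellDefined} (via Theorem~\ref{theorem:ReachEquivalentMetricDistorsion} and geodesic convexity of $\Su \cap B(p,\rho)$ from \cite[Corollary 1]{TangentVarJournal}): passing to a geodesically convex subset of $\Su \cap B(p,\rho_0)$ can only increase the reach, so $\rch(\Su \cap B(q,\rho_1)) \ge \rch(\Su \cap B(p,\rho_0)) > r$ — provided $\Su \cap B(q,\rho_1)$ is itself geodesically convex inside $\Su \cap B(p,\rho_0)$, which holds since $\rho_1 < \rch(\Su\cap B(p,\rho_0))$. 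Taking $\rho \to 0$ in the definition of $\rch_{loc.}(q,\Su)$ gives $\rch_{loc.}(q,\Su) \ge \rch(\Su \cap B(q,\rho_1)) > r$ (here I use the monotonicity of $\rho \mapsto \rch(\Su\cap B(q,\rho))$ again). Hence $\inf_{q \in \Su \cap B^\circ(p,\rho_0)} \rch_{loc.}(q,\Su) \ge r$, so $L(p) \ge r$; letting $r \uparrow \rch_{loc.}(p,\Su)$ finishes the proof.

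\textbf{Main obstacle.} The delicate point is the chain of reach-monotonicity invocations: I need that intersecting a positive-reach set with a metric ball (of radius below the reach) does not decrease the reach, and I need this not for $\Su$ directly but for the auxiliary set $\Su \cap B(p,\rho_0)$, so I must check that $\Su \cap B(q,\rho_1)$ is geodesically convex \emph{as a subset of} $\Su \cap B(p,\rho_0)$ (not merely of $\Su$), and that the geodesic distances involved match up — i.e. that geodesics of $\Su \cap B(p,\rho_0)$ between points of $\Su \cap B(q,\rho_1)$ stay inside the smaller ball. This is exactly the kind of bookkeeping done in \cite[Lemma 5]{attali2015geometry} and in Remark~\ref{remark:ReachLocalWellDefined}, so I would cite those, but care is needed because the ``ambient'' set is now $\Su \cap B(p,\rho_0)$ rather than $\R^d$ or $\Su$. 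A clean way around this is to observe that since $\rho_0 < \rch(\Su)$, geodesics in $\Su$ between points of $\Su \cap B(q,\rho_1)$ already stay within $\Su \cap B(p,\rho_0)$ (by the geodesic convexity of balls in positive-reach sets), so the geodesic metrics of $\Su$, $\Su \cap B(p,\rho_0)$ and $\Su \cap B(q,\rho_1)$ all agree on the relevant pairs of points, and then Theorem~\ref{theorem:ReachEquivalentMetricDistorsion} applied to each of the three sets directly yields the inequalities $\rch(\Su \cap B(q,\rho_1)) \ge \rch(\Su \cap B(p,\rho_0))$ since the constraint set in \eqref{eq:MetricDefREach} only shrinks. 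Everything else is routine passage to limits using the established monotonicity in $\rho$.
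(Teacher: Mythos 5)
Your proposal is correct and rests on the same two ingredients as the paper's proof: the locality of $\rch_{loc.}$ (a small ball around a nearby point $q$ sits inside a small ball around $p$) and the monotonicity of $\rho\mapsto\rch(\Su\cap B(\cdot,\rho))$ established in Remark~\ref{remark:ReachLocalWellDefined}. The paper packages this more compactly by noting $\rch_{loc.}(q,\Su)=\rch_{loc.}(q,\Su\cap B(p,\rho))$ for $q\in B^\circ(p,\rho)$ and then applying the chain $\inf_q\rch_{loc.}(q,\Su')\geq\rch_{loc.}(\Su')\geq\rch(\Su')$ to $\Su'=\Su\cap B(p,\rho)$, thereby avoiding the explicit $\rho_1=\rho_0-|q-p|$ ball-in-ball bookkeeping and the geodesic-convexity verification that you correctly flag as the delicate point.
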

\begin{proof}[Sketch of proof] The proof is a direct consequence of definition. 
\end{proof}

\begin{remark}
As a consequence, when $\Su$ is compact, the $\inf$ in \eqref{eq:DefLocReach2} is a $\min$, in other words there exists at least one point $p\in \Su$
such that $\rch_{loc.}(p, \Su) = \rch_{loc.}(\Su)$.
\end{remark} 

\begin{definition} A bottleneck is a pair of points $(a,b) \in \Su ^2$ such that 
\[B^\circ \left( \frac{a+b}{2} , \frac{|a-b|}{2} \right) \cap \Su = \emptyset. \] 
\end{definition} 


We have the following variant of Lemma 12 of \cite{TangentVarJournal}, which in turn is a relatively straightforward generalization to arbitrary dimension of Property I of \cite{attali:hal-00201055}:
\begin{lemma} \label{lemma:LocalReachMakeGeodesicsStraight}
If a geodesic (or more generally a curve) $\gamma$ parametrized according to arc length on the interval $[0,\ell]$ satisfies 
\begin{equation}\label{equation:GeodesicTangentLipschitz}
\angle \dot{\gamma} (s_1) ,\dot{\gamma} (s_2) \leq \frac{1}{R}  | s_2 -s_1 |,
\end{equation}
then as long as $\ell \leq {2} \pi R$.
\[
|q-p| \geq 2 \, \rch(\M) \sin \left(\frac{ \ell }{2R} \right),
\]
where $p= \gamma(0)$ and $q=\gamma (\ell)$. 
\end{lemma}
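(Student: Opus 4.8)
The plan is to reduce the lower bound on $|q-p|$ to a one–variable comparison argument in the plane spanned by $\dot\gamma(0)$ after a suitable rotation trick. First I would fix the arc-length parametrization $\gamma:[0,\ell]\to\M$ with $\gamma(0)=p$, $\gamma(\ell)=q$, and write $|q-p| = \bigl|\int_0^\ell \dot\gamma(s)\,\diff s\bigr|$. The hypothesis \eqref{equation:GeodesicTangentLipschitz} says that the unit tangent $\dot\gamma$ is a curve on $\Sphere^{d-1}$ that is $\frac1R$-Lipschitz with respect to arc length; hence for any $s_1,s_2$ the angle $\angle\dot\gamma(s_1),\dot\gamma(s_2)$ is at most $\frac{|s_2-s_1|}{R}$, and for $\ell\le 2\pi R$ all these angles stay below $\pi$, so no cancellation pathologies occur. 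The key reduction is: among all unit-speed curves $c:[0,\ell]\to\R^d$ whose tangent satisfies the same Lipschitz bound, the chord length $\bigl|\int_0^\ell \dot c\bigr|$ is minimized (for fixed $\ell$) by the curve whose tangent rotates at the \emph{maximal} rate $\frac1R$ within a fixed $2$-plane, i.e.\ by a circular arc of radius $R$. For that extremal curve the chord is exactly $2R\sin\!\bigl(\frac{\ell}{2R}\bigr)$.

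The way I would make this rigorous without variational machinery is the projection trick already used elsewhere in the paper (cf.\ the proof of Lemma~\ref{lemma:LocalReachAsTangentVaraiationBoundOnGeodesics} via Theorem~\ref{theorem:ReachEquivalentMetricDistorsion}). Consider the midpoint parameter and the unit vector $e \defunder{=} \dot\gamma(\ell/2)$. For each $s$, $\angle(\dot\gamma(s),e)\le \frac{|s-\ell/2|}{R}\le \frac{\ell}{2R}$, so $\langle \dot\gamma(s),e\rangle \ge \cos\!\bigl(\frac{|s-\ell/2|}{R}\bigr)$, using that $\frac{\ell}{2R}\le\pi$ keeps the cosine monotone on the relevant range. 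Then
\[
|q-p| \;\ge\; \langle q-p,\,e\rangle \;=\; \int_0^\ell \langle \dot\gamma(s),e\rangle\,\diff s \;\ge\; \int_0^\ell \cos\!\Bigl(\frac{|s-\ell/2|}{R}\Bigr)\,\diff s
\;=\; 2\int_0^{\ell/2}\cos\!\Bigl(\frac{t}{R}\Bigr)\,\diff t \;=\; 2R\sin\!\Bigl(\frac{\ell}{2R}\Bigr).
\]
This already yields $|q-p| \ge 2R\sin\!\bigl(\frac{\ell}{2R}\bigr)$. To get the stated inequality with $\rch(\M)$ in place of $R$, I would invoke that under the hypotheses of the ambient lemmas the bound \eqref{equation:GeodesicTangentLipschitz} holds with $R=\rch_{loc.}(\M)\ge\rch(\M)$ (by \eqref{leq:LocalReachLargerThanReach} and Lemma~\ref{lemma:LocalReachAsTangentVaraiationBoundOnGeodesics}), and then observe that $R\mapsto 2R\sin(\ell/(2R))$ is non-decreasing in $R$ on the range where $\ell/(2R)\le\pi/2$, or more carefully handle the full range $\ell\le 2\pi R$ by noting the cosine-integral computation above is exact for the stated $R$ and only strengthens when $R$ increases to the value we want; the cleanest route is simply to state the conclusion for the $R$ appearing in \eqref{equation:GeodesicTangentLipschitz} and note that the paper applies it with $R=\rch_{loc.}(\M)$.

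The main obstacle is the monotonicity/range bookkeeping: the elementary chain $|q-p|\ge\langle q-p,e\rangle\ge 2R\sin(\ell/(2R))$ is clean only while $\frac{\ell}{2R}\le\pi$ so that $\cos(|s-\ell/2|/R)$ is being integrated over an interval on which it is not forced to cancel, and the final rewriting in terms of $\rch(\M)$ needs the elementary fact that $x\mapsto \frac{\sin x}{x}$ is decreasing on $(0,\pi]$ to compare $2R\sin(\ell/(2R))$ with $2\rch(\M)\sin(\ell/(2\rch(\M)))$ when $R\ge\rch(\M)$. I would handle this by first proving the inequality verbatim with the hypothesis constant $R$, and then appending a one-line remark that since $t\mapsto \frac{\sin t}{t}$ decreases on $(0,\pi]$ and $\frac{\ell}{2R}\le\pi$, replacing $R$ by the smaller $\rch(\M)$ only decreases the right-hand side, giving the displayed bound. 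Everything else is the direct cosine integral above.
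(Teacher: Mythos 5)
Your core argument is exactly the paper's: project $q-p$ onto $e=\dot\gamma(\ell/2)$, use the Lipschitz bound and the monotonicity of cosine on $[0,\pi]$ to get $\langle\dot\gamma(s),e\rangle\ge\cos(|s-\ell/2|/R)$, and integrate to obtain $|q-p|\ge 2R\sin(\ell/(2R))$. The extra discussion you devote to reconciling $R$ with $\rch(\M)$ is addressing a typo in the lemma statement rather than a real gap: the paper's own proof concludes with $2R\sin(\ell/(2R))$, so the $\rch(\M)$ in the displayed conclusion should simply read $R$, and no monotonicity comparison between the two constants is needed.
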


Combining Lemmas \ref{lemma:LocalReachAsTangentVaraiationBoundOnGeodesics} and
 \ref{lemma:LocalReachMakeGeodesicsStraight} we get:
\begin{corollary}\label{lemma:ShortGeodesicsAreStraights}
If $\gamma$ is a geodesic in $\Su$ with length $\ell \leq 2 \pi \rch_{loc.}(\Su)$, then
\[
\ell \leq 2 \rch_{loc.}(\Su) \arcsin \frac{|p-q|}{2 \rch_{loc.}(\Su) }.
\]
\end{corollary}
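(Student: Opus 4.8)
The statement to prove is Corollary~\ref{lemma:ShortGeodesicsAreStraights}: if $\gamma$ is a geodesic in $\Su$ of length $\ell \leq 2\pi\,\rch_{loc.}(\Su)$, then $\ell \leq 2\,\rch_{loc.}(\Su)\arcsin\frac{|p-q|}{2\,\rch_{loc.}(\Su)}$, where $p=\gamma(0)$, $q=\gamma(\ell)$. The plan is to chain together the two results the corollary explicitly cites. First I would set $R \defunder{=} \rch_{loc.}(\Su)$ and dispose of the trivial case $R = \infty$ (then $\gamma$ must be a straight segment and both sides behave consistently with the $1/\infty = 0$ convention, or more simply the hypothesis $\ell \leq 2\pi R$ and the conclusion are interpreted in the limit), so assume $0 < R < \infty$.

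\textbf{Step 1: the geodesic has Lipschitz tangent.} Apply Lemma~\ref{lemma:LocalReachAsTangentVaraiationBoundOnGeodesics} to $\gamma$. That lemma asserts precisely that a geodesic $\gamma$ of length $\ell$ in $\M$ (here $\Su$, but the argument is the same — note the corollary is stated for general $\Su$, so one should either invoke the version of Lemma~\ref{lemma:LocalReachAsTangentVaraiationBoundOnGeodesics} that applies to sets of positive reach or remark that the geodesic itself, reparametrized, is a $1$-manifold) satisfies
\[
\forall s_1,s_2 \in [0,\ell], \qquad \angle \dot{\gamma}(s_1),\dot{\gamma}(s_2) \leq \frac{1}{\rch_{loc.}(\Su)}\,|s_2-s_1| = \frac{1}{R}|s_2-s_1|.
\]
This is exactly hypothesis \eqref{equation:GeodesicTangentLipschitz} of Lemma~\ref{lemma:LocalReachMakeGeodesicsStraight} with that value of $R$.

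\textbf{Step 2: feed this into the chord bound.} Since $\ell \leq 2\pi R$, Lemma~\ref{lemma:LocalReachMakeGeodesicsStraight} applies and yields
\[
|q-p| \geq 2 R \sin\!\left(\frac{\ell}{2R}\right).
\]
(Here one must be slightly careful: the conclusion of Lemma~\ref{lemma:LocalReachMakeGeodesicsStraight} as written has $\rch(\M)$ on the right, but the intended and correct reading — consistent with its proof and with Lemma~12 of \cite{TangentVarJournal} — has the same $R$ from \eqref{equation:GeodesicTangentLipschitz}; I would state it with $R$.) Now I want to invert this. Because $\ell \leq 2\pi R$ we have $\frac{\ell}{2R} \in [0,\pi]$, so $\sin\frac{\ell}{2R} \geq 0$ and the inequality gives $\frac{|p-q|}{2R} \geq \sin\frac{\ell}{2R} \geq 0$. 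To conclude $\frac{\ell}{2R} \leq \arcsin\frac{|p-q|}{2R}$ I need $\frac{\ell}{2R}$ to lie in $[0,\pi/2]$, the interval on which $\arcsin \circ \sin$ is the identity and $\sin$ is increasing; for $\frac{\ell}{2R} \in (\pi/2,\pi]$ one has instead $\arcsin(\sin\frac{\ell}{2R}) = \pi - \frac{\ell}{2R} < \frac{\ell}{2R}$, so monotonicity of $\arcsin$ applied to $\frac{|p-q|}{2R}\geq \sin\frac{\ell}{2R}$ alone is not enough. Rearranging, $\ell \leq 2R\arcsin\frac{|p-q|}{2R}$, which is the claim.

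\textbf{Main obstacle.} The genuinely delicate point is Step~2's inversion when $\frac{\ell}{2R}$ exceeds $\pi/2$: then $\arcsin$ does not simply undo $\sin$. I expect this is handled by the auxiliary fact — implicit in the proof of Lemma~\ref{lemma:LocalReachMakeGeodesicsStraight} or provable directly from the metric characterization Theorem~\ref{theorem:ReachEquivalentMetricDistorsion} — that a geodesic of length $> \pi R$ in a set with $\rch_{loc.} = R$ cannot be minimizing (its chord would have to re-increase, contradicting that $\gamma$ realizes $d_\Su(p,q)$, since by Theorem~\ref{theorem:ReachEquivalentMetricDistorsion} any two nearby points are joined by a path of length at most $2R\arcsin\frac{|a-b|}{2R} \leq \pi R$). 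Hence for a genuine minimizing geodesic one automatically has $\ell \leq \pi R$, i.e. $\frac{\ell}{2R} \leq \pi/2$, and the clean monotonicity argument closes the proof. I would make this reduction explicit: either restrict attention to $\ell \leq \pi R$ observing the case $\pi R < \ell \leq 2\pi R$ is vacuous for minimizing geodesics, or cite Theorem~\ref{theorem:ReachEquivalentMetricDistorsion} to bound $d_\Su(p,q) \leq \pi R$ directly. Everything else is the two-line chaining of Lemmas~\ref{lemma:LocalReachAsTangentVaraiationBoundOnGeodesics} and~\ref{lemma:LocalReachMakeGeodesicsStraight}.
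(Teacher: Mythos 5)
Your Steps~1 and~2 are exactly what the paper intends -- the corollary is prefaced only by ``Combining Lemmas~\ref{lemma:LocalReachAsTangentVaraiationBoundOnGeodesics} and~\ref{lemma:LocalReachMakeGeodesicsStraight} we get'' and no further argument is given -- and you are right to flag both the $\M$-versus-$\Su$ mismatch in Lemma~\ref{lemma:LocalReachAsTangentVaraiationBoundOnGeodesics} and the $\arcsin$-inversion issue. However, your proposed repair of the latter does not work, and in fact the corollary is false as printed. Take $\Su$ to be a three-quarter arc of a circle of radius $R$. Then $\rch_{loc.}(\Su)=R$, and the entire arc, of length $\ell=\frac{3}{2}\pi R\leq 2\pi R$, is a minimizing geodesic between its endpoints $p,q$ (it is the only path in $\Su$), yet $|p-q|=R\sqrt{2}$, so that $2R\arcsin\frac{|p-q|}{2R}=2R\arcsin\frac{\sqrt{2}}{2}=\frac{\pi R}{2}<\ell$. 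So $\ell>\pi R$ is \emph{not} vacuous for minimizing geodesics, the chord re-increasing along $\gamma$ (which does happen in this example) does not contradict minimality, and moreover Theorem~\ref{theorem:ReachEquivalentMetricDistorsion} bounds $d_{\Su}(p,q)$ by $\pi\rch(\Su)$ rather than by $\pi\rch_{loc.}(\Su)$, so it cannot be invoked the way you suggest to cap $\ell$.

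The correct fix is to strengthen the hypothesis to $\ell\leq\pi\rch_{loc.}(\Su)$ (and, if one wants $\arcsin$ unconditionally defined, to add $|p-q|\leq 2\rch_{loc.}(\Su)$). Then $\frac{\ell}{2R}\in[0,\pi/2]$, on which $\sin$ is increasing and $\arcsin\circ\sin$ is the identity, so your Steps~1--2 close cleanly. In the only place the corollary is used, in the proof of Theorem~\ref{theorem:ReachEquivalentMinLocalGlobal}, the geodesic is produced by Theorem~\ref{theorem:ReachEquivalentMetricDistorsion} applied with $\rch(\Su)$, which already gives $\ell\leq\pi\rch(\Su)\leq\pi\rch_{loc.}(\Su)$ and $|p-q|<2\rch(\Su)\leq 2\rch_{loc.}(\Su)$, so the strengthened hypothesis is available there.
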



The Lemma \ref{lemma:SemiContinuitySpecialCase} below is a specialization of 
\cite[Lemma 4.6]{LIEUTIERhomotopytype}, but we still provide a proof in the appendix. 
\begin{lemma}\label{lemma:SemiContinuitySpecialCase}
Let $x \notin \ax(\Su)$ and denote by $\tilde{x}$ 
the unique point   closest to $x$ in $\Su$.
\newline Then, for any $\epsilon>0$, there is $\alpha>0$ such that,
\begin{equation}\label{eq:SemiContinuitySpecialCase}
|y-x| < \alpha \Rightarrow \{ z \in \Su \mid d(y,z) = d(y, \Su) \} \subset B^{\circ}(  \tilde{x} , \epsilon ).
\end{equation}
\end{lemma}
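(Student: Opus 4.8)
The statement is a semicontinuity property of the set of closest points: if $x$ has a unique closest point $\tilde x$ in $\Su$, then for $y$ close enough to $x$, all closest points of $y$ are close to $\tilde x$. The plan is to argue by contradiction, exploiting the compactness of $\Su$ (or, if $\Su$ is merely closed, the compactness of a suitable closed ball around $x$ intersected with $\Su$, which is where all closest points of nearby $y$ must live). First I would observe that the distance function $z \mapsto d(z,\Su)$ is $1$-Lipschitz, so $d(y,\Su) \to d(x,\Su)$ as $y \to x$, and any closest point $z_y$ of $y$ satisfies $|z_y - x| \le |z_y - y| + |y - x| = d(y,\Su) + |y-x|$, which stays bounded; hence all the $z_y$ for $y$ in a fixed small ball around $x$ lie in a fixed compact subset $K = \Su \cap B(x, d(x,\Su)+1)$ of $\Su$.

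\textbf{The contradiction argument.} Suppose \eqref{eq:SemiContinuitySpecialCase} fails for some $\epsilon>0$. Then there is a sequence $y_n \to x$ and points $z_n \in \Su$ with $d(y_n, z_n) = d(y_n, \Su)$ but $z_n \notin B^\circ(\tilde x, \epsilon)$. By the compactness of $K$, pass to a subsequence so that $z_n \to z_\infty \in \Su$, and necessarily $|z_\infty - \tilde x| \ge \epsilon$, so $z_\infty \ne \tilde x$. Now take limits in the identity $d(y_n, z_n) = d(y_n, \Su)$: the left side tends to $|x - z_\infty|$ by joint continuity, and the right side tends to $d(x,\Su) = |x - \tilde x|$ by the Lipschitz property of the distance function. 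Therefore $|x - z_\infty| = |x - \tilde x| = d(x,\Su)$, so $z_\infty$ is a closest point of $x$ in $\Su$ distinct from $\tilde x$, contradicting the uniqueness of $\tilde x$ (equivalently, contradicting $x \notin \ax(\Su)$). This contradiction establishes the claim.

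\textbf{Main obstacle.} There is essentially no hard obstacle here; the only point requiring a little care is ensuring the relevant set of candidate closest points is compact so that one can extract a convergent subsequence — this is immediate when $\Su$ is compact and still easy in general because closest points of points near $x$ are confined to a bounded region of the closed set $\Su$. One should also note that the argument does not use positive reach at all: it is a general fact about the closest-point map at points off the medial axis, which is exactly why it can be quoted (as the referenced \cite[Lemma 4.6]{LIEUTIERhomotopytype}) in this more elementary form. I would write the proof in the appendix in the two short paragraphs above, making explicit the three ingredients: $1$-Lipschitz continuity of $d(\cdot,\Su)$, compactness for the subsequence extraction, and uniqueness of the closest point of $x$.
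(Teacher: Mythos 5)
Your proof is correct, but it takes a genuinely different route from the paper's. You argue by contradiction: extract a sequence $y_n \to x$ with closest points $z_n$ escaping $B^\circ(\tilde x,\epsilon)$, pass to a convergent subsequence by compactness, and observe the limit $z_\infty$ would be a second closest point of $x$, contradicting $x\notin\ax(\Su)$. The paper instead argues directly and constructively: it sets $\Su' := \Su \setminus B^\circ(\tilde x,\epsilon)$, notes that $d(x,\Su') > d(x,\Su)$ because $\tilde x\notin\Su'$ is the unique minimizer, defines $\alpha := \tfrac12\bigl(d(x,\Su') - d(x,\Su)\bigr)$, and then runs a short chain of triangle inequalities to show that for $|y-x|<\alpha$ every $z\in\Su'$ satisfies $d(y,z)>d(y,\Su)$, so no closest point of $y$ can lie in $\Su'$. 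Both arguments use compactness implicitly (you extract a convergent subsequence; the paper needs $d(x,\Su')$ to be attained), and both correctly exploit uniqueness of $\tilde x$. The paper's approach buys an explicit formula for $\alpha$ in terms of the gap $d(x,\Su')-d(x,\Su)$, which is slightly more informative; your approach is the standard soft-analysis argument and is arguably more transparent, though non-constructive. Either proof would be acceptable here.
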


\ReachEquivalentMinLocalGlobal*

\begin{proof}

{
We consider first the situation where $\rch( \Su )=0$.
 In this case, since $\Su$ is bounded, there exists some point $p \in \overline{\ax(\Su)} \cap \Su$.
 Then, for any $\rho>0$, there is some $p' \in \ax(\Su)$ such that $d(p', p) < \rho/2$, so that there are at least two points in 
 $\Su \cap B(p, \rho)$ closest to $p'$. 
 Thus,  $\rch(\Su \cap B(p, \rho)) < \rho/2$, and, since this holds for any $\rho>0$, 
 we get  that $\rch_{loc.}(p,\Su) = 0$ and {as a direct consequence}  $\rch_{loc.}(\Su) = \inf_{q\in \Su} \rch_{loc.}(q,\Su) = 0$.

We assume now that $\rch( \Su )>0$.
}
Thanks to the metric distortion definition we already know thanks to Remark \ref{remark:ReachLocalWellDefined}, Equation \eqref{leq:LocalReachLargerThanReach} that $\rch( \Su ) \leq \rch_{loc.}(\Su)$.
Therefore it suffices to prove that, if $\rch( \Su ) < \rch_{loc.}(\Su)$, then $\rch( \Su )  = \rch_{glob.}(\Su)$.
So let us assume now that $\rch( \Su ) < \rch_{loc.}(\Su)$.

Since $\Su$ is compact and thus bounded, which implies that a closed $\rch(\Su)$-neighbourhood of $\Su $ is also bounded, there must exists $x\in \overline{\ax(\Su)}$,
where $\overline{\ax(\Su)}$ is the closure of the medial axis $\ax(\Su)$,
such that $d(x, \Su) = \rch( \Su )$.

If  $x\in \overline{\ax(\Su)} \setminus  \ax(\Su)$, in particular $x \notin \ax(\Su)$ and
 there is a unique point $\tilde{x}$  closest to $x$ in $\Su$.
 Choose some $0< \epsilon <\rch( \Su )$ and, as in Lemma  \ref{lemma:SemiContinuitySpecialCase},
take $\alpha>0$ so that \eqref{eq:SemiContinuitySpecialCase} holds and, moreover, such that
\[
\alpha < \rch_{loc.}(\Su) - \rch( \Su ).
\]
Since $x\in  \overline{\ax(\Su)}$, there exists $y \in  \ax(\Su) \cap B^{\circ}( x, \alpha) )$,
and therefore Lemma  \ref{lemma:SemiContinuitySpecialCase} yields that there exists at least two points $p,q \in B^{\circ}(  \tilde{x} , \epsilon )$ such that
$d(y,p) = d(y,q) = d(y, \Su)$.  

Because $d(p,q)  < 2 \epsilon < 2 \rch( \Su )$, by Theorem \ref{theorem:ReachEquivalentMetricDistorsion},
there is a geodesic from $p$ to $q$ of length at most $\pi \rch( \Su )$,
which is less that $2 \pi \rch_{loc.}(\Su)$. 
But then one can apply Corollary \ref{lemma:ShortGeodesicsAreStraights}, which implies that the length of a geodesic from $p$ to $q$ is at most 
\[ 
\ell_{\max} = 2  \rch_{loc.}(\Su) \arcsin \frac{|p-q|}{2  \rch_{loc.}(\Su) }.
\]
However, since projecting a curve outside a given sphere onto the given sphere only decreases its length,
a curve from $p$ to $q$ outside the ball centered at $y$ whose radius
is at most $\rch( \Su ) +   |y-x| <  \rch( \Su ) +  \alpha < \rch_{loc.}(\Su)$ must have  length strictly greater than $\ell_{\max} $,
a contradiction.

We have so far proven that when $\rch( \Su ) < \rch_{loc.}(\Su)$, then the reach
cannot be realized in $\overline{\ax(\Su)} \setminus  \ax(\Su)$.
One must then have $x\in \ax(\Su)$ and there are at least two points $p\neq q$ closest points to $x$ on $\Su$.

If $|p-q|< 2 \rch(\Su)$ then by Theorem \ref{theorem:ReachEquivalentMetricDistorsion}
 one has that 
$d_{\Su}(p,q) \leq  2 \rch( \Su)  \arcsin \frac{|p-q|}{2 \rch( \Su ) }$ which is less than 
$2 \pi \rch_{loc.}(\Su)$. 
Again,  Corollary \ref{lemma:ShortGeodesicsAreStraights} 
implies that the length of a geodesic from $p$ to $q$ is
a most $ 2  \rch_{loc.}(\Su) \arcsin \frac{|p-q|}{2  \rch_{loc.}(\Su) }$,
which, again, is impossible without entering the ball 
centered at $x$ with radius
 $\rch( \Su ) < \rch_{loc.}(\Su)$.

Since $|p-q| \leq 2 \rch(\Su)$, the only possibility is that $|p-q| = 2 \rch(\Su)$,
i.e. $x= (p+q)/2$, a bottleneck.
\end{proof}

\bibliography{geomrefs}

\appendix

\section{Proofs}

%
%
%

\subsection{Proofs}
\label{section:ProofOfLemmaReachOfProjection}

\begin{proof}[Proof of Lemma \ref{lemma:HomologyCCone1}]
If $A$ is a $k$-dimensional vector space, then 
$A \cap  \Sphere^{d-1} = \Sphere^{k-1}$ whose $(k-1)$-homology is non-trivial and $(k'-1)$-homology for $k'\neq k$ is trivial.
Of course, if $k=0$, then $A \cap \Sphere^{d-1} = \emptyset$.

If $A$ is not a vector space, then by Observation \ref{observation:ConvexConeSymetricIFFVectorSpace},
$A \neq -A$. This implies that there exists $a_0 \in A \cap \Sphere^{d-1}$ with $-a_0 \notin A$. This in turn gives that for any $a\in A\cap \Sphere^{d-1}$, 
one has $\angle a_0, a < \pi$ so that the (unique) minimal geodesic from $a$ to $a_0$ remains in $A \cap \Sphere^{d-1}$. 
It follows that there is a deformation retract from $A \cap \Sphere^{d-1}$ to $a_0$ along geodesics. Hence that $A$ is contractible and has therefore trivial homology.
\end{proof}

\begin{proof}[Proof of Lemma \ref{lemma:HomologyCCone2}] 
Since $\{0\}^\perp = \R^d$, the case $k=0$ is trivial. We assume now $k>0$.
Since $0\in A^\perp$,  $B(0,1) \setminus  A^\perp $ can be written as
\[ 
\bigcup_{u \in \Sphere^{d-1} \setminus  A^\perp } \{ \lambda u  \mid \lambda \in (0,1]\},
\]  where $\Sphere^{d-1}$ denotes the unit sphere centred at the origin. 
This means that there is a deformation retract from $B(0,1) \setminus  A^\perp $ to $\Sphere^{d-1} \setminus  A^\perp$.

Now we use the same argument as in Lemma \ref{lemma:HomologyCCone1}. If $k\geq 1$ and $A$ is a $k$-dimensional vector space, then $A^\perp$ is a $d-k$ dimensional vector space and $\Sphere^{d-1} \setminus  A^\perp$ is homeomorphic to the open cylinder $\Sphere^{k-1} \times (-1,1)^{d-k}$, whose $(k-1)$-homology is non-trivial. 

Conversely, $k\geq 1$ and $A$ is not a $k$-dimensional vector space, then by Observation \ref{Obs:dualVspace} neither is $A^\perp$. As in the proof of Lemma \ref{lemma:HomologyCCone1} we note that by Observation \ref{observation:ConvexConeSymetricIFFVectorSpace},
$A ^\perp \neq -A^\perp$. 
This implies that there exists $a_1 \in A^\perp \cap \Sphere^{d-1}$ with $-a_1 \notin A^\perp$. This in turn gives that for any $\tilde{a} \in A ^\perp \cap \Sphere^{d-1}$, 
one has $\angle a_1, \tilde{a}  < \pi$ so that the (unique) minimal geodesic from $\tilde{a}$ to $a_1$ remains in $A^\perp  \cap \Sphere^{d-1}$. 
It follows that there is a deformation retract from $A^\perp  \cap \Sphere^{d-1}$ to $a_1$ along geodesics as well as a homotopy from $\Sphere^{d-1} \setminus A^\perp$ to $\Sphere^{d-1} \setminus a_1 \simeq B^{d-1}$, where $B^{d-1}$ denotes the open $d-1$-dimensional ball. Hence $\Sphere^{d-1} \setminus A^\perp$ is contractible and its homology is therefore trivial.

Finally, we consider the case where $k=0$. If $A$ is a $0$-dimensional vector space, that is a point,  then $A^\perp$ is the entire Euclidean space, and therefore $B(0,1) \setminus A^\perp$ is empty.  Conversely if $B(0,1) \setminus A^\perp$ is empty then $A^\perp$ is the entire Euclidean space and its dual is therefore a point. 
\end{proof} 

\begin{proof}[Proof of Lemma \ref{lemma:DefRetractOfBallMinusNormal}]
Since  $\rch \left( \Su \cap B(p, \rho) \right) \geq \rch(\Su)$, we know that $\pi_{\Su \cap B(p, \rho)}$ is continuous (Lemma \ref{Fed4.8.4} or \ref{Fed4.8.8}).
The homotopy $(t,x) \mapsto (1-t) x + t \pi_{\Su \cap B(p, \rho)}(x)$ is then a deformation retract from $B(p, \rho)$ to $\Su \cap B(p, \rho)$. 
Therefore, the only thing we have to prove is that, for $y \in B(p, \rho)$:
\[
y \in p + \Nor(p, \Su)  \iff \pi_{\Su \cap B(p, \rho)} (y) = p.
\]
We note that thanks to Definition \ref{def:4.3and4.4Fed}, we have $\Nor(p, \Su)= \Nor(p, \Su \cap B(p, \rho))$.
The result now follows by Lemma \ref{Fed4.8.12} and the fact that $\rho < \rch(\Su) \leq \rch \left( \Su \cap B(p, \rho) \right)$.   
%
%
%
%
%
%
\end{proof}

\begin{proof}[Proof of Lemma \ref{lemma:NorIsSemiContinuous}]
Take $\rho$ such that $0< \rho < \rch(\Su)$ and denote by $\Su^{\partial \oplus \rho}$ 
 the boundary of the $\rho$-offset of $\Su$, in other words,
\[
\Su^{\partial \oplus \rho} \defunder{=}  \{x \in \R^d \mid \: d(x, \Su) = \rho \}.
\]
For $p \in \Su$, we write $N(p, \Su, \rho)$ for the set of points of $\Su^{\partial \oplus \rho}$ whose projection is $p$, that is
\[
N(p, \Su, \rho) \defunder{=} \{  q \in \Su^{\partial \oplus \rho}\mid \:  \pi_{\Su}(q) = p \}. 
\]
Lemma \ref{Fed4.8.12} yields that if $p \in \Su$, then
\begin{equation}\label{equation:NormalConeFromPointsAtDistanceRho}
\Nor(p, \Su) = \{ \lambda (q - p) \mid \: \lambda \geq 0, \: q \in N(p, \Su, \rho)  \}.
\end{equation}
For $\eta >0$ we denote by $N(p, \Su, \rho)^\eta$  the open offset of $N(p, \Su, \rho)$ in $\Su^{\partial \oplus \rho}$, that is 
\[
N(p, \Su, \rho)^\eta \defunder{=} \Big\{ q \in \Su^{\partial \oplus \rho}, d\big(q, N(p, \Su, \rho) \big) < \eta \Big\}. 
\]
For any integer $j \geq 1$,  the set
\[
\Su^{\partial \oplus \rho}  \cap \pi_{\Su}^{-1} \Big( B(p, 1/j) \cap \Su \Big)
\]
is compact, being bounded and the inverse image of a closed set by the continuous map $\pi_{\Su}$.
Because $\Su^{\partial \oplus \rho}  \setminus N(p, \Su, \rho)^\eta$ is a closed set, the set
\begin{align} 
K_j &\defunder{=} \Su^{\partial \oplus \rho}  \cap \pi_{\Su}^{-1} \Big( B(p, 1/j) \cap \Su \Big) \cap \big( \Su^{\partial \oplus \rho}  \setminus N(p, \Su, \rho)^\eta \big)
\nonumber 
\\ 
&=  \big( \Su^{\partial \oplus \rho}  \setminus N(p, \Su, \rho)^\eta \big)  \cap \pi_{\Su}^{-1} \Big( B(p, 1/j) \cap \Su \Big) 
\nonumber 
\end{align}
is compact, but may be empty.
We have that 
\[
\bigcap_{j\geq 1} \pi_{\Su}^{-1} \Big( B(p, 1/j) \cap \Su \Big)= \pi_{\Su}^{-1} \big( \{ p \} \big) ,
\]
because if $ a \notin \pi_{\Su}^{-1} \big( \{ p \} \big)$, then $\pi_{\Su} (a)= q \neq p$ and thus $a \notin \bigcap_{j\geq 1} \pi_{\Su}^{-1} \Big( B(p, 1/j) \cap \Su \Big)$ for some $j$.
Hence,   since 
\[
\bigcap_{j\geq 1} K_j  = \Su^{\partial \oplus \rho}  \cap \pi_{\Su}^{-1} \big( \{ p \} \big) \cap \big( \Su^{\partial \oplus \rho}  \setminus N(p, \Su, \rho)^\eta \big) = 
N(p, \Su, \rho) \cap \big( \Su^{\partial \oplus \rho}  \setminus N(p, \Su, \rho)^\eta \big) = \emptyset,
\]
we have that $\big(K_j\big)_{(j \geq 1)}$ is a nested (in inverse order) sequence of compact sets whose intersection is empty. Therefore, there is some 
integer number $n \geq 1$ such that $K_n = \emptyset$, such that
\[
\Su^{\partial \oplus \rho}  \cap \pi_{\Su}^{-1} \Big( B(p, 1/n) \cap \Su \Big) \subset  N(p, \Su, \rho)^\eta.
\]
We note that such an $n$ exists for any $\eta >0$. This implies that for any $\eta$ such that $0< \eta <  \rho \sin  \frac{\epsilon}{2}$, there exists an $n$ large enough such that both:
\begin{equation} \label{equation:NormalConeConditionOnP}
\frac{1}{n} < \frac{1}{2}  \rho \sin \epsilon
\end{equation}
and
\begin{equation} 
\Su^{\partial \oplus \rho}  \cap \pi_{\Su}^{-1} \big( B(p, 1/n) \cap \Su \big) \subset  N(p, \Su, \rho)^{\rho \sin  \frac{\epsilon}{2} }. 
\nonumber
\end{equation} 
Using this $n$ we pick $\alpha = \frac{1}{n}$, and we claim that this $\alpha$ satisfies the assertion in the lemma.
Indeed, if $p' \in \Su \cap B(p, \alpha)$, then $N(p', \Su, \rho) \subset N(p, \Su, \rho)^{\rho \sin  \frac{\epsilon}{2} }$. Hence by definition,
 if $q' \in N(p', \Su, \rho)$, then there exists a
 $q \in N(p, \Su, \rho)$ such that $d(q, q') < \rho \sin  \frac{\epsilon}{2} $,  while by \eqref{equation:NormalConeConditionOnP}, 
$d(p,p') < \rho \sin  \frac{\epsilon}{2} $.
Since $| q-p |= | q' - p'| = \rho$ and $|p-p'|, |q-q'| < \rho \sin  \frac{\epsilon}{2}$, we get
\[
| (q-p) - (q'-p')| < 2 \rho \sin  \frac{\epsilon}{2},
\]
so that
\[
\angle (q'-p', q - p) < \epsilon. 
\]
This holds for any    $q' \in N(p', \Su, \rho)$ and since  $q \in N(p, \Su, \rho)$,
the claim is proved, thanks to \eqref{equation:NormalConeFromPointsAtDistanceRho}.
\end{proof}

\begin{proof}[Proof of Lemma \ref{lemma:anglebetweenKVectorSpacesIsSymmetric}]
If $A=B$ the statement is trivially true.
If $A\neq B$, the bisector of $A$ and $B$ contains at least one hyperplane $P$ and $B$ is the image of $A$ by the isometric mirror symmetry with respect to $P$. 
\end{proof}

\begin{proof}[Proof of Lemma \ref{lemma:embeddedManifoldPositiveReachThenC1Embedded}]
Lemma \ref{lemma:embeddedManifoldPositiveReachThenTangentSpaceIsVectorial}
says that $\Tan(q, \M)$ is a $n$-dimensional vector space for any $q \in \M$.

Regarding $\Tan(q, \M)$ as an element of the Grassmannian $\Grass(n, \R^d)$, 
Lemmas \ref{lemma:NorIsSemiContinuous} and \ref{lemma:anglebetweenKVectorSpacesIsSymmetric} give
 that the map defined from $\M$ to  $\Grass(n, \R^d)$ by  $q \mapsto \Tan(q, \M)$
is continuous. 
It follows that, for $p\in \M$, there is some $\epsilon$, with $0< \epsilon < \rch(\M) /2$, such that
 \begin{equation}\label{equation:SmallAngleBetweenTangentSapces}
\forall q \in \M \cap B(p, \epsilon), \angle \Tan(q, \M) , \Tan(p,\M) < \pi/4.
 \end{equation}
We  claim that for this $\epsilon$, the restriction to $\M \cap B(p, \epsilon)$ of the orthogonal projection on $p+\Tan(p, \M)$ is injective.

Indeed, assume (to derive a contradiction) that there exist $q_1,q_2 \in \M \cap B(p, \epsilon)$ 
such that $q_1 \neq q_2$ and $\pi_{p+\Tan(p, \M)} (q_1) = \pi_{p+\Tan(p, \M)} (q_2)$.
Then, because $|q_2-q_1| < \rch(\M)$, 
inequality \eqref{AngleSegmentToT} from Lemma \ref{Lem:distanceToTangent} yields that
$\sin \angle  (q_1- q_2, \Tan(q_1, \M) ) \leq \frac{1}{2}$. 
Combining this with \eqref{equation:SmallAngleBetweenTangentSapces},
\begin{align*}
\angle  (q_1- q_2, \Tan(p, \M) ) &\leq \angle  (\Tan(p, \M), \Tan(q_1, \M) ) +  \angle  (q_1- q_2, \Tan(q_1, \M) ) \\
&<  \frac{\pi}{4} + \frac{\pi}{6} < \frac{\pi}{2}, 
\end{align*}
which cannot be since $q_2 - q_1$ is orthogonal to $ \Tan(p, \M)$. Hence, the claim is proved.

Since the restriction to $\M \cap B(p, \epsilon)$ of the orthogonal projection on $p+\Tan(p, \M)$ is an  injective continuous  map, from an
open $n$-manifold to the $n$-manifold  $p+\Tan(p, \M)$, invariance of domain implies that it is an homeomorphism on its image.
This means that $\M \cap B(p, \epsilon)$ is the graph of a map $x \mapsto \phi(x)$ from some open set $U_p \subset \Tan(p, \M)$ to $\Nor(p,\M)$, i.e. $x \mapsto p + x + \phi(x)$ is a local homeomorphism from $U_p$ to $\M$. Moreover, these maps induce an atlas for $\M$. 

Definition \ref{def:4.3and4.4Fed} and the fact that $\Tan(p,\M)$ is a $n$-dimensional linear space together imply that $\phi$ is differentiable at $0$ with $D\Phi (0) = 0$, 
 and more generally that $\phi$ is differentiable at any $x \in U_p$. The continuity \eqref{equation:SemiContinuityTangent} of $q \mapsto \Tan(q, \M)$ for the Grassmannians metric yields that $\phi$ is of class $C^1$.
\end{proof}

\begin{proof}[Proof of Lemma \ref{lemma:ProjectionOfTangentConeBoundaryEqualProjOnTgtSpace}]
Thanks to Lemma \ref{lemma:SmallLemma} we have that for sufficiently $\alpha'$ and $q \in  \M \cap \partial B(p, \alpha')$, the bound  $\angle q-p, \Tan(q, \M) < \pi/2$ holds. 
Since $\Nor(q, B(p, \alpha')) = \{ \lambda(q-p)\mid \lambda\geq 0 \}$, this means in particular that 
one has:\
\[
\Nor(q, B(p, \alpha')) \cap \big(-\Nor(q, \M)\big) = \Nor(q, B(p, \alpha')) \cap \Nor(q, \M) = \{0\}.
\] 
This means that one can apply
 \cite[Theorem 4.10 (3)]{Federer}, which gives us
 \begin{align*}
 \Nor \big(q, \M \cap B(p, \alpha' ) \big) &= \Nor \big(q, \M \big) \oplus  \Nor \big(q, B(p, \alpha' ) \big) \\
 &=  \Nor \big(q, \M \big) \oplus  \{ \lambda(q-p)\mid \lambda\geq 0 \}.
  \end{align*}
Because normal and tangent cones are dual we see that
\begin{align}
 \Tan \big(q, \M \cap B(p, \alpha' ) \big) &= \left( \Nor \big(q, \M \big) \oplus  \{ \lambda(q-p)\mid \lambda\geq 0 \} \right)^\perp 
\nonumber
\\
 &=   \Nor(q, \M)^\perp  \cap  \{ \lambda(q-p)\mid \lambda\geq 0 \}^\perp 
\nonumber
\\
 &=  \Tan(q,\M) \cap  \{ u \in \R^d \mid \langle u, q-p \rangle  \leq 0 \}. 
\label{RewriteTanConeIntersectionBall}
\end{align}


\begin{figure}[!h]
\begin{center}
\includegraphics[width=0.95\textwidth]{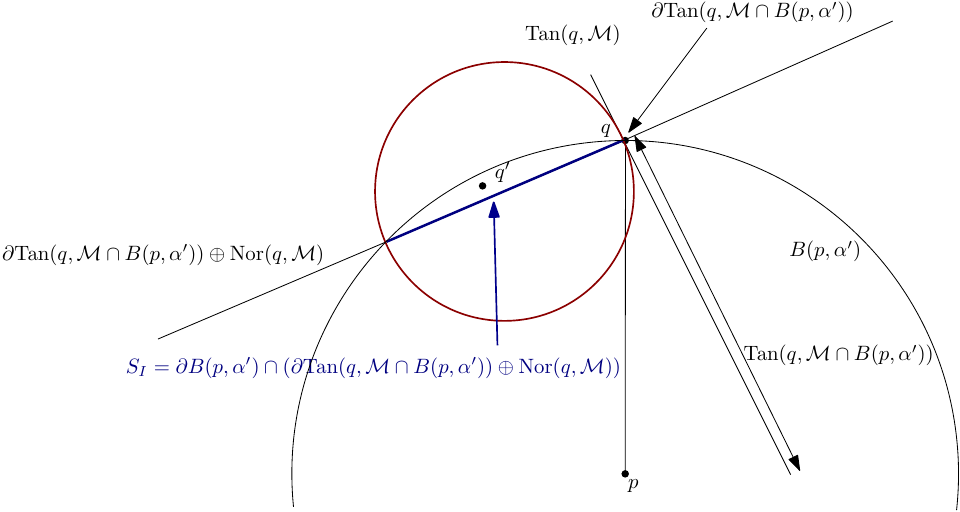}
\end{center}
\caption{Notation used in the proof of Lemma \ref{lemma:ProjectionOfTangentConeBoundaryEqualProjOnTgtSpace} (side view).  
} 
\label{fig:NotationProofL23A}
\end{figure}

\begin{figure}[!h]
\begin{center}
\includegraphics[width=0.9\textwidth]{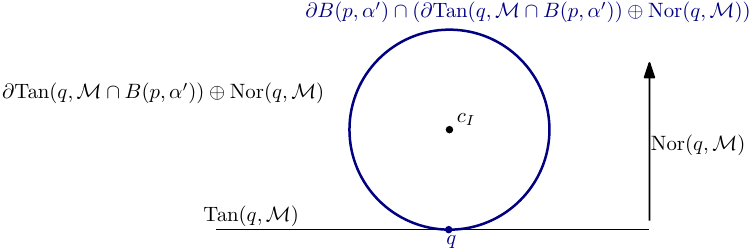}
\end{center}
\caption{Notation used in the proof of Lemma \ref{lemma:ProjectionOfTangentConeBoundaryEqualProjOnTgtSpace} (the restriction to the $d-1$-plane $\partial  \Tan \big(q, \M \cap B(p, \alpha' ) \big) \oplus \Nor(q,\M)$). 
} 
\label{fig:NotationProofL23B}
\end{figure}

Consider the $d-1$ dimensional plane 
\begin{align}  
P&= \partial  \Tan \big(q, \M \cap B(p, \alpha' ) \big) \oplus \Nor(q,\M).
\nonumber
\\
&= \partial  \left( \Tan(q,\M) \cap  \{ u \in \R^d \mid \langle u, q-p \rangle  \leq 0 \}\right) \oplus \Nor(q,\M)
\tag{by \eqref{RewriteTanConeIntersectionBall}}
\\
&=  \left( \Tan(q,\M) \cap  \{ u \in \R^d \mid \langle u, q-p \rangle =0 \}\right) \oplus \Nor(q,\M),
\nonumber
\end{align} 
see Figure \ref{fig:NotationProofL23A}.
We stress that $ \{ u \in \R^d \mid \langle u, q-p \rangle =0 \}$ is the tangent plane of the sphere $\partial B(p, \alpha' )$ at $q$. Moreover the plane $P$ separates exactly those points $q'$ for which the closest point projection of $q'$ on $q \, + \Tan(q, \M)$ belongs to ${ q \, +} \Tan(q, \M \cap B(p, \alpha' ) )$ from the points $q'$ for which this does not hold.   
The plane $P$ intersects the sphere $\partial B(p, \alpha' )$ in $q$, more precisely, the intersection $S_I=q+ P \cap \partial B(p, \alpha' )$ is a $d-2$-sphere that is tangent to $q$. 
These two observations mean that the centre $c_I$ of $S_I$  satisfies $c_I-q \in \Nor (q,\M)$, see Figure \ref{fig:NotationProofL23B}. 
On the other hand any point $q'$ that does not project onto a point that belongs in $q  + \Tan(q, \M \cap B(p, \alpha' ) )$ will lie in the centred at $c_I$, now seen as a point in $\mathbb{R}^d$, with radius $|c_I -q|$, that is the red sphere in Figure \ref{fig:NotationProofL23A}. However, by assumption $\alpha <\rch(\M)$, which means that such a ball cannot contain points of $\M$. 

\end{proof}

\begin{proof}[Proof of Lemma \ref{lemma:VDotphiIsSemiConvex}] 
Let us choose $\alpha>0$ as in Lemma \ref{lemma:SmallLemma} and 
\ref{lemma:ProjectionOfTangentConeBoundaryEqualProjOnTgtSpace},
 so that \eqref{eq:BoundYTheta}  and \eqref{eq:claim_intersect_tangent_to_manifold} 
 hold.
From the definition of $\rch_{loc.}(p, \M)$, $\alpha>0$ can moreover be chosen small enough such that
\begin{align} 
\rch( \M \cap B(p, \alpha)) > R := \rch_{loc.}(p, {\M} ) - \varepsilon/2.
\label{eq:DefRFromrchLoc} 
\end{align}

Let $y \in U_p$ and write $q = p + y + \phi(y) = \Phi (y)\in \M$. 
Choose $\theta >0$ such that:
\begin{align} 
 \cos^3 \theta > \frac{R- \epsilon/2}{R}. 
\label{eq:boundTheta} 
\end{align} 
For $y, y' \in U_p$ we denote the images under $\Phi$, that is the points on $\M$, by $q= \Phi(y)$, $q'=\Phi(y')$.
We now take $y, y' \in U_p$, such that one has $q,q'\in B(p,\alpha)$. This means in particular that $|y|,|y'| \leq \alpha$.

Thanks to
Equation \eqref{TangentSpaceLine_2} in Lemma \ref{Lem:distanceToTangentCharacter} and the fact that $R \leq \rch(B(p,\alpha))$, one has 
 \begin{equation*}
 d (q' ,q  + \Tan( q, \M) )  \leq \frac{|q' - q |^2}{2\, R}.
 \end{equation*}
Combining this with \eqref{eq:claim_intersect_tangent_to_manifold} yields
 \begin{equation}\label{equation:ManifoldNearTangentSpace}
d (q' ,q  + \Tan( q, \M \cap B(p,\alpha)) ) = d (q' ,q  + \Tan( q, \M) )  \leq \frac{|q' - q |^2}{2\, R}.
 \end{equation}

The first order Taylor expansion
of $\Phi$ near $y$ is given by
\begin{align} 
z \mapsto \Phi(y) + D\Phi(y) ( z-y) &= (p + y + \phi(y)) + (z-y) + D\phi(y) (z-y) 
\nonumber 
\\ 
&= p  + \phi(y) + z  + D\phi(y) (z-y)  ,
\nonumber
\end{align}  and its graph
 coincides with the affine tangent space $q + \Tan (q, \M)$ of $\M$ at $q = \Phi(y)$.

Write $q_1$ for orthogonal projection of $q'$ on $q  + \Tan( q, \M) $, that is $ q_1 = \pi _{q  + \Tan( q, \M)} (q')$. We note that $q_1 -q' \in \Nor(q, \M)$.
Since, $q'+ \Nor(p, \M)$ and $q+ \Tan(q, \M)$ are transversal and their dimension sum to $d$, they have a unique intersection point
$q_2$.
The point $q_2$,
 see Figure \ref{fig:GraphOfSemiConvex}, is also 
the point above $y'$ on the graph of the first order Taylor expansion 
of $\phi$ near~$y$: 
\[q_2 = p+ y' + \phi(y) + D\phi(y) (y'-y). \]

\begin{figure}[!h]
\begin{center}
\includegraphics[width=0.8\textwidth]{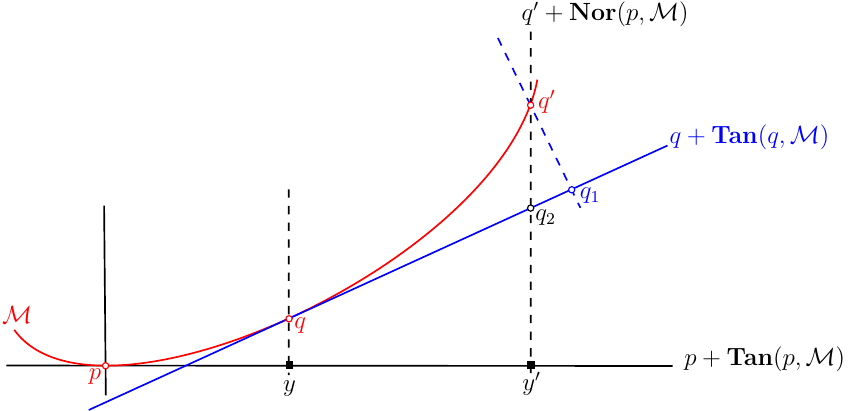}
\end{center}
\caption{Notation used in the proof of Lemma \ref{lemma:VDotphiIsSemiConvex}. 
} 
\label{fig:GraphOfSemiConvex}
\end{figure}

Since $q_1, q_2 \in q + \Tan (q, \M)$, one has that $q_2 - q_1 \in \Tan(q, \M)$. As we concluded above $q_1 -q' \in \Nor (q, \M)$. Combining these observations gives that the triangle $q'q_1q_2$ has a right angle at $q_1$.

Because $q_2-q_1 \in  \Tan(q, \M)$ and $\angle  \Tan(q, \M)  \Tan(p, \M) < \theta$,
 there is a vector $w\in  \Tan(p, \M)$ such that $\angle w, q_2-q_1  < \theta$.
As $q_2-q' \in \Nor(p, \M)$, one has  $\angle w,  q_2-q'  = \pi/2$, and therefore the triangle inequality on the sphere yields $\angle q_2-q', q_2-q_1 > \pi/2 - \theta$. This is one angle of the triangle $q'q_1q_2$ and we already established that this triangle has a right angle at $q_1$ and hence, 
\[
\angle q_2-q', q_1-q'  < \theta. 
\]
This angle bound translates to the {first inequality between the edge lengths of the triangle  $q'q_1q_2$, 
because $ q_1 = \pi _{q  + \Tan( q, \M)} (q')$ and \eqref{equation:ManifoldNearTangentSpace} the second inequality below follows,} 
 \begin{equation}\label{eq:Inequality1}
 |q' - q_2 | \leq \frac{1}{\cos \theta}  |q' - q_1 |  \leq \frac{1}{\cos \theta} \frac{|q' - q |^2}{2\, R}.
 \end{equation}

Since  $\angle q'-q, \Tan(p, \M) < \theta$ by \eqref{eq:BoundYTheta}, and $y'-y$ 
is the projection of $q'-q$ on $p+\Tan(p, \M)$,
we have 
\[
|q' - q | \leq \frac{1}{\cos \theta} |y' - y|.
\]
and \eqref{eq:Inequality1} gives
\[
 |q' - q_2 | \leq \frac{|y' - y |^2}{2\, R \cos^3 \theta}. 
\]
Because $q' = \Phi(y') = p + y' + \phi(y')$,  $|q' - q_2 |$ is the difference between $\phi(y')$ and the value, at $y'$, of the
first order Taylor expansion of  of $\phi$ near $y$, so the last inequality gives
\begin{align*}
 \left|\phi(y') - \Big(\phi(y) + D\phi(y) (y'-y) \Big)\right| &\leq \frac{|y' - y |^2}{2\, R \cos^3 \theta}
\\
& <  \frac{|y' - y |^2}{2\, (R -\epsilon/2)} . \tag{by \eqref{eq:boundTheta}}
\end{align*}
Denoting by $f: U_p \rightarrow \R$ the map defined by $f(y) \defunder{=} \langle v, \phi(y) \rangle$ this together with  $| v| = 1$ gives
\[
 \left| f(y') - \Big(f(y) + Df(y) (y'-y) \Big)\right| <  \frac{|y' - y |^2}{2\, (R -\epsilon/2)} ,
\]
so that
\begin{equation}\label{equation:GraphIsAboveItsTangent}
 \frac{|y' - y |^2}{2\, (R -\epsilon/2)} + f(y') > f(y) + Df(y) (y'-y) .
 \end{equation}
Since both $y' \mapsto |y' - y |^2$ and its derivative vanish at $y$, the right-hand side in \eqref{equation:GraphIsAboveItsTangent}
 is the first order Taylor expansion at $y$
  of the left-hand side $y' \mapsto \frac{|y' - y |^2}{2\, (R -\epsilon/2)} + f(y')$. 
We have shown that the map $y' \mapsto  \frac{|y' - y |^2}{2\, (R -\epsilon/2)} + f(y')$ is greater or equal to  its first order Taylor expansion at $y$.
 
Because $y'\mapsto (y' - y)^2 - y'^2$ is a linear function, this property extends to the map $g: U_p \cap B(0, \alpha)^\circ \rightarrow \R$ defined by
 \[
 g(y') \defunder{=}  \frac{y'^2}{2\, (R -\epsilon/2)} + f(y') =  \frac{y'^2}{2\, (R -\epsilon/2)} + \langle v, \phi(y) \rangle
 =   \frac{y'^2}{2\, ( \rch_{loc.}(p, \Su) -\epsilon)} + \langle v, \phi(y) \rangle,
 \]
{where the last equality follows from \eqref{eq:DefRFromrchLoc}.} 
This holds for any $y \in U_p$ such that  $\Phi(y) \in \cap B(p, \alpha)$, 
so that for some $\alpha'>0$ small enough, 
it holds for any $y \in U_p \cap B(0,\alpha')$. 
There, the function $g$ is $C^1$ and greater or equal to its first order expansion at any point, and is therefore convex.
 We have proven the first statement of the lemma.


 

We now concentrate on the second statement. 
\newline As we have seen in Remark \ref{remark:ReachLocalWellDefined}, $\rho \mapsto  \rch \big( \M \cap B(p, \rho ) \big) $ is non-increasing  
and hence  
 \begin{align} 
 \rch \big( \M \cap B(p,  \eta) \big)  &<  \rch_{loc.}(p, \Su)
\tag{because non-increasing}
\\
&< r.\tag{by assumption}
 \end{align} 

 Reusing the same construction (see Figure \ref{fig:GraphOfSemiConvex}) as for the proof of the first statement of the lemma,
 we see that by \eqref{TangentSpaceLine_2},  there exists $q,q'\in \M \cap B(p, \eta ) $ with $q'\neq q$ such that:  
 \[
r >  \frac{|q-q'|^2}{2 d_{\mathbb{E} } (q' ,\: q \: + \Tan(q, \M \cap B(p, \eta) ) },
\]
 that is
\begin{equation}\label{eq:DistToTgtGreater}
 d_{\mathbb{E} } (q' ,\: q \: + \Tan(q, \M \cap B(p, \eta) ) >  \frac{|q-q'|^2}{2r}.
 \end{equation}

%
%
%

By Lemma \ref{lemma:ProjectionOfTangentConeBoundaryEqualProjOnTgtSpace}, one has
\begin{equation}\label{eq:DistToTgtGreater_2}
 d_{\mathbb{E} } (q' ,\: q \: + \Tan(q, \M ) >  \frac{|q-q'|^2}{2r}.
 \end{equation}
Looking at Figure \ref{fig:GraphOfSemiConvex}, we see that \eqref{eq:DistToTgtGreater_2} translates into
\[
|q'-q_1|  >  \frac{|q-q'|^2}{2r},
\]
 which is equivalent to 
\[ \frac{|q-q'|^2}{2r |q'-q_1|} < 1 .\]
Because the left hand side is strictly less than $1$, there exists a $\theta'>0$ such that $\cos \theta' >  \frac{|q-q'|^2}{2r |q'-q_1|}$. 
This means that in the context of Lemma \ref{lemma:SmallLemma},
one can choose $\eta$ (where we take $\alpha =\eta$ in Lemma \ref{lemma:SmallLemma}) 
small enough to get $\cos \theta >  \frac{|q-q'|^2}{2r |q'-q_1|}$. 
Because the angle between $\Tan(p,\M)$ and $\Tan(q,\M)$ is upper bounded by $\theta$ examining the triangle $q'q_1q_2$ in Figure \ref{fig:GraphOfSemiConvex} yields that $|q'-q_2| \geq |q'-q_1| \cos \theta$, which combined with the previous result yields 
\[
|q'-q_2|  >  \frac{|q-q'|^2}{2r}. 
\]
Writing 
\[
v = \frac{q'-q_2}{|q'-q_2|} \in \Nor(p,\M),
\]
so that $|q'-q_2| = \langle v, q'-q_2\rangle$ and  using 
 $|q-q'| > |y'-y|$, this can be reformulated into 
\begin{align}\label{eq:far_first_order}
 \langle v, q'-q_2\rangle  \geq  \frac{|y'-y|^2}{2r}.
\end{align}

Because
\[
 q' - q_2 = \Big( 0, \: \phi(y') - \big(\phi(y) + D\phi(y) (y'-y) \big) \Big)_{\Tan_p\M \times \Nor_p\M},
 \]
and $f(y) \defunder{=} \langle v, \phi(y) \rangle$,
 we may rewrite \eqref{eq:far_first_order} as
\begin{align}\label{eq:below_first_order}
 \frac{|y' - y |^2}{2r} + f(y') <  f(y) + Df(y) (y'-y) . 
\end{align}

The function  $x \mapsto  g(x) = \frac{|x - y |^2}{2r} + f(x)$
is not convex since \eqref{eq:below_first_order}
says that $g(y')$ is below the first order Taylor expansion of $g$ at $y$.
\end{proof}

\begin{proof}[Proof of Lemma \ref{lemma:phiIsC11}]
We get from Lemmas \ref{lemma:SemiConcaveAndSemiConvexIffC11} and \ref{lemma:VDotphiIsSemiConvex},
and reusing the same $\epsilon$ and $\alpha$ selection as in Lemma \ref{lemma:VDotphiIsSemiConvex},
 we get that for any $v \in \Nor(p, \M)$ such that $| v | = 1$, 
the map $y \mapsto \langle v, \phi(y) \rangle$ is $\frac{1}{(\rch_{loc.}(p, \Su)  - \epsilon)}$-Lipschitz in $U_p \cap B(0, \alpha)^\circ$.

It follows that if $y_1, y_2 \in U_p \cap B(0, \alpha)^\circ$, then, for any $w \in \Tan(p, \M)$ and any $v \in  \Nor(p, \M)$ such that $|w|=|v|=1$:
\[
\langle v, D\phi(y_2) w \rangle - \langle v, D\phi(y_1) w \rangle \leq\frac{1}{(\rch_{loc.}(p, \Su)  - \epsilon)}  | y_2 - y_1 |
\]
and this precisely means that, for the operator norm:
\[
\big\|D\phi(y_2)  - D\phi(y_1)  \big\|  \leq \frac{1}{(\rch_{loc.}(p, \Su)  - \epsilon)}  | y_2 - y_1 |.
\]
\end{proof}

\begin{proof}[Proof of Lemma \ref{lemma:FromOperatorNormToAngles}]
The angle between $L_1$ and $L_2$ is a attained in a pair of vectors, that is there exists $v_1 \in L_1$ and $v_2 \in L_2$ such that
\[
 \angle L_1, L_2 = \angle v_1, v_2 = \min_{v'_2 \in L_2} \angle v_1, v'_2 = \max_{v'_1 \in L_1}  \min_{v'_2 \in L_2} \angle v'_1, v'_2 =\max_{v'_2 \in L_2}  \min_{v'_1 \in L_1} \angle v'_1, v'_2, 
\]
where the last equality follows from Lemma \ref{lemma:anglebetweenKVectorSpacesIsSymmetric}.  This explains the asymmetrical expression. 

For some $v_0 \in \R^n$, with $|v_0| = 1$, and $\lambda >0$, 
we have  $v_1=  \lambda (v_0, F_1(v_0))$ and therefore
\begin{equation}\label{equation:angleLinearSpacesUpperBoundedByAngleImageSameVector}
 \angle L_1, L_2 = \angle (v_0, F_1(v_0)) , v_2 \leq \angle  (v_0, F_1(v_0)) ,  (v_0, F_2(v_0)),
\end{equation}
because, by definition of $F_2$, we have $(v_0, F_2(v_0)) \in L_2$.

Since $| (v_0, F_1(v_0)) |, | (v_0, F_2(v_0)) |  \geq 1$, one has:
\begin{align*}
2 \sin  \frac{  \angle  (v_0, F_1(v_0)) ,  (v_0, F_2(v_0))     }{2} 
&\leq \big|  (v_0, F_2(v_0)) - (v_0, F_1(v_0)) \big| \\
& = \big|    F_2(v_0) - F_1(v_0) \big|  \leq \big\| F_2 - F_1 \big\|.
\end{align*}
This with \eqref{equation:angleLinearSpacesUpperBoundedByAngleImageSameVector}
gives the inequality of the lemma.
\end{proof}

\begin{proof}[Proof of Lemma \ref{lemma:MIsC11WithOptimalBounds}]
{
If $\M$ has local reach $\rch_{loc.}({\M}) =\infty$, \eqref{eq:MetricDefREach} implies that all geodesics are straight lines, that is every connected component of $\M$ is convex. 
Therefore, because $\M$ does not contain a boundary by assumption, every connected component of $\M$ is affine and the statement holds. This means that we can concentrate on the case where $\rch_{loc.}({\M})<\infty$. 
}

Lemma \ref{lemma:phiIsC11} already says that  $\M$ is $C^{1,1}$ embedded by considering the atlas defined by projections on local tangent spaces.
Now, (if $p_1$ and $p_2$ lies in the same connected component of $\M$) considers (thanks to Lemma \ref{Menger}) a geodesic $\gamma$ of length $L= d_{\M} (p_1, p_2)$
joining $p_1$ and $p_2$. Take $\epsilon >0$ and $\theta >0$.  By application of Lemma \ref{lemma:phiIsC11} there is at each point $q\in \gamma([0,L]) $ an open neighborhood $V_q$ of 
$q$ in $\M$, such that $V_q$ is the graph of a map $\phi_q$ from $\Tan(q, \M)$ to $\Nor(q, \M)$ whose derivative is bounded by
 $\tan \theta$ and, if $R\leq \rch_{loc.}(\M)$, is $\frac{1}{(R - \epsilon)}$-Lipschitz (in the sense of Lemma 
\ref{lemma:phiIsC11}). Consider for each $t \in [0,L]$ an open interval $(t^-, t+)$ such that $\gamma((t^-, t+) \cap [0, L] ) \subset V_{\gamma(t)}$. 
By compactness, one can extract a finite open cover of $[0,L]$ by intervals in the form $(t_i^-, t_i^+)_{i=0, N}$, with $t_i \in (t_i^-, t_i^+)$, $t_0 = 0$ and $t_N = L$.
One can therefore find a finite sequence $0=t'_0< \ldots <t'_{N'} = L$ such that for any $j$ such that $0 \leq j < N'$, there is 
some $i$ such that $[t'_j, t'_{j+1}] \subset (t_i^-, t_i^+)$.
Since $\gamma(t'_0) = \gamma(0) = p_1$ and $\gamma(t'_{N'}) = \gamma(L) = p_2$, one has:
\begin{equation}\label{equation:angleBoundedByDeviationAlongGeodesic}
\angle \Tan(p_1, \M) , \Tan(p_2,\M) \leq   \sum_{j=0}^{N'-1} \angle \Tan(\gamma (t'_{j}) , \M) , \Tan( \gamma (t'_{j+1}),\M)
\end{equation}
The segment of geodesic $\gamma([t'_j, t'_{j+1}]$ belongs to the graph of the map from $\Tan(\gamma(t_i), \M)$  which
derivative is $\frac{1}{(R - \epsilon)}$-Lipschitz for the operator norm, we get 
from Lemma \ref{lemma:FromOperatorNormToAngles}   that: 
\[
2 \sin \frac{\angle \Tan(\gamma (t'_{j}) , \M) , \Tan( \gamma (t'_{j+1}),\M)  }{2}
\leq  \frac{1}{(R - \epsilon)}   \left| \pi_{\Tan(\gamma(t_i), \M)} (\gamma (t'_{j+1})) - \pi_{\Tan(\gamma(t_i), \M)} (\gamma (t'_{j})) \right| 
\]
and, since:
\[
\left| \pi_{\Tan(\gamma(t_i), \M)} (\gamma (t'_{j+1})) - \pi_{\Tan(\gamma(t_i), \M)} (\gamma (t'_{j})) \right| \leq \left| \gamma (t'_{j+1}) - \gamma (t'_{j}) \right| 
\]
we get:
\[
2 \sin \frac{\angle \Tan(\gamma (t'_{j}) , \M) , \Tan( \gamma (t'_{j+1}),\M)  }{2}
\leq  \frac{1}{(R - \epsilon)} \left| \gamma (t'_{j+1}) - \gamma (t'_{j}) \right| 
\]
Since $\angle \Tan(\gamma (t'_{j}) , \M) , \Tan( \gamma (t'_{j+1}),\M) < 2 \theta$ and $\theta \mapsto \frac{\sin \theta}{\theta} $ is decreasing for positive small $\theta$,  one has:
\[
 2 \sin \frac{\angle \Tan(\gamma (t'_{j}) , \M) , \Tan( \gamma (t'_{j+1}),\M)  }{2} > \frac{\sin \theta}{\theta}  \angle \Tan(\gamma (t'_{j}) , \M) , \Tan( \gamma (t'_{j+1}),\M) 
 \]
 so that \eqref{equation:angleBoundedByDeviationAlongGeodesic} gives:
\[
\angle \Tan(p_1, \M) , \Tan(p_2,\M) \leq   \sum_{j=0}^{N'-1}  \frac{\theta}{(R - \epsilon) \sin \theta}  \left| \gamma (t'_{j+1}) - \gamma (t'_{j}) \right|  \leq   \frac{\theta}{(R - \epsilon) \sin \theta}  d_{\M} (p_1, p_2)
\]
Since the last inequality holds for any $\theta >0$ and $\epsilon >0$, the lemma is proved.
\end{proof}

 \begin{proof}[Proof of Theorem \ref{ReachImpliesQuantifiedLipschitzDerivative}] 
 One direction is given by Lemma \ref{lemma:MIsC11WithOptimalBounds}.
{For the other direction, we distinguish two cases, $\rch_{loc.}(\M)= \infty$ and $\rch_{loc.}(\M)< \infty$. 
If 
\[ \sup_{ \myatop{p,q\in \M}{p\neq q} } \frac{\angle \Tan(p,\M),  \Tan(q,\M)}{d_{\M}(p,q)} = 0 ,\] 
then $\M$ is an affine space, because it is a closed manifold without boundary. This means that each connected component of $\M$ has infinite reach and by extension the local reach is also infinite. 
This means that we can concentrate on the case where $\rch_{loc.}(\M)< \infty$.
}
It now suffices  to prove that, if $R> \rch_{loc.}(\M)$, then there exists distinct $q_1,q_2 \in \M$ (i.e. $q_1\neq q_2$) 
 such that
\begin{align}\label{eq:RequiredInequalityForOptimalLipschitz}
\frac{\angle \Tan(q_1,\M),  \Tan(q_2,\M)}{d_{\M}(q_1,q_2)} > \frac{1}{R}.
\end{align}
Let $\theta>0$ be such that
\[
 \cos^3 \theta> \frac{\rch_{loc.}(\M)}{R}.
\]
From the definition of $\rch_{loc.}(\M)$ it follows that there exists $p\in \M$ such that
\begin{align}\label{eq:BoundOnThetaForOptimalLipschitz}
\rch_{loc.}(\M) \leq \rch_{loc.}(p,\M) < r :=    R \cos^3 \theta.
\end{align}

Lemma \ref{lemma:SmallLemma} implies that there exists $\alpha>0$ such that \eqref{eq:BoundYTheta}
holds in the neighbourhood of $p$. By taking $\eta=\alpha$ 
in the second statement of Lemma \ref{lemma:VDotphiIsSemiConvex} we have 
 that the restriction of 
\[
 x \mapsto f(x) := 1/2 \: x^2 -  r  \langle v, \phi(x) \rangle
\]
to $B(0,\eta)$ is not convex. 
{We stress that as in Lemma \ref{lemma:embeddedManifoldPositiveReachThenC1Embedded} (and by extension Lemma \ref{lemma:VDotphiIsSemiConvex}) $p$ has coordinate $0$.}
This implies that there exists $y,y' \in B(0,\eta)$ such that the segment $[(y, f(y)), (y', f(y'))]$
is not always above the graph of $f$. Let $t_1, t_2\in (0,1)$, with $t_1<t_2$  be such that the open interval $(t_1,t_2)$
is a connected component of the non empty open set 
$\{t\in [0,1]\mid f((1-t)y + t y') > (1-t)f(y) + t f(y') \}$.
This means that if we define $y_i= (1-t_i)y + t_i y'$ we have that
\begin{align*}
\frac{d}{dt}_{\mid t=0}  f((1-t)y_1 + t y_2) &\geq  \frac{f(y_2) -f(y_1)}{t_2-t_1} 
\\
\frac{d}{dt}_{\mid t=1}  f((1-t)y_1 + t y_2) &\leq \frac{f(y_2) -f(y_1)}{t_2-t_1}.
\end{align*}
Combining these equations yields
\[
\frac{d}{dt}_{\mid t=1}  f((1-t)y_1 + t y_2) -  \frac{d}{dt}_{\mid t=0}   f((1-t)y_1 + t y_2) \leq 0.
\]
Using the definition of $f$ given above, this translates into 
 \begin{align}\label{eq:DerivativeIsNotIncreasingAlongY1Y2}
(y_2 - y_1)^2 - r  \left( \langle v, \big( D\phi(y_2) - D\phi(y_1) \big)(y_2 - y_1) \rangle \right) \leq 0.
\end{align}

We define $q_i = \Phi(y_i) = (y_i; \phi(y_i))$, for $i=1,2$ and the vectors $v_i := D\Phi(y_i) (y_2 - y_1) \in \Tan_{q_i}\M$.
Note that $\angle \Tan_p\M,  \Tan_{q_1}\M < \theta$.
Further writing $v'_2$ for the projection of $v_2$ on
$\Tan_{q_1}\M$, we consider the triangle $v_2, v'_2, v_1$.

Since $v'_2-v_2$  is orthogonal to  $\Tan_{q_1}\M$, and $v'_2-v_1 \subset \Tan_{q_1}\M$, the triangle $v_2, v'_2, v_1$ is orthogonal at $v'_2$ 
 and  $\angle v'_2-v_2, v_1-v_2 < \theta$. One has then
\[
 |v_2|\sin \angle v_2,  \Tan_{q_1}\M = | v_2- v'_2| \geq \cos \theta  | v_2- v_1|. 
\]
Moreover, because  
\[
\angle \Tan_{q_2}\M,  \Tan_{q_1}\M \geq \angle v_2,  \Tan_{q_1}\M \geq \sin \angle v_2,  \Tan_{q_1}\M ,
\]
and $ |v_2| <  \frac{1}{\cos \theta} |y_2-y_1|$, one has
\[
\angle \Tan_{q_2}\M,  \Tan_{q_1}\M  > \cos^2 \theta \frac{ | v_2- v_1|}{ |y_2-y_1|}.
\]
By \eqref{eq:DerivativeIsNotIncreasingAlongY1Y2} one has
\[
 | v_2- v_1| \geq \langle v,  v_2- v_1 \rangle \geq \frac{1}{r}  (y_2-y_1)^2.
\]
Combining the last two equations yields 
\[
 \angle \Tan_{q_2}\M,  \Tan_{q_1\M}  > \frac{\cos^2 \theta }{ r }  |y_2-y_1|  \geq \frac{\cos^3 \theta }{ r }  d_{\M}(q_1,q_2),
\]
where the last inequality follows from the fact that $d_{\M}(q_1,q_2)$
is upper bounded by the length of $\Phi([y_1,y_2])$.
This with \eqref{eq:BoundOnThetaForOptimalLipschitz} gives us 
\eqref{eq:RequiredInequalityForOptimalLipschitz}.
\end{proof}

\begin{proof}[Proof of Lemma \ref{lemma:LocalReachAsTangentVaraiationBoundOnGeodesics}]
Applying Theorem \ref{theorem:ReachEquivalentMetricDistorsion} to the set $\M\cap B(p,\rho)$,
for arbitrary small $\rho$, shows that the local reach of $\M$ is the infimum on all
the local reach on all geodesics on $\M$.
Then one can apply Lemma \ref{lemma:MIsC11WithOptimalBounds}
to $\gamma([0,\ell])$ which gives \eqref{eq:TightTgtVariationGeodesics}.
\end{proof}

\begin{proof}[Proof of Theorem \ref{PosReachImpliesOptimalSizeNeighbourAsGraph}] 
Thanks to Theorem \ref{theorem:ReachEquivalentMetricDistorsion} we have that for $q \in \M \cap B^\circ(p, \sqrt{2}\: \rch(\M))$
\[ 
d_{\M}(q,p) \leq 2\,  \rch(\M)  \arcsin \frac{|q-p|}{ 2 \rch(\M)} <  2\,  \rch(\M)  \arcsin \frac{\sqrt{2}\: \rch(\M)}{ 2 \rch(\M)} = \frac{\pi}{2} \rch(\M) . 
\] 
Now using Theorem \ref{ReachImpliesQuantifiedLipschitzDerivative} or Lemma \ref{lemma:MIsC11WithOptimalBounds}, this implies that 
 \[
 \angle \Tan(p, \M), \Tan(q,\M) \leq  \frac{1}{ \rch(\M) }  d_{\M} (p, q)<  \frac{\pi}{2} . 
\]
The submersion theorem now yields that the orthogonal projection $\pi _{T_p \M}$ from $\M \cap B^\circ(p, \sqrt{2}\: \rch(\M))$ onto $\Tan(p,\M)= T_p\M$ is a local diffeomorphism.  

We recall that the covering number of a point in the image of a map is the number of points in the preimage above that point. 
Because the orthogonal projection $\pi _{T_p \M}$ is a local diffeomorphism, the covering number is constant on $ B^\circ(0,\rch(\M)) \subset T_p\M$ except when one crosses the boundary $\partial \big( \overline{ \M \cap B^\circ(p, \sqrt{2}\: \rch(\M))}  \big)$. However, we claim that 
\[ 
\partial \big( \overline{ \M \cap B^\circ(p, \sqrt{2}\: \rch(\M))}  \big) \cap \pi_{T_p\M}^{-1} (B^\circ(0,\rch(\M))) = \emptyset. 
\]
This claim can be established as follows: To derive a contradiction we assume that $q\in \partial \big( \overline{ \M \cap B^\circ(p, \sqrt{2}\: \rch(\M))}  \big)$, and write $q=x+n$, with $x \in B^\circ(0,\rch(\M)) \subset T_p\M$ and $n \in N_p\M$. Then by Pythagoras we have that $n> \rch(\M)$.  This contradicts the characterization of the reach given in Lemma \ref{Lem:distanceToTangentCharacter}. 

We therefore conclude that the covering number is constant on $B^\circ(0,\rch(\M)) \subset T_p\M$. However, thanks to Lemma \ref{lemma:embeddedManifoldPositiveReachThenC1Embedded} the covering number is one at $0$ and in fact in a neighbourhood of zero. Hence the covering number is $1$ on $B^\circ(0,\rch(\M)) \subset T_p\M$. Because $\M$ and the graph is locally $C^{1,1}$ it is globally $C^{1,1}$ and the result follows. 
%
\end{proof} 

\begin{proof}[Proof of Lemma \ref{cor:OtherPaper}] 
In the proof of Lemma \ref{lemma:VDotphiIsSemiConvex} we pick $\theta\in [ 0 ,\pi /2]$ such that 
\begin{align} 
 \cos^3 \theta > \frac{R- \epsilon}{R}. 
\tag{\ref{eq:boundTheta}}
\end{align} 
Given $\theta$ we then choose an $\alpha>0$ so that
\begin{align} 
| y  | \leq \alpha \Rightarrow  \angle \Tan(p, \M), \Tan(q, \M) < \theta.
\tag{\ref{eq:BoundYTheta}}
\end{align} 
Because $\frac{d^3}{d\theta^3 } \cos^3 (\theta)= \sin (\theta) ( 21 \cos^2 (\theta) -6 \sin^2 (\theta) ) \geq 0$ if $\theta \in [0,\pi /4]$ , Taylor's Theorem with integral remainder yields that remainder the $R(\theta)$ in
\[ 
\cos^3 (\theta) = 1- \frac{3}{2} \theta^2  + R(\theta)
\] 
is positive, if $\theta \in [0,\pi /4]$. As $\cos^3(\theta)$ is monotone on $[0,\pi/2]$ and 
\[
\cos \left( \frac{\pi}{4} \right  ) = \frac{1}{2 \sqrt{2}} < 1/2 \leq  \frac{R- \epsilon}{R},
\] 
and hence $\theta \in [0,\pi /4]$, 
Therefore it follows that 
\[ 
\cos^3 (\theta) \leq 1- \frac{3}{2} \theta^2 \leq 1-\theta^2,
\]
which in combination with \eqref{eq:boundTheta} gives that $\theta^2 \leq \frac{\epsilon}{R}$. 
We need to bound $\alpha$ such that $\angle \Tan(p, \M), \Tan(q, \M) < \theta$, but thanks to Lemma \ref{lemma:MIsC11WithOptimalBounds} we know that 
 \[
 \angle \Tan(p, \M), \Tan(q,\M) \leq  \frac{1}{R }  d_{\M} (p,q),
\]
so that we need to impose 
\[ 
\frac{1}{R }  d_{\M} (p, q) \leq \sqrt{\frac{\epsilon}{R}} . 
\] 
Because $q = p + y + \phi(y)$, we have that $|y| \leq  d_{\M} (p, q)$, which results in
\[ 
|y| \leq \sqrt{R \epsilon}.
\] 
\end{proof}

\begin{proof}[Proof of  Lemma \ref{lemma:LocalReachLowerSC} ]
By definition of $\rch_{loc.}(p, \Su)$, 
we have that for any $\epsilon>0$ there is $\rho>0$ such that:
\begin{equation}\label{leq:ReachInLocalBallIConvergesToLocalReach}
\rch(p, \Su \cap B(p, \rho) ) > \rch_{loc.}(p, \Su) - \epsilon.
\end{equation}
One gets
\begin{align*}
\rch_{loc.}(p, \Su) &\geq \inf_{q\in \Su \cap B^{\circ}(p, \rho)} \:  \rch_{loc.}(q, \Su) \\
&= \inf_{q\in \Su \cap B^{\circ}(p, \rho)} \:  \rch_{loc.}(q, \Su \cap B(p, \rho) ) \\
&  \geq  \rch_{loc.}( \Su \cap B(p, \rho) ) \tag{by \eqref{eq:DefLocReach2} applied to the set $\Su \cap B(p, \rho)$}  \\
& \geq  \rch( \Su \cap B(p, \rho) ) \tag{by \eqref{leq:LocalReachLargerThanReach}} \\
&> \rch_{loc.}(p, \Su) - \epsilon .  \tag{by \eqref{leq:ReachInLocalBallIConvergesToLocalReach}}
\end{align*}
Because this inequality holds for any positive $\epsilon$, the equality follows. 
\end{proof}

\begin{proof}[Proof of Lemma \ref{lemma:LocalReachMakeGeodesicsStraight}] 
The length of $\gamma$ in the direction $\dot{\gamma}(\frac{\ell}{2})$ is
\begin{align}
\left \langle q-p , \dot{\gamma} \left( \frac{\ell}{2} \right) \right \rangle &= \int_{0}^{\ell} \langle \dot{\gamma}(s), \dot{\gamma}(\ell/2) \rangle \ud s 
\nonumber
\\
&= \int_{0}^{\ell/2} \langle \dot{\gamma}(s), \dot{\gamma}(\ell/2) \rangle \ud s +\int_{\ell/2}^{\ell} \langle \dot{\gamma}(s), \dot{\gamma}(\ell/2) \rangle \ud s
\nonumber 
\\
&\geq \int_{0}^{\ell/2} \cos \frac{|s-\ell/2| } {R}  \ud s +\int_{\ell/2}^{\ell} \cos \frac{|s-\ell/2| } {R}  \ud s 
\nonumber
\\
&= 2 \, R\sin \left(\frac{ \ell }{2 R} \right).
\nonumber
\end{align}
{where, in the above inequality, we use the assumption
$\ell \leq 2 \pi R$ and the fact that  $\theta \mapsto  \cos \theta$ 
is decreasing on $[0,\pi]$.}

Because $|q-p| \geq \langle q-p , \dot{\gamma}(\frac{\ell}{2}) \rangle$, we see that 
\[ 
|q-p| \geq 2 \, R \sin \left(\frac{ \ell }{2R} \right).
\] 
\end{proof}

\begin{proof}[Proof of Lemma \ref{lemma:SemiContinuitySpecialCase}]
Consider the closed set $\Su' := \Su \setminus B^{\circ}(  \tilde{x} , \epsilon )$.
The minimum distance $d(x, \Su')$ to $x$ is attained at least at some point $z$ in $\Su'$
so that
\[
d(x, \Su') = d(x, z) > d(x, \tilde{x} ) = d(x, \Su).
\]
Taking
\[ \alpha := \frac{ d(x, \Su') - d(x, \Su)}{2},
\]
we get, for any $y$ such that $|y-x|< \alpha$ and $z \in \Su'$ that:
\begin{align*}
d(y, z)  &\geq d(y, \Su') \geq d(x, \Su')  - d(y,x) \\
& =  d(x, \Su) + 2 \alpha   - d(y,x)  =   d(x, \tilde{x} )  + 2 \alpha   - d(y,x) \\
&>  d(x, \tilde{x} ) + d(y,x) \geq  d(y, \tilde{x} ) \geq d(y, \Su).
\end{align*}
We have shown that if $z \in \Su'$, then $d(y, z) > d(y, \Su)$. It follows that 
any closest point to $y$ in $\Su$ is in $\Su \setminus \Su' = \Su \cap B^{\circ}(  \tilde{x} , \epsilon )$
which is \eqref{eq:SemiContinuitySpecialCase}.
\end{proof}


\section{The metric on Grassmannians }
\label{app:Grassmann} 

More precisely it is shown in \cite[Section 34]{akhiezer2013theory} that 
\begin{align} 
d_{G}( A, B) =  \max\{ \sup_ {u \in S_A} d(u,B),\sup_ {u \in S_B} d(u,A) \}  =  \|  P_A - P_B\|,
\nonumber
\end{align} 
where $S_A$ is the unit sphere restricted to $A$, $d(u,A)$ denotes the distance from $u$ to $A$ (as above), and $P_A$ denotes the linear (orthogonal) projection operator on $A$. We write $d_G$ to emphasize its use in the context of Grassmannians, but it is defined for arbitrary Hilbert spaces. 
With these definitions we have $d_{G}( A, B) = \sin (\angle A, B)$. Because $d_G$ is now written in terms of an operator norm the $d_G$ is a metric. 
The fact that the maximal angle is a metric can also be seen as follows. By taking the intersection of an $n$-dimensional linear subspace with the unit sphere in $\mathbb{R}^d$, one finds an $n-1$-dimensional sphere. The Hausdorff distance between two such sphere is precisely the angle between the two linear subspaces. Because we already know that the Hausdorff distance is a metric, the result follows. 

{The fact that it is a metric can also be proven directly. The fact that the angle is zero if the spaces coincide follows immediately from the definition, and the same holds for positively, while symmetry was proven in Lemma \ref{lemma:anglebetweenKVectorSpacesIsSymmetric}. The only thing that remains to verify is the triangle inequality.   }
By definition $\angle A,B$ and $\angle B,C$ are the smallest number such that  
\begin{align} 
\forall a \in A \quad \exists b \in B &&\textrm{such that} && \angle a,b &\leq \angle A,B 
\nonumber
\\
\forall b \in B \quad \exists c \in C &&\textrm{such that} && \angle b,c &\leq \angle B,C
\nonumber 
\end{align} 
so that by combining these 
\begin{align} 
\forall a \in A \quad \exists b \in B,  c \in C &&\textrm{such that} &&  \angle a,b +\angle b,c &\leq \angle A,B+ \angle B,C .
\nonumber
\end{align} 
by the triangle inequality between vectors $\angle a,c \leq \angle a,b +\angle b,c $ so that 
\begin{align} 
\forall a \in A \quad \exists c \in C &&\textrm{such that} &&  \angle a,c &\leq \angle A,B+ \angle B,C .
\nonumber
\end{align}
This gives $\angle A,C \leq  \angle A,B+ \angle B,C $, that is the triangle inequality, by definition.

\end{document}